\theoremstyle{plain}
\newtheorem{theorem}{Theorem}[section]
\newtheorem{lemma}[theorem]{Lemma}
\newtheorem{proposition}[theorem]{Proposition}
\theoremstyle{definition}
\newtheorem{remark}[theorem]{Remark}
\newtheorem{definition}[theorem]{Definition}
\definecolor{bleu_sombre}{rgb}{0,0,0.6}
\definecolor{Bl}{rgb}{0,0,0.6}
\definecolor{rouge_sombre}{rgb}{0.8,0,0}
\definecolor{vert_sombre}{rgb}{0,0.6,0}
\definecolor{webblue}{rgb}{0.22,0.45,0.70}
\definecolor{webred}{rgb}{0.5, 0.09, 0.09}
\definecolor{zzttqq}{rgb}{0.6,0.2,0.}
\renewcommand{\leq}{\leqslant}	
\renewcommand{\geq}{\geqslant}
\newcommand{\C}{\mathbb{C}}
\newcommand{\R}{\mathbb{R}}
\newcommand{\N}{\mathbb{N}}
\newcommand{\dd}{\mathrm{d}}
\newcommand{\ad}{{\,\dagger}}
\numberwithin{equation}{section}
\title[]{Magnetic Dirac systems: Violation of  bulk-edge correspondence in the zigzag limit}
\author[J.-M. Barbaroux]{J.-M. Barbaroux}
\address[J.-M. Barbaroux]{Aix Marseille Univ, Universit\'e de Toulon, CNRS, CPT, Marseille, France}
\email{barbarou@univ-tln.fr}
\author{H. D. Cornean}
\address[H. D. Cornean]{Department of Mathematical Sciences, Aalborg University, Skjernvej 4A, 9220 Aalborg, Denmark}
\email{cornean@math.aau.dk}
\author[L. Le Treust]{L. Le Treust}
\address[L. Le Treust]{Aix Marseille Univ, CNRS, Centrale Marseille, I2M, Marseille, France}
\email{loic.le-treust@univ-amu.fr}
\author[N. Raymond]{N. Raymond}
\address[N. Raymond]{Univ Angers, CNRS, LAREMA, Institut Universitaire de France, SFR MATHSTIC, F-49000 Angers, France}
\email{nicolas.raymond@univ-angers.fr }
\author[E. Stockmeyer]{E. Stockmeyer}
\address[E. Stockmeyer]{Instituto de F\'isica, Pontificia Universidad Cat\'olica de Chile, Vicu\~na Mackenna 4860, Santiago 7820436, Chile.}
\email{stock@fis.puc.cl}
\begin{document}
	\maketitle
	\begin{abstract}
		We consider  a Dirac operator with  constant magnetic field defined on a half-plane with boundary conditions that interpolate between infinite mass and zigzag.  By a detailed study of the energy dispersion curves we show that the infinite mass case generically captures the profile of these curves, which undergoes a continuous pointwise  deformation into the topologically different zigzag profile. Moreover, these results are applied to the bulk-edge correspondence. In particular, by means of a counterexample, we show that this correspondence does not always hold true in the zigzag case. 		
	\end{abstract}

	\tableofcontents
	
	\section{Introduction and main results}

	Bulk-edge correspondence is a notion  that arises in the study of certain condensed matter physics systems possessing  non-trivial topology. It establishes a connection between the bulk properties of a  material (its interior or bulk region) and its edge or boundary properties. This correspondence may be given through an equation that links  a Chern number, that depends only on the bulk operator,  and  an expression involving the edge-states localized close to the boundary; eventually yielding the so-called topological quantization of  edge currents \cite{KELLENDONK2004388}.  Due to the integer (or topological) nature of the Chern number these relations are very stable under  smooth changes of the material parameters and therefore its importance for  potential applications.
	In this context, Dirac Hamiltonians are prominent examples  that exhibit interesting phenomena. They are used to  model various  materials, among them,    graphene and topological insulators \cite{Mong-Phys}.

	In this article we investigate the bulk-edge correspondence for  a  two-dimensional Dirac Hamiltonian with a constant magnetic field defined on a half-plane. This model was recently considered in  
	\cite{Cornean_2023}, where bulk-edge correspondence was shown  to hold, provided infinite-mass boundary conditions along  the edge are imposed. 
	Our motivation is to investigate  the validity of these results  when  any fixed admissible  local boundary condition is allowed. To this end we define a  family of Dirac Hamiltonians 
	$\mathscr{D}_{\gamma}$, where $\gamma\in \R\cup\{+\infty\}$ characterizes  the boundary conditions, in particular,   $\gamma=\pm 1$ corresponds to infinite mass.

	Our  main result Theorem~\ref{prop.bec} indicates that bulk-edge correspondence for this model still holds, provided the boundary conditions are not zigzag (\emph{i.e.} $\gamma\notin \{0,+\infty\}$). Moreover, Theorem~\ref{prop.bec} shows that this correspondence is violated for certain energies   when zigzag boundary conditions are imposed. 
	In order to show Theorem~\ref{prop.bec} certain knowledge on the energy dispersion curves associated to 
	$\mathscr{D}_{\gamma}$ is helpful.  In this work we present a fairly detailed analysis of them  extending the results of \cite{barbaroux:hal-02889558} to any local boundary condition. We complement our analysis with numerical illustrations of the energy dispersion curves for different values of the boundary parameter and  the magnetic field.

	Let us now turn to define our  model.  We consider a magnetic Dirac system on the half-plane denoted by $\mathbb{R}^2_+=\{(x_1,x_2)\in\mathbb{R}^2 : x_2>0\}$ in the presence of an orthogonal magnetic field whose component in the $x_3$ direction is given by 
	$b\in \R\setminus\{0\}$.
	The corresponding Hamiltonian  acts on functions in $L^2(\R^2_+,\mathbb{C}^2)$ as
	\begin{align}\label{e1}
		\sigma\cdot(-i\nabla-\mathbf{A})=\begin{pmatrix}
			0&-i\partial_1-\partial_2+bx_2\\
			-i\partial_1+\partial_2+bx_2	&0
		\end{pmatrix}\,.
	\end{align}
	Here $\mathbf{A}$ refers to a vector potential associated with the magnetic field \emph{i.e.} $\rm{rot} \mathbf{A} = b{\rm e}_3$. We choose
	$$\mathbf{A}=(-bx_2,0)\,.$$
	We  recall that
	\[\sigma_1=\begin{pmatrix}
		0&1\\
		1&0
	\end{pmatrix}\,,\quad \sigma_2=\begin{pmatrix}
		0&-i\\
		i&0
	\end{pmatrix}\,,\quad \sigma_3=\begin{pmatrix}
		1&0\\
		0&-1
	\end{pmatrix} \,.\]
	The study of the magnetic Dirac operator  on the half-plane with infinite mass boundary conditions was recently  carried forward in  \cite{barbaroux:hal-02889558}.
	Here we consider   general local conditions  at the edge  interpolating between zigzag and infinite mass. 
	More precisely: Let  $\gamma\in\R\cup\{+ \infty\}$, then we consider the
	self-adjoint realization $	\mathscr{D}_{\gamma}\equiv \mathscr{D}_{\gamma}(b)$ acting as \eqref{e1}
	on a subset of functions  $\psi=(\psi_1, \psi_2)\in L^2(\R^2_+, \C^2)$ satisfying, for all $x_1\in\R$, 
	\begin{equation}\label{bound.cond}\begin{cases}
			\psi_2(x_1,0)=\gamma \psi_1(x_1,0)&\text{ if }\gamma\in\R\,,
			\\
			\psi_1(x_1,0) = 0&\text{ if }\gamma = +\infty\,.
	\end{cases}\end{equation} 
	The two cases $\gamma\in\{0,+\infty\}$ are called zigzag, while  $\gamma=\pm1$ corresponds to infinite mass boundary conditions 
	(see Remark~\ref{rem-bc} bellow).
	\begin{remark}
		The domains of self-adjointness of the operators $\mathscr{D}_\gamma$  are already known: See
		\cite[Section 1C]{barbaroux:hal-02889558}  for the zigzag cases and \cite[Theorem 1.15]{barbaroux:hal-02889558} for  
		the infinite mass; the latter result can  be easily adapted for the non-zigzag cases.
		For the essential self-adjointness of $\mathscr{D}_\gamma$ on the class of Schwartz functions with infinite mass boundary conditions  
		see  \cite[Proposition 1.1]{Cornean_2023}, and for the  general case see  \cite{Paper2}.
	\end{remark}
	We denote by $X_1$ the operator of multiplication with $x_1$, and  by $J_1$ the current density operator, we have 
	$$J_1=-i[\mathscr{D}_\gamma, X_1]=-\sigma_1\,.$$
	Recall that the Landau Hamiltonian $\mathscr{D}_{bulk}(b)$ acts 
	on the whole plane as in \eqref{e1}  and its spectrum consists of the Landau levels given by 
	$\{\pm \sqrt{2n|b|},\,n\in \N_0\}
	$. 
 
			We say that $F:\R\to\R$ is equal to $a\in \R$ near $x_0\in\R$ if there exists an open interval $I$ around $x_0$ where $F(x)=a, x\in I,$ holds. 
   
	Our main result is the following.
	
	\begin{theorem}\label{prop.bec}
		Let $b>0$ and let $\chi=\mathds{1}_{(0,1)}$ be the indicator  function of the interval $(0,1)$. Let  $\gamma\in\R\cup\{+\infty\}$. 
		 Let $F\in C_0^2(\R)$ be such  that it equals $1$ near $n\geq 1$ Landau levels, and $0$ near the others.
		Then,  the operator $\chi(X_1)J_1 F'(\mathscr{D}_{\gamma})$ is trace class and the edge Hall conductance is given by
		\begin{equation}\label{hc1}
			\begin{split} 2\pi\mathrm{Tr}\big (\chi(X_1)J_1 F'(\mathscr{D}_{\gamma})\big )&=\begin{cases}
					n-1&\text{ if }\gamma = 0 \text{ and } F \text{ equals }1 \text{ near } 0\,, 
					\\
					n+1&\text{ if }\gamma = +\infty \text{ and } F \text{ equals }1 \text{ near } 0\,,
					\\
					n& \text{otherwise}\, .
				\end{cases} 
			\end{split}
		\end{equation}
	\end{theorem}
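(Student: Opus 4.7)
Since $\mathscr{D}_\gamma$ is translation invariant in $x_1$, I would first Fourier-transform in $x_1$ to obtain the direct-integral decomposition $\mathscr{D}_\gamma \cong \int_\R^\oplus \mathscr{D}_\gamma(k)\,dk$ with fiber
\[
\mathscr{D}_\gamma(k)=\begin{pmatrix}0 & k-\partial_2+bx_2\\ k+\partial_2+bx_2 & 0\end{pmatrix}
\]
acting on $L^2(\R_+,\C^2)$ with the boundary condition \eqref{bound.cond}. The current $J_1=-\sigma_1$ is pointwise and hence also fibered. For any translation-invariant $A=\int^\oplus A(k)\,dk$ with trace-class fibers, the standard kernel computation gives $2\pi\,\mathrm{Tr}(\chi(X_1)A)=\bigl(\int_\R\chi(x_1)\,dx_1\bigr)\int_\R\mathrm{tr}_{\mathrm{fib}}(A(k))\,dk$, which combined with $\int\chi=1$ reduces the target quantity to
\[
2\pi\,\mathrm{Tr}\!\left(\chi(X_1)J_1F'(\mathscr{D}_\gamma)\right)=-\int_\R\mathrm{tr}\!\left(\sigma_1F'(\mathscr{D}_\gamma(k))\right)dk.
\]
The trace-class property follows from fiberwise resolvent bounds and the compact support of $F$: only finitely many eigenvalues $E_j(k)$ hit $\mathrm{supp}\,F$ at each $k$, with a bound uniform in $k$.

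\textbf{Telescoping via Hellmann--Feynman.} A direct differentiation yields $\partial_k\mathscr{D}_\gamma(k)=\sigma_1$, so Hellmann--Feynman gives $E_j'(k)=\langle\psi_j(k),\sigma_1\psi_j(k)\rangle$ for the eigenpairs $(E_j(k),\psi_j(k))$ of the fiber. The spectral decomposition then turns the fiber trace into a total derivative:
\[
\mathrm{tr}\!\left(\sigma_1F'(\mathscr{D}_\gamma(k))\right)=\sum_j F'(E_j(k))E_j'(k)=\frac{d}{dk}\sum_j F(E_j(k)).
\]
Integrating in $k$ and using the compact support of $F$ to exchange sum and integral produces the telescoping identity
\[
2\pi\,\mathrm{Tr}\!\left(\chi(X_1)J_1F'(\mathscr{D}_\gamma)\right)=\sum_j\bigl[F(E_j(-\infty))-F(E_j(+\infty))\bigr],
\]
reducing the proof to identifying the asymptotic values of the band functions and a $0/1$ decision per band.

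\textbf{Counting via the dispersion analysis.} This is where the detailed description of the curves $E_j(\cdot)$ done earlier in the paper (extending \cite{barbaroux:hal-02889558} to all $\gamma\in\R\cup\{+\infty\}$) enters. For $b>0$, as $k\to-\infty$ the harmonic well is centered deep inside the half-plane so each band converges to a Landau level $\pm\sqrt{2\ell b}$, while as $k\to+\infty$ the well is pushed outside and the bands escape to $\pm\infty$, making $F(E_j(+\infty))=0$ by compact support of $F$. Generically ($\gamma\notin\{0,+\infty\}$) exactly one dispersive band lands on each Landau level at $-\infty$, so the telescoping counts $n$ units, one per level where $F\equiv 1$. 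The zero Landau level is exceptional in the zigzag limits: for $\gamma=0$ the zero-mode ansatz $\psi_1\propto e^{-kx_2-bx_2^2/2}$, $\psi_2\equiv 0$ automatically satisfies $\psi_2(\cdot,0)=0$ and produces a flat zero-energy band that replaces the dispersive one, contributing $F(0)-F(0)=0$ and thus losing one unit when $F\equiv 1$ near $0$; for $\gamma=+\infty$ the same zero mode is excluded by $\psi_1(\cdot,0)=0$, and the dispersion analysis instead exhibits an extra dispersive branch accumulating at $0$ as $k\to-\infty$, producing a surplus of one unit. These three scenarios reproduce the three cases of \eqref{hc1}.

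\textbf{Expected main obstacle.} The algebraic backbone---fibration, Hellmann--Feynman, telescoping---is essentially routine once the fiber spectrum is understood; the real content is the quantitative control of the band functions for every admissible $\gamma$, and specifically the careful accounting of the dispersive/flat band structure at energy $0$ in the zigzag limits, which is what produces the discontinuous jump from $n$ to $n\pm 1$. A secondary technical point is the rigorous trace-class claim and the exchange of summation and integration in the telescoping, which both rely on uniform-in-$k$ bounds on $\#\{j:E_j(k)\in\mathrm{supp}\,F\}$ together with an effective rate at which each band exits that window; these should follow from min-max and monotonicity estimates already provided by the preceding dispersion analysis.
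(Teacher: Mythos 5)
Your global architecture matches the paper's exactly: reduce via the direct-integral decomposition, apply Hellmann--Feynman to get $\lambda_j'(\xi)=\langle\Psi_{\xi,j},\sigma_1\Psi_{\xi,j}\rangle$, telescope $F'(\lambda_j)\lambda_j'$ into a total derivative, integrate, and read off the jump from the asymptotics of the dispersion curves established earlier in the paper. Your counting of the asymptotic values is also correct: in the generic case one band lands on each Landau level at $\xi\to-\infty$, for $\gamma=0$ the zero level is replaced by a flat band contributing nothing, and for $\gamma=+\infty$ both $\vartheta^+_1$ and $-\vartheta^-_1$ converge to $0$, producing the surplus.

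The genuine gap is the step you call a ``standard kernel computation,'' namely
\[
2\pi\,\mathrm{Tr}\big(\chi(X_1)A\big)=\Big(\int_\R\chi\,\dd x_1\Big)\int_\R\mathrm{tr}_{\mathrm{fib}}\big(A(\xi)\big)\,\dd\xi\,.
\]
The formula is ultimately correct, but it does not follow from a direct spectral/kernel decomposition: expanding $F'(\mathscr{D}_\gamma)$ in the fiber eigenbasis would write $\chi(X_1)F'(\mathscr{D}_\gamma)$ as a Bochner integral of ``rank-one'' operators $\big|\chi\,e^{iX_1\xi}\Psi_{\xi,j}\big\rangle\big\langle e^{iX_1\xi}\Psi_{\xi,j}\big|$, and the second factor $(x_1,x_2)\mapsto e^{ix_1\xi}\Psi_{\xi,j}(x_2)$ is \emph{not} in $L^2(\R^2_+)$ — it is a plane wave in $x_1$. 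This is exactly the obstruction the paper flags, and it is why the paper cannot directly invoke its Lemma~\ref{lemma.t-trace}. The paper's fix is to insert $(1+iX_1)$ on the left and $(1+iX_1)^{-1}$ on the right, write $\chi(X_1)\sigma_1 f(\mathscr{D}_\gamma)=\sigma_1\chi(1+iX_1)f(\mathscr{D}_\gamma)(1+iX_1)^{-1}+\sigma_1\chi(1+iX_1)\big[(1+iX_1)^{-1},f(\mathscr{D}_\gamma)\big]$, observe that now both legs of each rank-one integrand lie in $L^2(\R^2_+)$ ($\chi$ cuts off the left leg, $(1-iX_1)^{-1}$ makes the right leg decay), apply Lemma~\ref{lemma.t-trace} term by term, and then verify that the commutator term integrates to zero because its integrand is a total $\xi$-derivative with compact support. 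Your proposal treats this as a routine afterthought — ``the trace-class property follows from fiberwise resolvent bounds and the compact support of $F$'' — which controls only the \emph{number} of contributing bands, not the failure of square-integrability of the $\xi$-fiber plane waves that makes the naive Bochner decomposition illegal. To complete the proof along your lines you need either the paper's $(1+iX_1)$-regularization or some equally explicit device (e.g.\ a factorization $F'=g\bar g$ and a Hilbert--Schmidt argument for $\chi(X_1)g(\mathscr{D}_\gamma)$) that produces the trace formula rigorously; without it, the central identity is asserted rather than proved.
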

	\noindent
	{\bf Comments:}
	\begin{enumerate}
		\item 	 The fact that the  left hand side of this formula can be interpreted as an edge conductance is explained for instance in  \cite{EG02}. 
		\item Let us make the connection with the bulk-edge correspondence. Let $F$ and $n$ be as in Theorem  \ref{prop.bec} and define the orthogonal projection $P_n=F(\mathscr{D}_{bulk}(b))$. In our case, $P_n$ contains exactly $n$ bulk Landau levels and one can show  that its Chern number equals $n$, which encapsulates the non-trivial topology of the bulk projection (see \emph{e.g.} \cite{Cornean_2023}). 
			On the other hand,  since
			$F'$ equals $0$ near the Landau levels, $F'(\mathscr{D}_{\gamma})$ only selects edge-states. 
		
  \item If $b<0$, the Chern number of $P_n$ becomes $-n$, and the right hand side of formula \eqref{hc1} must be multiplied by $-1$.  
		\item 	Our third alternative in \eqref{hc1} indicates that the bulk-edge correspondence should hold for all non-zigzag conditions, a result which in our case is confirmed by brute force, i.e. by direct computation and comparison with the bulk Chern number.  The general proof of this fact, under more general conditions than purely constant magnetic field, will be considered in \cite{Paper2}. At least for $\gamma=1$, this has been shown to be the case \cite{Cornean_2023}; for Schr\"odinger-like operators see 
		\cite{cornean2022general}.
		\item If we work with zigzag boundary conditions, and if the zero-energy bulk Landau level belongs to the projection $P_n$, then one of the first two alternatives in \eqref{hc1} occurs. Thus the bulk-edge correspondence does not hold in this case.  Such an anomaly has been previously observed in other continuous  models such as shallow-water waves \cite{graf2021topology, PhysRevResearch.2.013147}, and for "regularized" non-magnetic Dirac-like operators \cite{PhysRevResearch.2.013147}. The latter  are in fact second order differential operators, with boundary conditions that are incompatible with first-order self-adjoint differential operators.
	\end{enumerate}
	
	\begin{remark}[On the boundary condition]\label{rem-bc}
		General local boundary conditions for Dirac operators are usually written as  
		$(-i\sigma_3(\sigma\cdot\mathbf{n})\cos\eta+\sigma_3\sin\eta)\psi=\psi,$ on $\partial \R^2_+\,,$
		where $\eta\in\left[-\frac{\pi}{2},\frac{3\pi}{2}\right)$
		(see \emph{e.g.} \cite{Berry1987} and \cite{sa2017,barbaroux:hal-02889558}).
		In the present situation, we have $\mathbf{n}=-e_2$ and thus
		\[(\sigma_1\cos\eta+\sigma_3\sin\eta)\psi=\psi\,,\quad \text{ on }\partial \R^2_+\,.\]
		We obtain \eqref{bound.cond} by setting 
		$\gamma=\frac{\cos\eta}{1+\sin\eta}=\tan\left(\frac\pi4-\frac\eta2\right)$
		with the convention $\gamma=+\infty$ in the case $\eta=-\pi/2$
		\emph{i.e.} $\psi_1 =0$ on $\partial \R^2_+$.
	\end{remark}

	\subsection{The energy dispersion curves}\label{edc}
	Our proof of Theorem~\ref{prop.bec} requires certain knowledge on the 
		energy dispersion curves and their corresponding eigenfunctions. In what  
		follows   we present   a description of these  curves  for different values of the boundary parameter $\gamma$. The main technical issue here is the lack of semi-boundedness  of $\mathscr{D}_\gamma$. This  can however be treated using appropriate variational methods \cite{GS99,DES00,schimmer2020friedrichs}; we follow the approach proposed in \cite{barbaroux:hal-02889558}. Before presenting our main results in this context we  establish the basic framework.
	\subsubsection{Setting}
	In view of the translation invariance in the $x_1$-direction, we may use the partial Fourier transform to represent  $\mathscr{D}_\gamma(b)$ as a family of fiber operators  $\mathscr{D}_{\gamma,\xi}(b)\equiv \mathscr{D}_{\gamma,\xi}$, with $\xi\in \R$.  Indeed,  we have (see \emph{e.g.} \cite{barbaroux:hal-02889558}) 
	\[
	\mathscr{D}_{\gamma} = \int_\R^\oplus\mathscr{D}_{\gamma,\xi} \,{\rm d}\xi\,,
	\]
	where the $1d$ magnetic Dirac operator $\mathscr{D}_{\gamma,\xi}$ acts as
	\[\begin{pmatrix}
		0&d_{\xi}(b)\\
		d^\ad_{\xi}(b)&0
	\end{pmatrix}\equiv \begin{pmatrix}
		0&d_{\xi}\\
		d^\ad_{\xi}&0
	\end{pmatrix}\,,\] 
	with $d_{\xi}=\xi-\partial_2+bx_2$ and  $d^\ad_{\xi}=\xi+\partial_2+bx_2$. Its domain is given for $\gamma\notin \{0,+\infty\}$  as 
	\begin{align}\label{eq.bck}
		\mathrm{Dom}(\mathscr{D}_{\gamma,\xi})=\{\psi\in H^1(\R_+,\C^2) : x_2\psi\in L^2(\mathbb{R}_+)\,\,\, \mbox{and}\,\,\, \psi_2(0)=\gamma\psi_1(0)\}\,.
	\end{align}
	As for the zigzag cases, by denoting  
	$$
	B^1(\R_+)=\{\psi\in H^1(\R_+,\C^2) : x_2\psi\in L^2(\mathbb{R}_+)\}
	$$ we have
	\begin{align*}
		&\mathrm{Dom}(\mathscr{D}_{0,\xi})
		=\{u\in L^2({\R_+}) : d^\ad_\xi u\in L^2(\R_+) \}\times H_0^1(\R_+)\cap B^1(\R_+)\,,\\
		&\mathrm{Dom}(\mathscr{D}_{\infty,\xi})
		= H_0^1(\R_+)\cap B^1(\R_+)\times
		\{u\in L^2({\R_+}) : d_\xi u\in L^2(\R_+) \}\,.
	\end{align*}

	We have that (\emph{cf.} \cite{barbaroux:hal-02889558}), for  any $\gamma\in\R\cup\{+\infty\}$   and  $\xi\in\R$, the operator  $\mathscr{D}_{\gamma,\xi}$ is  self-adjoint and has compact resolvent 
	(for $\gamma \not=\{0,+\infty\}$ it follows directly from the compact embedding of $H^1$ in $L^2$).
	We write the spectrum of $\mathscr{D}_{\gamma,\xi}$ as the set 
	$\{-\vartheta_j^-(\gamma,\xi) : j\in \N\}\cup \{\vartheta_j^+(\gamma,\xi) : j\in \N\}$ such that
	\begin{align}\label{eq.not}
		\ldots \le-\vartheta_2^-(\gamma,\xi)\le -\vartheta_1^-(\gamma,\xi)<0\leq \vartheta_1^+(\gamma,\xi)\le \vartheta_2^+(\gamma,\xi)\le 
		\ldots
	\end{align}
	For a given  boundary condition $\gamma\in \R\cup \{+\infty\}$ the map $\R\ni \xi\mapsto \vartheta_n^\pm(\gamma,\xi)$ defines the energy dispersion relation.
	
	The  following two propositions are shown in Section~\ref{sec.prel}.
	\begin{proposition}\label{prop.elementary}
		Let  $\gamma\in\R\cup\{+\infty\}$ , $\xi\in\R$, $b\not=0$.
		We have
		\begin{enumerate}[\rm (i)]
			\item For all $n\geq 1$, the eigenvalues $\vartheta^\pm_n(\gamma,\xi)$ are simple.
			\item
			For all $n\geq 1$, $\xi\mapsto \vartheta_n^\pm(\gamma,\xi)$ and  $\gamma \mapsto \vartheta_n^\pm(\gamma,\xi)$ are real-analytic.	
			\item 
			$0$ belongs to the spectrum of $\mathscr{D}_{\gamma,\xi}$ iff $\gamma= 0$, in case $b>0$, or $\gamma= +\infty$, in case $b<0$ .
		\end{enumerate}
	\end{proposition}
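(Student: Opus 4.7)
The plan is to reduce the fiber eigenvalue problem on $\R_+$ to a scalar second-order ODE and to exploit the classical theory of its $L^2$-at-infinity solutions, augmented by a Wronskian-type identity to obtain analyticity. Parts~(i)--(iii) all flow from this single setup.

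For part~(i), if $E\neq 0$ is an eigenvalue with eigenfunction $(\psi_1,\psi_2)$, the two components of $\mathscr{D}_{\gamma,\xi}\psi=E\psi$, namely $d_\xi\psi_2=E\psi_1$ and $d^\ad_\xi\psi_1=E\psi_2$, determine $\psi_2$ from $\psi_1$ and reduce the problem to
\[
d_\xi d^\ad_\xi\psi_1=\bigl(-\partial_2^2+(\xi+bx_2)^2-b\bigr)\psi_1=E^2\psi_1,
\]
a shifted harmonic-oscillator equation on $\R_+$. The $L^2$ condition at $+\infty$ singles out a one-dimensional subspace of solutions, and the local boundary condition at $0$ is at most one further linear constraint, so the eigenspace has dimension at most one. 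For part~(iii), the equation $\mathscr{D}_{\gamma,\xi}\psi=0$ decouples into $d_\xi\psi_2=0$ and $d^\ad_\xi\psi_1=0$, solved explicitly by $\psi_2\propto e^{\xi x_2+bx_2^2/2}$ and $\psi_1\propto e^{-\xi x_2-bx_2^2/2}$. Exactly one of these is square-integrable on $\R_+$, determined by $\sign(b)$, and the boundary condition then singles out $\gamma=0$ (when $b>0$) or $\gamma=+\infty$ (when $b<0$); this also shows simplicity at $E=0$ and completes~(i) in that residual case.

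For part~(ii), analyticity in $\xi$ is immediate from Kato's theory: $\mathscr{D}_{\gamma,\xi}=\mathscr{D}_{\gamma,0}+\xi\sigma_1$ is an analytic family of type~(A) with $\xi$-independent domain, and the simplicity established in~(i) promotes each branch to a real-analytic function of $\xi$. Analyticity in $\gamma$ is more delicate because the operator domain varies with $\gamma$, so Kato's framework does not apply directly. To bypass this I would use an ODE viewpoint: for $(E,\xi)\in\C\times\R$ let $\Psi(\cdot;E,\xi)$ denote the (unique up to scaling) solution of $(\mathscr{D}_\xi-E)\psi=0$ that is square-integrable at $+\infty$; by the standard theory of holomorphic families of ODEs it depends holomorphically on $(E,\xi)$, and $E$ is an eigenvalue of $\mathscr{D}_{\gamma,\xi}$ iff
\[
\mathcal{F}(E,\gamma,\xi):=\Psi_2(0;E,\xi)-\gamma\Psi_1(0;E,\xi)=0
\]
(with the replacement $\Psi_1(0;E,\xi)=0$ for $\gamma=+\infty$), which is jointly holomorphic in its arguments. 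A short Wronskian identity combined with self-adjointness and~(i) yields $\partial_E\mathcal{F}\neq 0$ at each eigenvalue, and the holomorphic implicit function theorem gives the real-analytic dependence of $\vartheta_n^\pm(\gamma,\xi)$ on both variables.

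The main technical obstacle is precisely the analyticity in $\gamma$ when the operator domain varies with the parameter; the Wronskian/ODE route above handles it cleanly, while the remainder is standard ODE analysis.
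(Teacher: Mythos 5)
Your proof is correct, and for parts (i) and (iii) it is essentially the paper's own argument: reduce to the shifted harmonic-oscillator equation $d_\xi d^\ad_\xi\psi_1=\lambda^2\psi_1$ (resp.\ $d^\ad_\xi d_\xi\psi_2=\lambda^2\psi_2$), observe that square-integrability at $+\infty$ singles out a one-dimensional space of solutions so that $\dim\ker(\mathscr{D}_{\gamma,\xi}-\lambda)\leq 1$, and treat the zero modes by solving $d_\xi\psi_2=0$, $d^\ad_\xi\psi_1=0$ explicitly and checking which Gaussian is $L^2(\R_+)$ depending on $\sign b$. For part (ii) your treatment of the $\xi$-analyticity via Kato's type~(A) theory and simplicity is also exactly what the paper does.

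Where you genuinely diverge is on $\gamma$-analyticity. The paper's proof is terse there: it only invokes that $(\mathscr{D}_{\gamma,\xi})_{\xi\in\R}$ is analytic of type~(A) in $\xi$, and never explains why $\gamma\mapsto\vartheta_n^\pm(\gamma,\xi)$ is analytic despite the $\gamma$-dependent operator domain. You correctly flag this as the delicate point and resolve it with an Evans-function/Wronskian argument: pick the holomorphic $L^2$-at-$+\infty$ branch $\Psi(\cdot;E,\xi)$, encode the boundary condition into the jointly holomorphic $\mathcal{F}(E,\gamma,\xi)$, and apply the holomorphic implicit function theorem after showing that each eigenvalue is a simple zero of $\mathcal{F}$. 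That last step deserves a sentence more than you give it: the standard way to see $\partial_E\mathcal{F}\neq 0$ is that the multiplicity of a zero of the Evans function equals the algebraic multiplicity of the eigenvalue, which for a self-adjoint operator coincides with the geometric multiplicity established in (i); alternatively one can derive $\partial_E\mathcal{F}\neq 0$ directly from a Wronskian/Feynman--Hellmann computation showing $\partial_E\mathcal{F}$ is proportional to $\|\psi\|^2$. An equivalent and perhaps cleaner route, closer in spirit to the paper's later Theorem~\ref{prop.nupmlambda2}, is to conjugate by a $\gamma$-dependent boundary unitary (a rotation in $\C^2$ at $x_2=0$) to transfer $\mathscr{D}_{\gamma,\xi}$ onto a fixed domain and then invoke type~(A) analyticity in $\gamma$; but your ODE route is perfectly valid and has the advantage of treating $\gamma=\pm\infty$ on the same footing via the local chart $\gamma\mapsto\gamma^{-1}$.

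One very minor remark: your phrasing ``the local boundary condition at $0$ is at most one further linear constraint'' is slightly redundant, since the $L^2$ condition already leaves at most a one-dimensional space; the boundary condition only decides whether that line survives.
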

	In order to study the zigzag case  we introduce further 
	the $1d$ fibers of a magnetic Dirichlet Pauli operator $\mathscr{H}_{\xi}(b)$ for $b>0$. It acts as $  -\partial_2^2 + (\xi + bx_2)^2 + b\,,$
	with
	\[\mathrm{Dom}(\mathscr{H}_{\xi}(b))=\{\psi\in H^2(\mathbb{R}_+;\mathbb{C}) : x_2^2\psi\in L^2(\mathbb{R}_+)\,, \psi(0)=0\}\,.\]	
	It is well-known \cite{edge-classic} that $\mathscr{H}_{\xi}(b)$ is self-adjoint  with compact resolvent and that its spectrum consists on simple eigenvalues $(\nu_{n}^{\mathrm{Dir}}(b,\xi))_{n\in \N}$ with $$
	2b<\nu_{1}^{\mathrm{Dir}}(b,\xi)<\nu_{2}^{\mathrm{Dir}}(b,\xi)<\ldots\,.
	$$
	
	The following statements are 
	well-known (see \cite{S95} and \cite{barbaroux2021semiclassical,barbaroux:hal-02889558}).
	We specialize in the case $b>0$ since otherwise one can use the charge conjugation symmetry described in Remark~\ref{rem.symm} below. 
	\begin{proposition}\label{prop.zz}
		Consider the case  $\gamma\in\{0,+\infty\}$. Then,  for all $\xi\in\R$ and $b\not=0$, the spectrum of $\mathscr{D}_{\gamma,\xi}$ is symmetric with respect to $0$. Moreover, for $b>0$, we have
		\[\begin{split}
			&\vartheta_n^+(+\infty,\xi)=\sqrt{\nu_{n}^{\mathrm{Dir}}(b,\xi)-2b}\quad(n\geq 1)\,,
		\end{split}\]
		\[\begin{split}
			&\vartheta_1^{+}(0,\xi)=0,\,\,\,\mbox{and}\,\,\,\,
			\vartheta_n^+(0,\xi)=\sqrt{\nu_{n-1}^{\mathrm{Dir}}(b,\xi)}\quad(n\geq 2)\,.
		\end{split}\]
	\end{proposition}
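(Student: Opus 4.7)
\medskip
\noindent\textbf{Proof plan for Proposition \ref{prop.zz}.}

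My plan has three ingredients: a symmetry argument, a supersymmetric factorization of $\mathscr{D}_{\gamma,\xi}^2$, and a direct computation of the kernel.

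For the spectral symmetry, I would simply conjugate by the Pauli matrix $\sigma_3$. For $\gamma \in \{0,+\infty\}$ the boundary condition involves either $\psi_1(0)=0$ or $\psi_2(0)=0$ alone, and the map $(\psi_1,\psi_2)\mapsto(\psi_1,-\psi_2)$ preserves both conditions; hence $\sigma_3$ is a unitary self-map of $\mathrm{Dom}(\mathscr{D}_{\gamma,\xi})$. Combined with the algebraic identity $\sigma_3\sigma_j\sigma_3=-\sigma_j$ for $j=1,2$, this gives $\sigma_3\mathscr{D}_{\gamma,\xi}\sigma_3=-\mathscr{D}_{\gamma,\xi}$, so the spectrum is symmetric about $0$. (For general $\gamma$ the map would instead relate $\mathscr{D}_{\gamma,\xi}$ to $\mathscr{D}_{-\gamma,\xi}$, which is why symmetry is a zigzag feature.)

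For the nonzero eigenvalues in the case $b>0$, I would exploit the block structure $\mathscr{D}_{\gamma,\xi}^2=\mathrm{diag}(d_{\xi}d_{\xi}^{\ad},d_{\xi}^{\ad}d_{\xi})$ together with the computations
\[
d_\xi^\ad d_\xi = -\partial_2^2+(\xi+bx_2)^2+b,\qquad d_\xi d_\xi^\ad = -\partial_2^2+(\xi+bx_2)^2-b,
\]
so as differential expressions $d_\xi^\ad d_\xi$ coincides with $\mathscr{H}_\xi(b)$ and $d_\xi d_\xi^\ad$ with $\mathscr{H}_\xi(b)-2b$. If $(u,v)\in\mathrm{Dom}(\mathscr{D}_{0,\xi})$ is an eigenvector with eigenvalue $E\neq 0$, then the Dirichlet condition $v(0)=0$ and $d_\xi^\ad d_\xi v=E^2 v$ exhibit $v$ as a Dirichlet eigenfunction of $\mathscr{H}_\xi(b)$ with eigenvalue $E^2$. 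Conversely, given a Dirichlet eigenfunction $v$ with eigenvalue $\nu_n^{\mathrm{Dir}}$, I recover an eigenvector of $\mathscr{D}_{0,\xi}$ by setting $u:=d_\xi v/\sqrt{\nu_n^{\mathrm{Dir}}}$; the identity $d_\xi^\ad u = \sqrt{\nu_n^{\mathrm{Dir}}}\, v\in L^2$ verifies membership in the correct zigzag domain. The analogous bijection in the case $\gamma=+\infty$ uses the first component $u$ with $u(0)=0$ and the factorization $d_\xi d_\xi^\ad$, which shifts the scalar eigenvalues by $2b$, producing $\vartheta_n^+(+\infty,\xi)=\sqrt{\nu_n^{\mathrm{Dir}}(b,\xi)-2b}$ (well-defined since $\nu_n^{\mathrm{Dir}}>2b$).

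Finally, the zero mode is handled by solving $\mathscr{D}_{\gamma,\xi}\psi=0$ explicitly: $d_\xi^\ad \psi_1=0$ yields $\psi_1(x_2)=C_1 e^{-\xi x_2 - b x_2^2/2}$ (in $L^2(\R_+)$ since $b>0$), while $d_\xi\psi_2=0$ yields only the exponentially growing solution $C_2 e^{\xi x_2+bx_2^2/2}$, forcing $C_2=0$. The zigzag condition $\psi_2(0)=0$ is then automatically satisfied and $\psi_1$ survives, so $\ker\mathscr{D}_{0,\xi}$ is one-dimensional; conversely, for $\gamma=+\infty$ the boundary condition $\psi_1(0)=0$ kills $C_1$ as well, so $\ker\mathscr{D}_{\infty,\xi}=\{0\}$. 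Combining with the ordering convention \eqref{eq.not} and the bijection above gives the stated enumeration $\vartheta_1^+(0,\xi)=0$, $\vartheta_n^+(0,\xi)=\sqrt{\nu_{n-1}^{\mathrm{Dir}}(b,\xi)}$ for $n\geq 2$. The main obstacle I expect is the domain bookkeeping in the bijection, because for the zigzag operators one component lives in $H^1\cap B^1$ with Dirichlet trace while the other is only assumed in $L^2$ with an $L^2$ condition on its image under $d_\xi^{(\ad)}$; elliptic regularity (or more elementary ODE arguments using the first-order eigenvalue equations) is needed to promote the Dirichlet component to the $H^2$-based domain of $\mathscr{H}_\xi(b)$.
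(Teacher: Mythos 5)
Your proposal is correct and follows essentially the same route as the paper: the spectral symmetry via $\sigma_3$-conjugation (the paper invokes the relation $\sigma_3\,\mathscr{D}_{\gamma,\xi}(b)=-\mathscr{D}_{-\gamma,\xi}(b)\,\sigma_3$, which for $\gamma\in\{0,+\infty\}$ specializes to exactly your observation), the supersymmetric identification of $d_\xi^\ad d_\xi$ and $d_\xi d_\xi^\ad$ with Dirichlet-realized scalar operators for the nonzero eigenvalues, and the explicit solution of the first-order ODE for the zero mode. The only point you flag as a possible obstacle -- domain bookkeeping/regularity -- is handled in the paper the same way you suggest: by elementary ODE arguments on the first-order eigenvalue system, which already give smoothness of both components.
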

	\begin{figure}[ht!]
		\centering
		\begin{subfigure}[t]{0.47\textwidth}
			\centering
			\includegraphics[width=\textwidth]{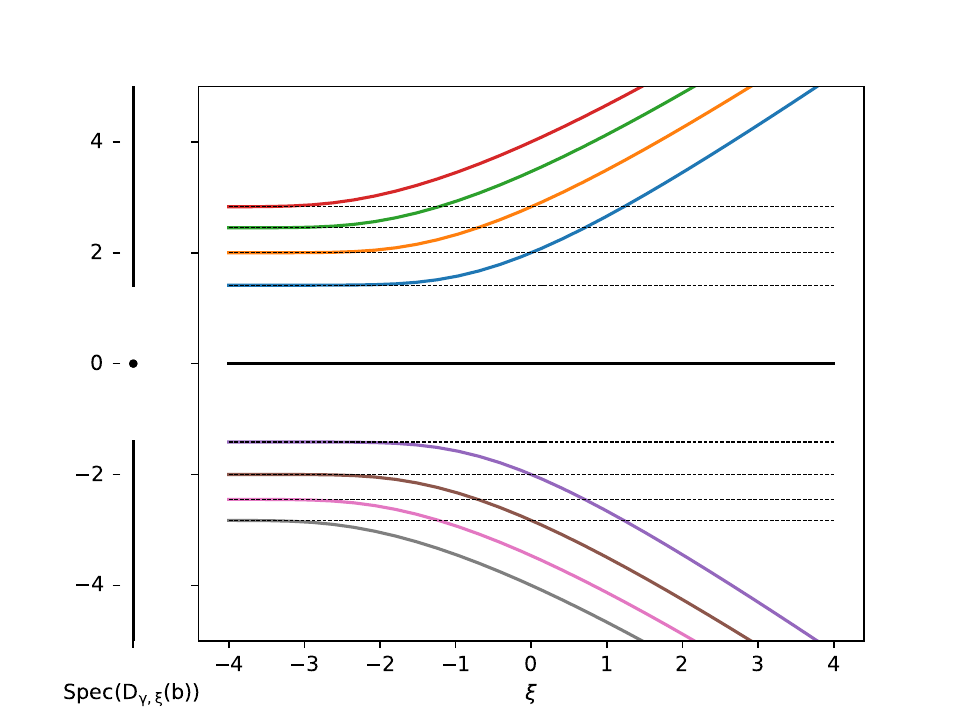}
			\caption{Case  $\gamma = 0$: Here $0$ is an eigenvalue of $\mathscr{D}_\gamma$ }
			\label{fig0}
		\end{subfigure}
		\hfill
		\begin{subfigure}[t]{0.47\textwidth}
			\centering
			\includegraphics[width=\textwidth]{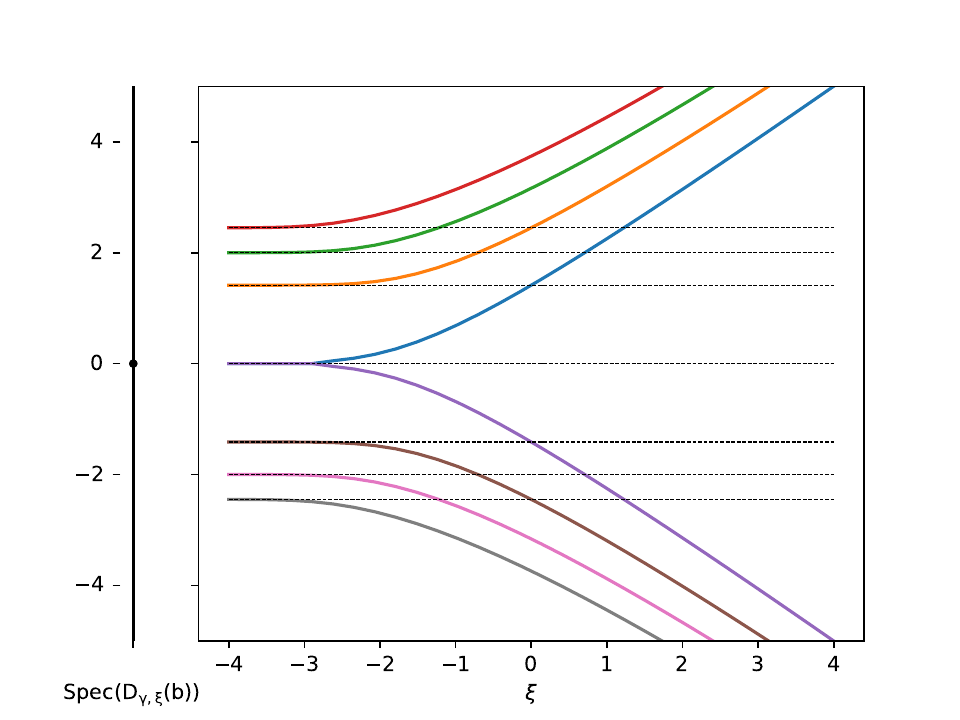}
			\caption{Case $\gamma = \infty$: Here $0$ is not an eigenvalue. Dashed lines correspond to the Landau levels of the bulk operator.}
			\label{figm1}
		\end{subfigure}
		\caption{Dispersion curves for the zigzag boundary conditions}
	\end{figure}
	\begin{remark}\label{rmk.limits}
		Recall \cite{edge-classic}  that,  for $n\geq 1$, the function $\nu_{n}^{\mathrm{Dir}}(b,\cdot)$ is increasing  and 
		\begin{equation*}
			\lim_{\xi\to-\infty}\nu_{n}^{\mathrm{Dir}}(b,\xi)=2nb\,,\quad \lim_{\xi\to+\infty}\nu_{n}^{\mathrm{Dir}}(b,\xi)=+\infty\,.
		\end{equation*}
	\end{remark} 
	Figures \ref{fig0} and \ref{figm1} give the dispersion curves of the zigzag Dirac operators and on the left of each figure, their spectrum. All illustrations presented in this article are obtained thanks to standard finite difference schemes, inverse power and Newton-like methods.
	
	\subsubsection{Main results for the energy dispersion curves}
	In view of the symmetry it is enough to consider the cases of positive magnetic field and non-negative boundary parameter {\emph i.e.} the case  $(b,\gamma)\in(0,+\infty)\times[0,+\infty]$ (see Remark~\ref{rem.symm}). 

	The following theorem gives a description of the dispersion curves when $\gamma\in(0,+\infty)$ and generalizes  the result obtained in \cite{barbaroux:hal-02889558} for $\gamma=1$.
	\begin{theorem}\label{thm.main}
		Let $\gamma\in(0,+\infty)$, $\xi\in\R$, $b>0$.
		The spectrum of $\mathscr{D}_{\gamma,\xi}(b)$ can be described as follows. Let $n\geq 1$.
		\begin{enumerate}[\rm (i)]
			\item The function $\vartheta^+_n(\gamma,\cdot)$ is increasing and
			\[\lim_{\xi\to-\infty}\vartheta^+_n(\gamma,\xi)=\sqrt{2(n-1)b}\,,\quad \lim_{\xi\to+\infty}\vartheta^+_n(\gamma,\xi)=+\infty\,.\]
			\item	The function $\vartheta^-_n(\gamma,\cdot)$ has a unique critical point, which is a non-degenerate minimum, and
			\begin{equation}\label{eq.limitheta-}	 
				\lim_{\xi\to-\infty}\vartheta^-_n(\gamma,\xi)=\sqrt{2nb}\,,\quad \lim_{\xi\to+\infty}\vartheta^-_n(\gamma,\xi)=+\infty\,.
			\end{equation}
		\end{enumerate}
	\end{theorem}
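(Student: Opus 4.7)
My plan is to analyze $\vartheta_n^\pm(\gamma,\cdot)$ via the Feynman--Hellmann formula combined with a virial identity and asymptotic estimates. For $\gamma \in (0,+\infty)$ the domain in \eqref{eq.bck} is independent of $\xi$ and $\partial_\xi \mathscr{D}_{\gamma,\xi} = \sigma_1$, so Feynman--Hellmann yields $\partial_\xi E = 2\int_0^{+\infty}\psi_1\psi_2\,\dd x_2$ for $E \in \{\vartheta_n^+, -\vartheta_n^-\}$, where $\psi=(\psi_1,\psi_2)$ is a corresponding real normalized eigenfunction (real because the fiber operator has real coefficients). Multiplying the eigenvalue equations $d_\xi\psi_2=E\psi_1$ and $d_\xi^\ad\psi_1=E\psi_2$ by $\psi_2$ and $\psi_1$ respectively, integrating by parts on $\R_+$, and using the boundary condition $\psi_2(0)=\gamma\psi_1(0)$, I obtain the virial-type identity
\[
E\,\partial_\xi E = \int_0^{+\infty}(\xi + bx_2)|\psi|^2\, \dd x_2 + \frac{\gamma^2 - 1}{2}\psi_1(0)^2.
\]

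For the limits, as $\xi\to+\infty$ the bound $(\xi+bx_2)^2 \geq \xi^2$ on $\R_+$ together with $d_\xi^\ad d_\xi = -\partial_2^2 + (\xi+bx_2)^2+b$ gives $(\vartheta_n^\pm)^2 \geq \xi^2 - b \to +\infty$. As $\xi\to-\infty$, the magnetic well $(\xi+bx_2)^2$ is centred at $x_2=-\xi/b$, far from the boundary; decay estimates in the spirit of those in \cite{barbaroux:hal-02889558} show that the fiber operator becomes close to the whole-plane Landau Dirac operator, and a min--max argument identifies the limits with Landau levels. The shift by one between the two branches is a consequence of the supersymmetric splitting $\mathscr{D}^2_{\gamma,\xi} = \mathrm{diag}(d_\xi d_\xi^\ad, d_\xi^\ad d_\xi)$, whose two scalar operators differ by $2b$.

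For the monotonicity in (i), I would reduce to a scalar Robin-type problem on $\R_+$: for $E = \vartheta_n^+ > 0$, $\psi_1$ solves $(-\partial_2^2 + (\xi+bx_2)^2 - b)\psi_1 = E^2\psi_1$ with $\psi_1'(0) = (\gamma E - \xi)\psi_1(0)$. Combining the monotonicity of Robin eigenvalues in the boundary coefficient with an implicit-function argument handling the dependence of that coefficient on $E$ should give strict monotonicity of $E^2$, hence of $\vartheta_n^+$, in $\xi$. For (ii), the limits $\sqrt{2nb}$ at $-\infty$ and $+\infty$ at $+\infty$, together with continuity and real-analyticity, guarantee the existence of an interior minimum of $\vartheta_n^-$. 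Uniqueness and non-degeneracy would then follow by differentiating once more: at any critical point $\int\psi_1\psi_2=0$, and a second Feynman--Hellmann computation combined with the eigenvalue equations should express $\partial_\xi^2(\vartheta_n^-)^2$ at that point as a strictly positive quantity, excluding maxima and degenerate critical points.

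The principal obstacle is the monotonicity of $\vartheta_n^+$: for $\gamma\neq 1$ the boundary term $\tfrac{\gamma^2-1}{2}\psi_1(0)^2$ in the virial identity has indefinite sign, so the elementary argument available in the infinite-mass case $\gamma=1$ treated in \cite{barbaroux:hal-02889558} does not extend directly. Controlling this term and showing that the virial identity remains strictly positive for all $\xi\in\R$ will likely require either the scalar-Robin reduction sketched above or a continuity argument from $\gamma=1$ tracking critical points across the one-parameter family.
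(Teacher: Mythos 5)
Your overall strategy --- reduce to scalar Robin problems, apply Feynman--Hellmann and a virial identity, and control the $\xi\to\pm\infty$ limits by comparison with the Landau model --- is the right one, and your Dirac-level virial identity $E\,\partial_\xi E = \int_0^{+\infty}(\xi + bx_2)|\psi|^2\,\dd x_2 + \tfrac{\gamma^2 - 1}{2}\psi_1(0)^2$ is correct. But, as you yourself note, there is a genuine gap: you do not establish the monotonicity of $\vartheta_n^+(\gamma,\cdot)$ for $\gamma\ne1$, and you also cannot then close part~(ii), since the uniqueness and non-degeneracy of the critical point of $\vartheta_n^-$ require controlling the sign of a second derivative at every critical point, not merely the existence of an interior minimum. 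The difficulty is not merely technical: the indefinite boundary term $\tfrac{\gamma^2-1}{2}\psi_1(0)^2$ defeats the direct Dirac-level argument, and in your scalar-Robin reduction the effective Robin coefficient $\gamma E-\xi$ and the potential $(\xi+bx_2)^2$ both move with $\xi$ while $E$ is simultaneously determined by the problem, so the announced ``monotonicity in the Robin coefficient plus an implicit-function argument'' has no concrete content as it stands.

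The resolution in the paper has two ingredients your sketch lacks. First, freeze the Robin parameter: introduce $\mathfrak{h}^+_{\alpha,\xi}=d_\xi d_\xi^\ad$ with boundary condition $d_\xi^\ad u(0)=\alpha u(0)$; a virial identity at \emph{fixed} $\alpha$ yields the closed formula $\partial_\xi\nu^+_n(\alpha,\xi)=\tfrac1b\big(\nu^+_n(\alpha,\xi)+\alpha^2-2\alpha\xi\big)\,u_{\alpha,\xi}(0)^2$, from which $\partial_\xi\nu^+_n>0$ for all $\xi$ follows using only $\nu^+_n>0$, the sign of the induced critical-point formula, and the divergence at $\xi\to+\infty$. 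Second, characterize $\vartheta^+_n(\gamma,\xi)$ as the unique positive root $\lambda$ of the fixed-point equation $\nu_n^+(\gamma\lambda,\xi)=\lambda^2$, and prove the crucial sign $\gamma\,\partial_\alpha\nu_n^+(\gamma\vartheta^+_n,\xi)-2\vartheta^+_n<0$ by looking at the quadratic $P(\lambda)=q^+_{\gamma\lambda,\xi}(u)-\lambda^2$ with $u$ the eigenfunction at $\alpha=\gamma\vartheta^+_n$: since $P(0)\ge0$, $P(\vartheta^+_n)=0$ and $P(-\infty)=-\infty$, its two roots have opposite signs, hence $P'(\vartheta^+_n)<0$. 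Implicit differentiation of the fixed-point relation then gives $\partial_\xi\vartheta^+_n>0$, and the analogous second-derivative formula $\partial^2_\xi\nu^-_n(\alpha,\xi_\alpha)=\tfrac{2\alpha}{b}\,u_{\alpha,\xi_\alpha}(0)^2>0$ at critical points, fed through the same implicit differentiation, delivers the uniqueness and non-degeneracy in part~(ii). Without these two facts --- the fixed-$\alpha$ virial formula and the negativity of $P'$ at the fixed point --- your argument cannot be closed, and a continuity argument from $\gamma=1$ would face the same obstruction, since nothing a priori prevents a critical point of $\vartheta_n^+$ from being born as $\gamma$ moves away from $1$.
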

	\begin{figure}[ht!]
		\centering
		\includegraphics[width=0.5\textwidth]{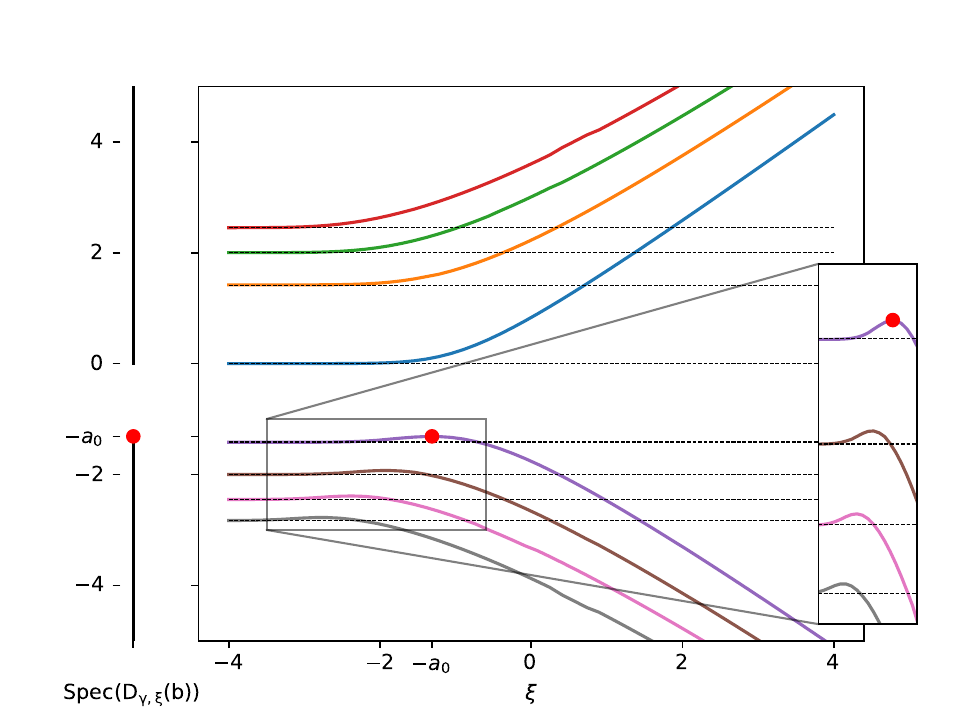}
		\caption{The dispersion curves of $\mathscr{D}_{\gamma,\xi}$ for $\gamma = b= 1$}
		\label{fig1}
	\end{figure}
	
	Figure \ref{fig1} gives the dispersion curves of the infinite mass Dirac magnetic operator with a special focus on the global maxima of the negative dispersion curves. Here 	$a_0\in(0,\sqrt{2})$, was introduced in \cite{barbaroux:hal-02889558}, it is the minimum of $\vartheta^-_1(1,\cdot)$, \emph{i.e.}, it is the size of the spectral gap of the Dirac operator with infinite mass boundary condition. The spectral gap of $\mathscr{D}_\gamma$ as a function of $\gamma$ can be read off from Figure~\ref{fig-gap}.

\begin{figure}[H]
		\centering
		\includegraphics[width=0.5\textwidth]{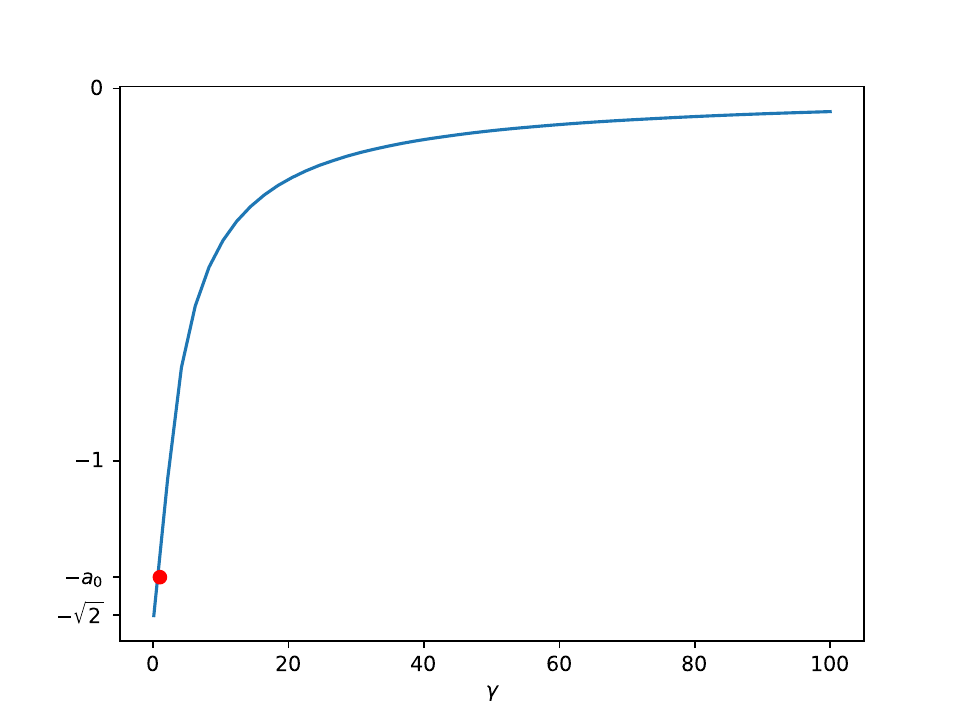}
		\caption{The value of the maximal negative energy of the full operator as a function of $\gamma$ for $b=1$. }
		\label{fig-gap}
	\end{figure}

\begin{figure}[ht!]
		\centering
		\includegraphics[width=0.5\textwidth]{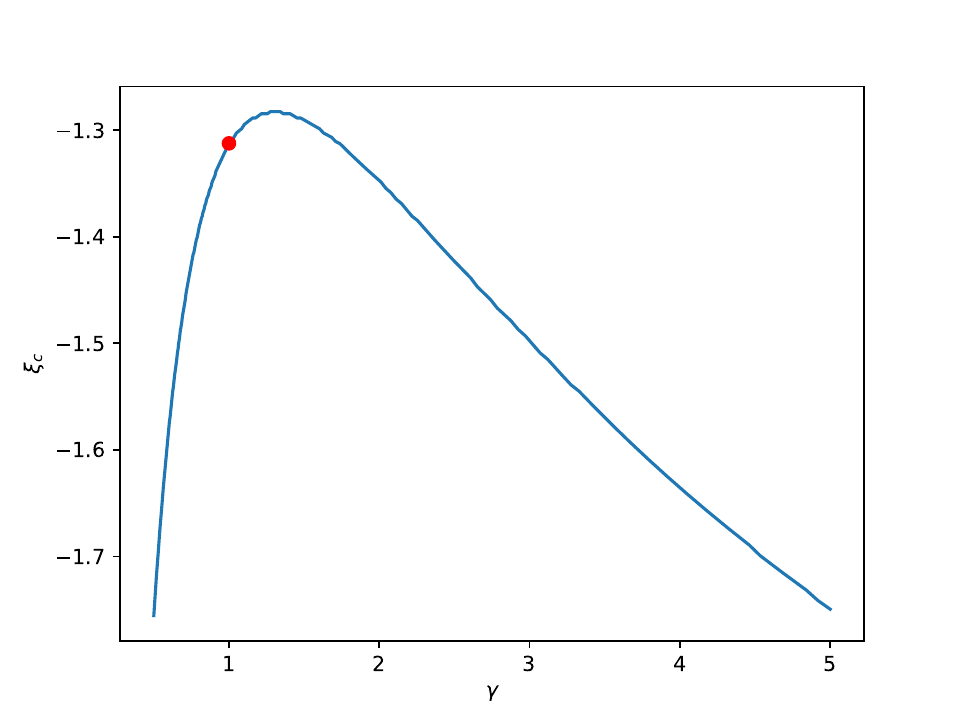}
		\caption{The location of the critical point of the first negative dispersion curve for $b= 1$. The red bullet refers to $\gamma=1$. }
		\label{fig-xi}
	\end{figure}

	Our next result describes   the dispersion curves as functions of  $\gamma $ and their zigzag limits.
	\begin{theorem}\label{prop.limitgamma}
		Let $b>0$. The families of functions $[0,+\infty)\ni\gamma\mapsto \vartheta_n^+(\gamma,\cdot)$ and $[0,+\infty)\ni\gamma\mapsto -\vartheta_n^-(\gamma,\cdot)$ are increasing with $\gamma$. Moreover, for all $n\geq 1$ and all $\xi\in\mathbb{R}$, we have
		\[\lim_{\gamma\to 0}\vartheta^\pm_n(\gamma,\xi)=\vartheta_n^\pm(0,\xi)\,,\quad \lim_{\gamma\to +\infty}\vartheta^\pm_n(\gamma,\xi)=\vartheta_n^\pm(+\infty,\xi)\,. \]	 
	\end{theorem}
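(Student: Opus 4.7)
The plan is to prove monotonicity via a Hellmann--Feynman-type identity with a boundary correction, and then identify the zigzag limits by a compactness argument on the eigenfunctions.

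\textbf{Monotonicity.} By Proposition~\ref{prop.elementary}(ii) each branch $\gamma\mapsto \lambda(\gamma):=\vartheta_n^\pm(\gamma,\xi)$ is real-analytic on $(0,+\infty)$, and since $\mathscr{D}_{\gamma,\xi}$ has real coefficients I may locally choose a real, $L^2$-normalised, real-analytic family $\psi_\gamma=(\psi_{\gamma,1},\psi_{\gamma,2})$ with $\mathscr{D}_{\gamma,\xi}\psi_\gamma=\lambda(\gamma)\psi_\gamma$ and $\psi_{\gamma,2}(0)=\gamma\psi_{\gamma,1}(0)$. The key input is the integration-by-parts identity
\[
\langle\phi,\mathscr{D}_{\gamma,\xi}\psi\rangle-\langle\mathscr{D}_{\gamma,\xi}\phi,\psi\rangle=\overline{\phi_1(0)}\,\psi_2(0)-\overline{\phi_2(0)}\,\psi_1(0),\qquad\phi,\psi\in H^1(\R_+,\C^2),
\]
applied to $\phi=\psi_\gamma$ and $\psi=\partial_\gamma\psi_\gamma$. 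Together with the differentiated constraint $(\partial_\gamma\psi_\gamma)_2(0)=\psi_{\gamma,1}(0)+\gamma(\partial_\gamma\psi_\gamma)_1(0)$, a direct cancellation (using $\mathscr{D}\psi_\gamma=\lambda\psi_\gamma$ and the real normalization) yields
\[
\lambda'(\gamma)=\psi_{\gamma,1}(0)^2\geq 0.
\]
Hence every real-analytic branch is non-decreasing in $\gamma$, which proves the claimed monotonicity of $\gamma\mapsto\vartheta_n^+$ and of $\gamma\mapsto-\vartheta_n^-$.

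\textbf{Zigzag limits.} Monotonicity guarantees the existence of the limits as $\gamma\to 0^+$ and as $\gamma\to+\infty$. To identify them, pick $\gamma_k\to 0^+$ and $L^2$-normalised eigenfunctions $\psi_k$ attached to $\vartheta_n^\pm(\gamma_k,\xi)$, the latter staying bounded by monotonicity. An integration by parts produces the form expression
\[
\|\mathscr{D}_{\gamma,\xi}\psi\|^2=\int_0^{+\infty}\bigl(|\psi_1'|^2+|\psi_2'|^2+(\xi+bx_2)^2|\psi|^2\bigr)\,\dd x_2+b\bigl(\|\psi_2\|^2-\|\psi_1\|^2\bigr)+\xi(\gamma^2-1)|\psi_1(0)|^2,
\]
and the standard trace inequality $|u(0)|^2\leq\epsilon\|u'\|^2+C_\epsilon\|u\|^2$ absorbs the boundary term, giving a uniform bound of $\psi_k$ in $B^1(\R_+)\times B^1(\R_+)$. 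The compact embedding of $B^1$ into $L^2$ then extracts a strong $L^2$-subsequential limit $\psi_\star$ of unit norm; the trace relation $\psi_{k,2}(0)=\gamma_k\psi_{k,1}(0)\to 0$ forces $\psi_{\star,2}(0)=0$, so $\psi_\star\in\mathrm{Dom}(\mathscr{D}_{0,\xi})$, and passing to the limit in the eigenvalue equation identifies $\psi_\star$ as a non-trivial eigenfunction of $\mathscr{D}_{0,\xi}$ whose eigenvalue is the monotone limit. The case $\gamma\to+\infty$ is analogous: rewriting the boundary condition as $\psi_{k,1}(0)=\psi_{k,2}(0)/\gamma_k\to 0$ enforces the zigzag condition of $\mathscr{D}_{+\infty,\xi}$ on the limit.

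The main obstacle is the index matching: the compactness argument only places the monotone limit in the spectrum of the zigzag operator, so one must still show that it coincides with $\vartheta_n^\pm(0,\xi)$ (respectively $\vartheta_n^\pm(+\infty,\xi)$)---in particular to accommodate the one-level shift of the positive spectrum at $\gamma=0$ caused by the zero mode predicted by Proposition~\ref{prop.zz}. I would close this by a matching upper bound: given a zigzag eigenfunction $\phi$, the modification $(\phi_1,\phi_2+\gamma\phi_1(0)\chi)$ with $\chi\in C_c^\infty(\R_+)$ and $\chi(0)=1$ lies in $\mathrm{Dom}(\mathscr{D}_{\gamma,\xi})$, converges to $\phi$ in graph norm as $\gamma\to 0^+$, and, plugged into a variational characterisation of $\vartheta_n^\pm(\gamma,\xi)$ compatible with the sign splitting (e.g.\ a Dolbeault--Esteban--S\'er\'e-type min-max), yields the reverse inequality. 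A dual trial family settles the $\gamma\to+\infty$ limit.
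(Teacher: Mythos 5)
Your monotonicity argument is correct and takes a genuinely different route from the paper. Instead of differentiating the fixed-point equation $\nu_n^\pm(\gamma\lambda,\xi)=\lambda^2$ of Theorem~\ref{prop.nupmlambda2} (which is what the paper does), you compute $\lambda'(\gamma)$ directly from the boundary/Green identity together with the differentiated domain constraint, getting $\lambda'(\gamma)=\psi_{\gamma,1}(0)^2$. To upgrade this to \emph{strict} monotonicity (as the statement requires), one also needs $\psi_{\gamma,1}(0)\neq 0$ for $\gamma\in(0,+\infty)$; this holds because otherwise both boundary traces of $\psi_\gamma$ vanish and uniqueness for the first-order ODE system forces $\psi_\gamma\equiv 0$. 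This is a clean alternative that handles positive and negative branches uniformly.

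The zigzag limits are where you have a genuine gap, which you partly acknowledge. Two issues. First, the a priori bound as you wrote it is not uniform in $\gamma$: the boundary term $\xi(\gamma^2-1)|\psi_1(0)|^2$ cannot be absorbed by a single trace inequality on $\psi_1$ as $\gamma\to+\infty$; you must rewrite $\gamma^2\psi_1(0)^2=\psi_2(0)^2$ and split the trace estimate between the two components. Second, and more importantly, compactness (even when repaired) only shows that the monotone limit is \emph{some} eigenvalue of the relevant zigzag operator; it does not pin down the index. Index matching is exactly the crux here, especially at $\gamma\to 0^+$ where the zero mode $\vartheta_1^+(0,\xi)=0$ appears and the positive spectrum shifts by one unit. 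The DES-type min-max you invoke to close the gap would need a careful setup with the positive/negative splitting and a two-sided comparison for the $n$-th curve; as written it is only a suggestion. The paper avoids this entirely: Theorem~\ref{prop.nupmlambda2} gives an index-preserving bijection $n\mapsto\vartheta_n^\pm$ through the Schr\"odinger eigenvalues $\nu_n^\pm$, and Lemmas~\ref{lem.q-a0} and~\ref{lem.alphato0} supply the $\alpha\to 0^+$ and $\alpha\to+\infty$ limits of $\nu_n^\pm$ with the correct index, after which the two-sided estimate $\vartheta_n^\pm(0,\xi)-\varepsilon<\vartheta_n^\pm(\gamma,\xi)<\vartheta_n^\pm(0,\xi)+\varepsilon$ (and its $+\infty$ analogue) is an elementary consequence of the fixed-point equation. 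I would close the gap by using that characterization rather than trying to complete the compactness route.
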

	\begin{figure}
		\centering
		\includegraphics[width=1.\textwidth]{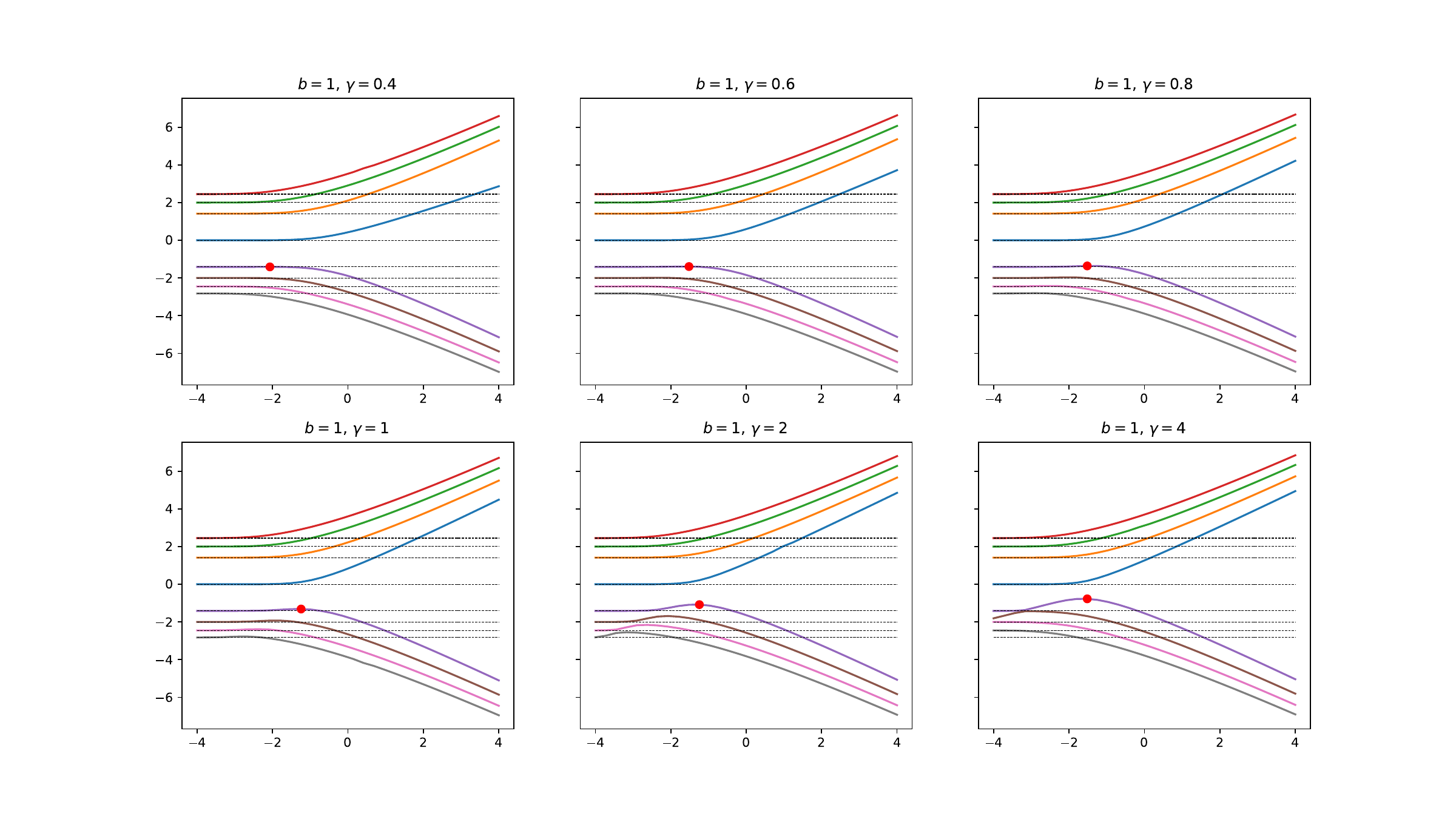}
		\caption{The dispersion curves for $b=1$ and various $\gamma$}
		\label{fig2}
	\end{figure}
	In Figure \ref{fig2} we present various pictures of the dispersion curves with varying $\gamma\in(0,+\infty)$. Moreover, 
	Figure \ref{fig3} illustrates the action of the symmetries, described in the following.
	\begin{remark}[Symmetries]\label{rem.symm}
		In view of the underlying symmetries it is enough to study the dispersion curves when $(b,\gamma)\in(0,+\infty)\times[0,+\infty]$. Indeed, 
		in order to also consider $\gamma<0$, we notice that
		\[\sigma_3\mathscr{D}_\gamma(b)=-\mathscr{D}_{-\gamma}(b)\sigma_3\,.\]
		Moreover, for $b<0$ we used  the charge conjugation $C\psi=\sigma_1\overline{\psi}$ which turns the boundary conditions into $\psi_2=\gamma^{-1}\psi_1$ and hence
		\[C\mathscr{D}_{\gamma}(b)=-\mathscr{D}_{\gamma^{-1}
		}{(-b)} \,C\,.\]
		For the fiber operators this leads to:
		\begin{equation}\label{eq.symfiber}
			C\mathscr{D}_{\gamma,\xi}(b)=-\mathscr{D}_{\gamma^{-1},-\xi}(-b)\,C\,,\quad \sigma_3\, \mathscr{D}_{\gamma,\xi}(b)=-\mathscr{D}_{-\gamma,\xi}(b)\,\sigma_3\, .
		\end{equation}
		In particular, we obtain the dispersion curves when $b<0$ from the curves when $b>0$ by changing $\gamma$ into $\gamma^{-1}$ and $\xi$ into $-\xi$.
	\end{remark}
	\begin{figure}[ht!]
		\centering
		\includegraphics[width=0.65\textwidth]{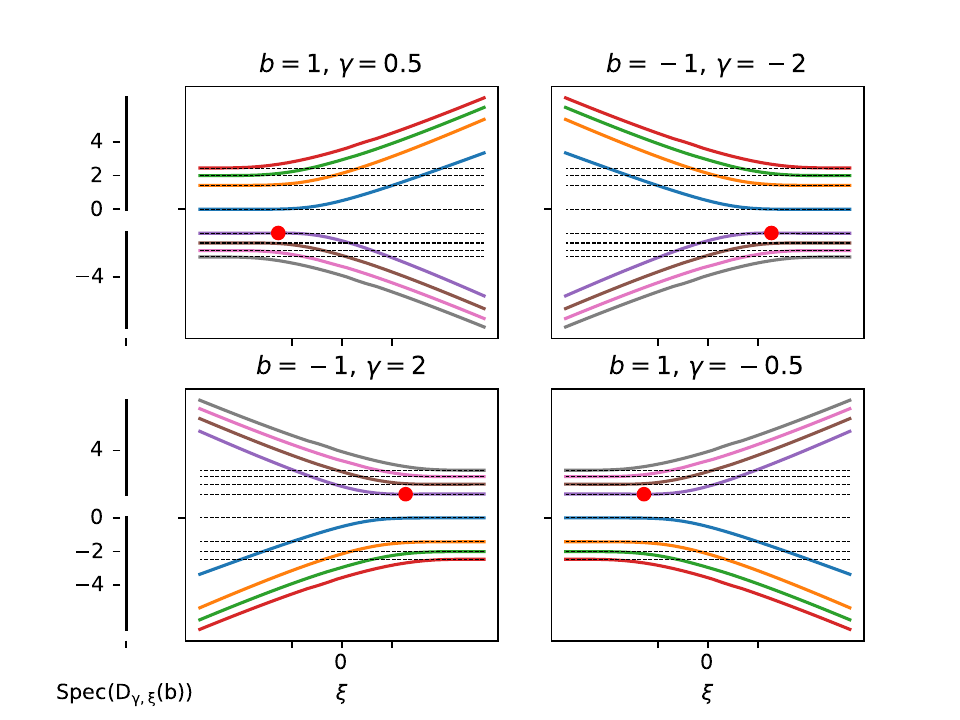}
		\caption{Action of the symmetries on the dispersion curves.}
		\label{fig3}
	\end{figure}
	
	\newpage
	{\it Organization of this article.} 
	In Section~\ref{sec.bulk} we prove  Theorem~\ref{prop.bec}. Sections \ref{sec.elem}, \ref{sec.main-dispersion}, and \ref{sec.proofs} are devoted to the description of the energy dispersion relations $\vartheta_n^\pm(\gamma,\xi)$.  In Section~\ref{sec.elem} we prove Propositions~\ref{prop.elementary} and \ref{prop.zz}.
	In addition, we state   in Theorem~\ref{prop.nupmlambda2}
	a fixed-point characterization of  $\vartheta_n^\pm(\gamma,\xi)$ in terms of a family   $\big(\nu_{n}^\pm(\alpha,\xi)\big)_{\alpha>0}$ of  the eigenvalues of   certain magnetic Schr\"odinger-like operators with Robin boundary conditions.
	We give a proof of this characterization  in  Section~\ref{sec.proofs}.
	In  Section~\ref{sec.main-dispersion} we investigate the fundamental
	mapping properties of  $\nu_{n}^\pm(\alpha,\xi)$ for $\alpha>0$ and $\xi\in \R$.
	Finally, in Section~\ref{sec.proofs}, we apply Theorem~\ref{prop.nupmlambda2}, together with the analysis of Section~\ref{sec.main-dispersion}, to prove Theorems~\ref{thm.main} and \ref{prop.limitgamma}.

	\section{Edge conductance  formula}\label{sec.bulk}
	In this section we prove Theorem~\ref{prop.bec}. 
	In doing so we use various  results on the energy dispersion curves which are stated in Section~\ref{edc} and proved in the next sections. 
	
	We let, for all $j\geq 1$, 
	\[\lambda_j(\xi)=\vartheta^+_j(\gamma,\xi)\,,\]
	and, for all $j\leq -1$,
	\[\lambda_j(\xi)=-\vartheta^-_j(\gamma,\xi)\,.\]
	 Let   $(\Psi_{\xi,j})_{\xi\in \R} $ be an analytic family of  normalized eigenfunctions of  $\mathscr{D}_{\gamma,\xi} $  associated with  
		$\lambda_j(\xi)$ (see Proposition~\ref{prop.elementary}).

		In view of the  asymptotic  behavior of $\lambda_j(\xi)$ as $ \xi\to\pm\infty$ -- stated in Theorem~\ref{thm.main} (for the non-zigzag case) and  Proposition~\ref{prop.zz}  and Remark~\ref{rmk.limits} (for the zigzag case) -- the proof of Theorem~\ref{prop.bec} reduces to showing  the following result.
	\begin{proposition}\label{prop-be}
		We let $\chi=\mathds{1}_{(0,1)}$ and consider $\gamma\in\R\cup\{+\infty\}$. Let us consider a function  $f\in C^1_0(\R)$  being zero near the Landau levels. Then, the operator $\chi(X_1)\sigma_1 f(\mathscr{D}_{\gamma})$ is trace class and there exists a finite $J\subset\mathbb{Z}\setminus\{0\}$ such that 
		\[2\pi\mathrm{Tr}\big(\chi(X_1)\sigma_1f(\mathscr{D}_{\gamma})\big)=\sum_{j\in J}\int_{\mathbb{R}}f(\lambda_j(\xi))\lambda'_j(\xi)\dd\xi\,.\]
		In particular, if $n\in\mathbb{N}$ and   $F\in C_0^2(\R)$  equals $1$ near $n$ Landau levels and $0$ near the others, we have
		\begin{equation}\label{hc2}
			-2i\pi\mathrm{Tr}\left(\chi(X_1)[\mathscr{D}_{\gamma},X_1]F'(\mathscr{D}_{\gamma})\right)=\sum_{j\in J}\Big (F\big (\lambda_j(-\infty)\big )-F\big (\lambda_j(+\infty)\big )\Big )\,.
		\end{equation}
	\end{proposition}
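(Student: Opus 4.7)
The plan is to use the partial Fourier transform in $x_1$ to fiber $\mathscr{D}_\gamma = \int^\oplus \mathscr{D}_{\gamma,\xi}\,\dd\xi$, combined with the spectral decomposition $f(\mathscr{D}_{\gamma,\xi}) = \sum_j f(\lambda_j(\xi))|\Psi_{\xi,j}\rangle\langle\Psi_{\xi,j}|$, where $(\Psi_{\xi,j})_{\xi\in\R}$ is the analytic family of normalized eigenfunctions provided by Proposition~\ref{prop.elementary}. In this representation, $\chi(X_1)$ acts as convolution in $\xi$ with the kernel $\hat\chi(\xi-\xi')/\sqrt{2\pi}$. The first task is to identify the finite index set $J$: since $f\in C_0^1(\R)$ vanishes near every Landau level, and since each dispersion curve $\lambda_j$ converges either to a Landau level or to $\pm\infty$ as $\xi\to\pm\infty$ (by Theorem~\ref{thm.main}, Proposition~\ref{prop.zz} and Remark~\ref{rmk.limits}), the function $\xi\mapsto f(\lambda_j(\xi))$ is compactly supported in $\xi$ and is identically zero except for finitely many $j$, which defines $J$. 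The flat zigzag band $\vartheta_1^+(0,\cdot)\equiv 0$ is automatically excluded since $f(0)=0$.

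For each $j\in J$, set $\Pi_j:=\int^\oplus|\Psi_{\xi,j}\rangle\langle\Psi_{\xi,j}|\,\dd\xi$ and $T_j:=\chi(X_1)\sigma_1 f(\mathscr{D}_\gamma)\Pi_j$. I factor $T_j = M_{\sigma_1}A_jC_j$, where $M_{\sigma_1}$ denotes pointwise multiplication by the constant matrix $\sigma_1$ on the spinor index, and where $A_j:L^2(\R)\to L^2(\R\times\R_+,\C^2)$ and $C_j:L^2(\R\times\R_+,\C^2)\to L^2(\R)$ are defined by
\begin{align*}
(A_j g)(\xi,x_2) &:=\int\frac{\hat\chi(\xi-\xi')}{\sqrt{2\pi}}\sqrt{|f(\lambda_j(\xi'))|}\,\Psi_{\xi',j}(x_2)\,g(\xi')\,\dd\xi',\\
(C_j\phi)(\xi) &:=\sign(f(\lambda_j(\xi)))\sqrt{|f(\lambda_j(\xi))|}\,\langle\Psi_{\xi,j},\phi(\xi,\cdot)\rangle.
\end{align*}
A direct computation using Plancherel ($\|\hat\chi\|_{L^2}^2=\|\chi\|_{L^2}^2=1$) and the normalization $\|\Psi_{\xi,j}\|=1$ shows that $\|A_j\|_{HS}^2$ and $\|C_j\|_{HS}^2$ are both controlled by $\int|f(\lambda_j(\xi))|\,\dd\xi<\infty$. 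Hence $A_jC_j$, and thus $T_j$, is trace class. By cyclicity of the trace,
\[\mathrm{Tr}(T_j)=\mathrm{Tr}(C_jM_{\sigma_1}A_j)=\frac{1}{2\pi}\int f(\lambda_j(\xi))\langle\Psi_{\xi,j},\sigma_1\Psi_{\xi,j}\rangle\,\dd\xi,\]
the second equality being obtained by evaluating the diagonal of the integral kernel of $C_jM_{\sigma_1}A_j$ on $L^2(\R)$, using $\hat\chi(0)/\sqrt{2\pi}=1/(2\pi)$.

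The closing step is the Hellmann-Feynman identity: differentiating $\mathscr{D}_{\gamma,\xi}\Psi_{\xi,j}=\lambda_j(\xi)\Psi_{\xi,j}$ in $\xi$, pairing with $\Psi_{\xi,j}$, and using $\partial_\xi\mathscr{D}_{\gamma,\xi}=\sigma_1$ (since $\partial_\xi d_\xi=\partial_\xi d_\xi^\ad=1$), yields $\langle\Psi_{\xi,j},\sigma_1\Psi_{\xi,j}\rangle=\lambda_j'(\xi)$. Summation over $j\in J$ gives the first displayed formula. For \eqref{hc2}, take $f=F'$ (which vanishes near all Landau levels), use $[\mathscr{D}_\gamma,X_1]=-i\sigma_1$, and integrate $F'(\lambda_j(\xi))\lambda_j'(\xi)=\tfrac{\dd}{\dd\xi}[F(\lambda_j(\xi))]$ via the fundamental theorem of calculus; the boundary values $F(\lambda_j(\pm\infty))$ are well-defined since each limit is either a Landau level or $\pm\infty$, where $F$ is constant.

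The main obstacle is the rigorous justification of the Hilbert-Schmidt factorization together with, in the zigzag cases where $\mathrm{Dom}(\mathscr{D}_{\gamma,\xi})$ depends on $\xi$, the Hellmann-Feynman identity. These points are controlled by the analyticity of $\lambda_j$ and of the eigenfunction branches established in Proposition~\ref{prop.elementary}, together with the spectral techniques developed in~\cite{barbaroux:hal-02889558}.
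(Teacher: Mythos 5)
Your overall strategy — fiber $\mathscr{D}_\gamma$ by partial Fourier transform, identify the finite set $J$, reduce to a trace formula over $\xi$, and close with Feynman--Hellmann and the fundamental theorem of calculus — matches the paper, and your final trace formula is the correct one. The gap is in the claim that both factors of your factorization $T_j = M_{\sigma_1}A_jC_j$ are Hilbert--Schmidt, which is the step that is supposed to deliver the trace class property.

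The factor $C_j$ is \emph{not} Hilbert--Schmidt. A direct computation gives $C_jC_j^* = M_{|f\circ\lambda_j|}$, the operator of multiplication by $|f(\lambda_j(\xi))|$ on $L^2(\R_\xi)$. For $j\in J$ this multiplier is not a.e.\ zero, so $C_jC_j^*$ is not compact (its spectrum is the essential range of $|f\circ\lambda_j|$, a genuine interval), and in particular $\|C_j\|_{HS}^2=\mathrm{Tr}(C_jC_j^*)=+\infty$. The structural reason is that $C_j$ is a decomposable (``diagonal in $\xi$'') operator composed with a bounded multiplier; decomposable operators over a non-atomic base measure are never compact unless they vanish a.e. So the product $A_jC_j$ is only a bounded-times-Hilbert--Schmidt product, hence at best Hilbert--Schmidt, and you cannot conclude trace class from your factorization. (Consequently the subsequent steps — cyclicity of the trace and evaluating the trace by integrating the diagonal kernel — are also unjustified; the latter anyway requires trace class plus a regularity statement à la Brislawn, not merely a well-defined kernel.) This is exactly the obstruction the paper flags when it says that $(x_1,x_2)\mapsto e^{ix_1\xi}\Psi_{\xi,j}(x_2)$ fails to be square-integrable on $\R^2_+$: one application of $\chi(X_1)$ gives a Hilbert--Schmidt operator but not a trace class one. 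The paper removes the obstruction with the commutator identity
\[
\sigma_1\chi f(\mathscr{D}_\gamma) = \sigma_1\chi(1+iX_1)\,f(\mathscr{D}_\gamma)\,(1+iX_1)^{-1}
+\sigma_1\chi(1+iX_1)\bigl[(1+iX_1)^{-1},f(\mathscr{D}_\gamma)\bigr]\,,
\]
which expresses the operator as a sum of Bochner integrals of genuinely rank-one trace class operators with uniformly bounded trace norms over a compact $\xi$-interval; this is what makes Lemma~\ref{lemma.t-trace} applicable and gives both trace class and the clean trace formula (the commutator term contributing zero as a total $\xi$-derivative). To repair your argument you would need either to insert this trick, or to find a factorization in which the $\xi$-off-diagonal weight $\hat\chi(\xi-\xi')$ is genuinely split between both factors rather than placed entirely in $A_j$.
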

		Let us first state two useful elementary results.
		The first one is a  direct  consequence of Proposition \ref{prop.elementary} and Theorem \ref{thm.main}.
	\begin{lemma}\label{lem.truncLambda}
		Let us consider a  function $f\in C_0(\R)$  being zero near the Landau levels $\{\pm\sqrt{2n}\,,n\in \N\}$. Then, there exists a finite $J\subset\mathbb{Z}\setminus\{0\}$ such that for all $k\notin J$, $f\circ\lambda_k=0$ and for all $j\in J$, the functions $\xi\mapsto f(\lambda_j(\xi))$ have compact supports.
	\end{lemma}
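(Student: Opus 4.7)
The plan is to introduce $J := \{j \in \mathbb{Z}\setminus\{0\} : f\circ\lambda_j \not\equiv 0\}$ and verify two properties for this set: (a) $f\circ\lambda_j$ has compact support for every $j \in J$, and (b) $J$ itself is finite. Both rest on the explicit limits of $\lambda_j$ at $\xi=\pm\infty$ together with a uniform-in-$\xi$ lower bound on $|\lambda_j|$ that grows with $|j|$.

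For (a), I would use the limits supplied by Theorem \ref{thm.main} (non-zigzag case) and Proposition \ref{prop.zz} together with Remark \ref{rmk.limits} (zigzag case): for $j\geq 1$, $\lambda_j(\xi)\to \sqrt{2(j-1)b}$ as $\xi\to-\infty$ and $\lambda_j(\xi)\to+\infty$ as $\xi\to+\infty$; for $j\leq -1$, $\lambda_j(\xi)\to -\sqrt{2|j|b}$ and $-\infty$ respectively. Each finite limit is a Landau level, and $f$ is assumed both compactly supported and vanishing in a neighborhood of every Landau level, so $f(\lambda_j(\xi))$ is zero for $|\xi|$ large, whence compactness of $\mathrm{supp}(f\circ\lambda_j)$ follows from continuity.

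For (b), I fix $M>0$ with $\mathrm{supp}(f)\subset [-M,M]$. The monotonicity of $\xi\mapsto\vartheta_j^+(\gamma,\xi)$, from Theorem \ref{thm.main}(i) in the non-zigzag case and from Proposition \ref{prop.zz} combined with the monotonicity of $\nu_n^{\mathrm{Dir}}(b,\cdot)$ in the zigzag case, together with the limit at $-\infty$ gives $\vartheta_j^+(\gamma,\xi) \geq \sqrt{2(j-1)b}$ for all $\xi$; hence $f\circ\lambda_j\equiv 0$ once $\sqrt{2(j-1)b}>M$, disposing of all but finitely many $j\geq 1$. For $j\leq -1$ I would combine the pointwise ordering $\vartheta_n^-\leq\vartheta_{n+1}^-$ from Proposition \ref{prop.elementary} with the Pauli-type factorization $\mathscr{D}_{\gamma,\xi}^2 = \mathrm{diag}(d_\xi d_\xi^*,\,d_\xi^*d_\xi)$ and min-max/Dirichlet bracketing against the confining half-line operator $-\partial_2^2 + (\xi+bx_2)^2$, whose $n$-th eigenvalue exceeds $2nb$; this yields $(\vartheta_{|j|}^-(\gamma,\xi))^2\geq c\,|j|\,b$ uniformly in $\xi$ for some $c>0$ and rules out all but finitely many negative $j$. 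The main obstacle is precisely this uniform-in-$\xi$ lower bound on $\vartheta_n^-$ for non-zigzag $\gamma$, since Theorem \ref{thm.main}(ii) supplies only the limit at $-\infty$ and the existence of a unique non-degenerate minimum without quantifying the minimum value; once that bound is in hand, finiteness of $J$ and the conclusion follow immediately.
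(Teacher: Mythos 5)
Your decomposition into (a) compact support of each $f\circ\lambda_j$ and (b) finiteness of $J$ is exactly the right reading of the paper's one-line justification, and your argument for (a) — using the $\xi\to\pm\infty$ limits of the dispersion curves from Theorem \ref{thm.main}, Proposition \ref{prop.zz} and Remark \ref{rmk.limits} — is correct. For (b) on the positive side $j\geq 1$, the chain of monotonicity in $\xi$ plus the $\xi\to-\infty$ limit indeed gives $\vartheta^+_j(\gamma,\xi)\geq\sqrt{2(j-1)b}$ uniformly, which settles that half. You have also correctly put your finger on the one real issue: Theorem \ref{thm.main}(ii) supplies only the limits of $\vartheta^-_n(\gamma,\cdot)$ and uniqueness of a non-degenerate minimum, not a quantitative lower bound on that minimum, so the negative-index half of (b) needs an additional input.

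The fix you sketch, however, does not quite work. The quadratic form of $\mathscr{D}_{\gamma,\xi}^2$ is $\|d^\ad_\xi u_1\|^2+\|d_\xi u_2\|^2$ on a domain in which the two components are \emph{coupled} through $u_2(0)=\gamma u_1(0)$; the operator is not block-diagonal with Dirichlet conditions on each entry, and as \eqref{eq.eve2}--\eqref{eq.eve2.1} show, the effective boundary condition on each block is a $\lambda$-dependent Robin condition. Dropping that Robin term compares you to the $\alpha=0$ realization, whose bottom eigenvalue can be $0$ (Lemma \ref{lem.alphato0}), so naive bracketing gives nothing. The bound you want is nonetheless available from the paper's own machinery, by either of two routes. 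First, Theorem \ref{prop.limitgamma} says $\gamma\mapsto\vartheta^-_n(\gamma,\xi)$ is decreasing, hence for $\gamma\in(0,+\infty)$ one has $\vartheta^-_n(\gamma,\xi)\geq\vartheta^-_n(+\infty,\xi)=\sqrt{\nu^{\mathrm{Dir}}_n(b,\xi)-2b}>\sqrt{2(n-1)b}$ uniformly in $\xi$ (the last step from Remark \ref{rmk.limits}), and the zigzag cases are covered directly by Proposition \ref{prop.zz}. Second, and without invoking Theorem \ref{prop.limitgamma}, the fixed-point characterization of Theorem \ref{prop.nupmlambda2} together with the $\alpha$-monotonicity of Lemma \ref{lem.qinalpha} gives $\vartheta^-_n(\gamma,\xi)^2=\nu^-_n(\gamma^{-1}\vartheta^-_n,\xi)\geq\nu^-_n(0,\xi)$, and by Lemma \ref{lem.q-a0} the latter is the $n$-th Dirichlet eigenvalue of $d_\xi d^\ad_\xi=\mathscr{H}_\xi(b)-2b$, which exceeds $2(n-1)b$ uniformly in $\xi$. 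Either route closes your gap; the Pauli factorization alone does not.
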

	
	The following result might be elementary. We give its proof for the reader's convenience.
	\begin{lemma}\label{lemma.t-trace}
		Consider an operator $T$ on $L^2(\R^2_+)$  given as a Bochner integral (on a finite interval) of a continuous family of rank one operators 
		\begin{equation}\label{eq.T}
			T=\int_a^b  |\psi(\xi)\rangle \langle \phi(\xi)| d\xi\,.
		\end{equation}
		Assume further that the trace norm  $\Vert\psi(\xi)\Vert \,\Vert \phi(\xi)\Vert $ of the above integrand is uniformly bounded on $[a,b]$. 
		Then, $T$ is trace class and 
		\begin{align}\label{thet}
			{\rm Tr}(T)=\int_a^b {\rm Tr} \,  |\psi(\xi)\rangle \langle \phi(\xi)|	
			\, d\xi\,
			=\int_a^b \langle \phi(\xi),\psi(\xi)\rangle \, d\xi\,.
		\end{align}
	\end{lemma}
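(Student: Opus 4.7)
The plan is to recast everything inside the Banach space $\mathcal{S}_1$ of trace-class operators on $L^2(\R^2_+)$, equipped with the trace norm, and then invoke standard properties of Banach-space-valued Bochner integrals. The two ingredients I will use repeatedly are: (a) for $u,v\in L^2(\R^2_+)$, the rank-one operator $|u\rangle\langle v|$ belongs to $\mathcal{S}_1$ with trace norm $\|u\|\,\|v\|$ and $\mathrm{Tr}\,|u\rangle\langle v|=\langle v,u\rangle$; and (b) the trace is a bounded linear functional on $\mathcal{S}_1$.

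First I would upgrade the phrase \emph{continuous family of rank-one operators} to genuine continuity of the integrand $K(\xi):=|\psi(\xi)\rangle\langle\phi(\xi)|$ in the trace-norm topology. This follows from the triangle-type estimate
$$\bigl\||u_1\rangle\langle v_1|-|u_2\rangle\langle v_2|\bigr\|_{\mathcal{S}_1}\leq \|u_1-u_2\|\,\|v_1\|+\|u_2\|\,\|v_1-v_2\|,$$
which reduces $\mathcal{S}_1$-continuity to norm-continuity of the vectors $\psi(\xi)$ and $\phi(\xi)$ in $L^2$. Combined with the assumed uniform bound on $\|K(\xi)\|_{\mathcal{S}_1}=\|\psi(\xi)\|\,\|\phi(\xi)\|$ on the compact interval $[a,b]$, this makes $K:[a,b]\to\mathcal{S}_1$ Bochner integrable, hence it defines an element $\tilde T\in\mathcal{S}_1$.

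Next I would identify $\tilde T$ with $T$. Since the inclusion $\mathcal{S}_1\hookrightarrow\mathcal{B}(L^2(\R^2_+))$ is bounded and linear, it commutes with Bochner integration; hence $\tilde T$, regarded as a bounded operator, equals the Bochner integral defining $T$ in whatever weaker topology (operator norm, strong, or weak operator) it was originally taken. In particular $T\in\mathcal{S}_1$. Applying the same principle one last time to the bounded linear trace functional $\mathrm{Tr}:\mathcal{S}_1\to\C$ yields
$$\mathrm{Tr}(T)=\int_a^b \mathrm{Tr}\,K(\xi)\,d\xi=\int_a^b \langle\phi(\xi),\psi(\xi)\rangle\,d\xi,$$
which is the claimed formula \eqref{thet}. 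There is no real obstacle here; the only delicate point is interpreting the continuity hypothesis at the level of the vectors $\psi(\xi),\phi(\xi)$ so that $\mathcal{S}_1$-Bochner integrability is legitimate, and the rest is a direct application of generalities about Bochner integrals in Banach spaces together with the boundedness of the trace.
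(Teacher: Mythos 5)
Your proof is correct, but it follows a genuinely different route from the paper's. You recast the integral as a Bochner integral with values in the Banach space $\mathcal{S}_1$ of trace-class operators, verify $\mathcal{S}_1$-continuity of the integrand, and then read off both conclusions from the general principle that bounded linear maps (the inclusion $\mathcal{S}_1\hookrightarrow\mathcal{B}(L^2)$, respectively the trace functional $\mathrm{Tr}:\mathcal{S}_1\to\C$) commute with Bochner integration. The paper instead works at a more elementary level: it approximates the Bochner integral by Riemann sums $T_n$, observes that these converge to $T$ in operator norm while having uniformly bounded trace norm, and then invokes the appendix Lemma~\ref{tr-cl} (any operator-norm limit of trace-class operators with uniformly bounded trace norm is trace class); the trace formula \eqref{thet} is then derived by computing $\sum_{j=1}^N\langle f_j,Tf_j\rangle$ against an orthonormal system and passing to $N\to\infty$ with Cauchy--Schwarz, Bessel, and dominated convergence. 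Your approach is shorter and cleaner once one accepts the machinery of Banach-space Bochner integrals; the paper's is longer but self-contained and only uses the elementary Lemma~\ref{tr-cl}. One small refinement you might make: your triangle estimate reducing $\mathcal{S}_1$-continuity of $K(\xi)=|\psi(\xi)\rangle\langle\phi(\xi)|$ to $L^2$-continuity of the vectors $\psi(\xi),\phi(\xi)$ silently strengthens the hypothesis (there is a phase ambiguity in the factorization of a rank-one operator). This can be sidestepped: for an operator $A$ of rank at most $2$ one has $\|A\|_{\mathcal{S}_1}\leq 2\|A\|_{\mathrm{op}}$, so operator-norm continuity of the rank-one family $K(\xi)$ already implies $\mathcal{S}_1$-continuity with no reference to $\psi,\phi$ individually.
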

		\begin{proof}
			The integral in \eqref{eq.T} can be seen as a limit of a Riemann sum $T_n$,  which a priori only converges  in the operator norm topology. The integrand is a rank-one trace class operator, with a trace norm which is uniformly  bounded in $\xi$ on $[a,b]$. Hence, the trace norm of the $T_n$'s  is uniformly bounded in $n$. By  Lemma~\ref{tr-cl} from the Appendix we see that $T$,  which a priori is only a compact operator, is actually trace class.
			
			Let $\{f_j\}_{j\geq 1}$ be any orthonormal system. Then for all $N\geq 1$ we have
			\begin{align}\label{limit}
				\sum_{j=1}^N \langle f_j,T\, f_j\rangle= \int_a^b \sum_{j=1}^N \langle f_j, \psi(\xi)\rangle \, \langle \phi(\xi), f_j\rangle \, d\xi.
			\end{align}
			Using Cauchy-Schwarz and Bessel inequalities we get for every $N$:
			$$\Big |\sum_{j=1}^N \langle f_j, \psi(\xi)\rangle \, \langle \phi(\xi), f_j\rangle \Big |\leq \sqrt{\sum_{j=1}^N |\langle f_j,\psi(\xi)\rangle|^2}\, \sqrt{\sum_{j=1}^N |\langle f_j,\phi(\xi)\rangle|^2}\leq \Vert\psi(\xi)\Vert\, \Vert\phi(\xi)\Vert.$$
			By Lebesgue's dominated convergence theorem we
			can take $N\to\infty$ in \eqref{limit} to get \eqref{thet}. 
		\end{proof}
		\begin{proof}[Proof of Proposition~\ref{prop-be}]
			Let us first investigate the integral kernel of $f(\mathscr{D}_{\gamma})$ for some $f\in C_0^1(\R)$. 
			We can write
			\[\Big (f(\mathscr{D}_{\gamma})\psi\Big ) (x)=\int_{\R^2_{+}}K_f(x,x')\psi(x')\dd x'\,,\]
			with
			\[K_f(x,x')=\frac{1}{2\pi}\int_{\R}\dd\xi e^{i(x_1-x'_1)\xi}k_f(\xi, x_2,x'_2)\,,\]
			where
			\[k_f(\xi,x_2,x'_2)=\sum_{j\in J}f(\lambda_j(\xi))\big |\Psi_{\xi,j}(x_2) \big \rangle \, \big\langle \Psi_{\xi,j}(x'_2)\big |\,,\]
			and $J\subset\mathbb{Z}\setminus\{0\}$ is the finite set from Lemma~\ref{lem.truncLambda}.
			The technical issue here is that, even if we multiply by $\chi\equiv\chi(X_1)$ from the left  we can not directly apply Lemma~\ref{lemma.t-trace} since the 
			function $(x_1,x_2)\mapsto e^{ix_1\xi}\Psi_{\xi,j}(x_2)$ is not square integrable on $\R_+^2$.
			However, we observe  that 
			\begin{equation}\label{eq.comm}
				\sigma_1 \chi f(\mathscr{D}_{\gamma})=\sigma_1 \chi (1+iX_1) f(\mathscr{D}_{\gamma})(1+iX_1)^{-1}+\sigma_1 \chi (1+iX_1) 
				\big[     (1+iX_1)^{-1},
				f(\mathscr{D}_{\gamma}) 
				\big]\,.
			\end{equation}
			The first operator  above  can be written as 
			\begin{equation}\label{com1}
				\begin{split}
					\sigma_1 & \chi (1+iX_1) f(\mathscr{D}_{\gamma})(1+iX_1)^{-1}\\
					&=\frac{1}{2\pi}\sum_{j\in J}\int_{\R} f(\lambda_j(\xi))\, \big |(1+iX_1)\chi\, e^{iX_1\xi}\Psi_{\xi,j} \big \rangle\, \big \langle (1-iX_1)^{-1}\, e^{iX_1\xi}\Psi_{\xi,j}\big |\, \dd\xi\,.
				\end{split}
			\end{equation}
			The second operator has  a commutator term $[\cdot,\cdot]$  which can be explicitly computed using the following identity: We get, by doing  partial integration,  
			\begin{align*}
				2\pi \, K_f(x,x')\big (1+i\, x_1'\big )&=\int_{\R_\xi}\dd\xi \, \Big ((1+ix_1-\partial_\xi)\, e^{i(x_1-x'_1)\xi}\Big ) k_f(\xi, x_2,x'_2)\\
				&=2\pi \, \big (1+i\, x_1\big )\, K_f(x,x')+\int_{\R}\dd\xi \,  e^{i(x_1-x'_1)\xi}\, \partial_\xi k_f(\xi, x_2,x'_2)\,.
			\end{align*}
			Therefore, we get as operators  on $L^2(\R^2_+)$
			\begin{equation}\label{com3}
				\begin{split}
					\sigma_1 \chi &(1+iX_1) 
					\big[     (1+iX_1)^{-1},
					f(\mathscr{D}_{\gamma}) 
					\big]\\&=\frac{1}{2\pi}\sum_{j\in J}\int_{\R} f'(\lambda_j(\xi))\, \lambda_j'(\xi)\, \big | \sigma_1\chi\, e^{iX_1\xi}\Psi_{\xi,j} \big \rangle\, \big \langle (1-iX_1)^{-1}\, e^{iX_1\xi}\Psi_{\xi,j}\big |\, \dd\xi\\
					&+\frac{1}{2\pi}\sum_{j\in J}\int_{\R} f(\lambda_j(\xi))\, \big |\sigma_1\chi\, e^{iX_1\xi}\big (\partial_\xi \Psi_{\xi,j}\big ) \big \rangle\, \big \langle (1-iX_1)^{-1}\, e^{iX_1\xi}\Psi_{\xi,j}\big |\, \dd\xi\\
					&+\frac{1}{2\pi}\sum_{j\in J}\int_{\R} f(\lambda_j(\xi))\, \big |\sigma_1 \chi \, e^{iX_1\xi} \Psi_{\xi,j} \big \rangle\, \big \langle
					(1-iX_1)^{-1}\, e^{iX_1\xi}\, \big (\partial_\xi\Psi_{\xi,j}\big )\big |\, \dd\xi\,.
				\end{split}
			\end{equation}
			Notice that thanks to Lemma~\ref{lem.truncLambda} the integrals above take place on a finite interval.
			Therefore, each of the four terms appearing in \eqref{com1} and \eqref{com3}  can be seen as Bochner integrals involving rank one operators in $L^2(\R^2_+)$ whose trace is uniformly bounded on compact sets. Hence,  Lemma~\ref{lemma.t-trace}
			can be applied to each of the terms involved in \eqref{eq.comm}. In particular, as a finite sum of trace class operators, $\sigma_1 \chi f(\mathscr{D}_{\gamma}) $ is trace class. A quick computation using  Lemma~\ref{lemma.t-trace} for each term in \eqref{com3} gives
			\begin{align*}
				{\rm Tr}\Big( \sigma_1 \chi (1+iX_1) 
				& \big[     (1+iX_1)^{-1},
				f(\mathscr{D}_{\gamma}) \big]\Big)\\&= 
				\frac{1}{2\pi}\sum_{j\in J}\int_{\R} \partial_\xi \Big(
				f(\lambda_j(\xi)) 
				\big \langle \sigma_1\chi \Psi_{\xi,j}, 
				(1-iX_1)^{-1}\,\Psi_{\xi,j}
				\big \rangle
				\Big)
				\, \dd\xi\,=0\,,
			\end{align*}
			where in the last step we used that the term $\partial_\xi(\dots)$ has compact support in $\xi$. Thus, we get 
			\begin{align*}
				&	\mathrm{Tr}\big(\chi\sigma_1f(\mathscr{D}_{\gamma})\big)=
				\mathrm{Tr}\big(
				\sigma_1 \chi (1+iX_1) f(\mathscr{D}_{\gamma})(1+iX_1)^{-1}
				\big)\\
				&=\frac{1}{2\pi}\sum_{j\in J}\int_{\R}f(\lambda_j(\xi)) \left\langle\sigma_1\Psi_{\xi,j},  \Psi_{\xi,j}\right\rangle_{L^2(\R_+,\C^2)} \dd\xi	\,.
			\end{align*}
			Now  the conclusion follows since $\left\langle\sigma_1\Psi_{\xi,j},  \Psi_{\xi,j}\right\rangle_{L^2(\R_+,\C^2)}=\left\langle\partial_\xi(\mathscr{D}_{\gamma,\xi})\Psi_{\xi,j}, \Psi_{\xi,j}\right\rangle_{L^2(\R_+,\C^2)}$ which, by the Feynman-Hellmann  theorem, equals $\lambda_j'(\xi)$. In particular, we get \eqref{hc2} by  writing $F'=f$ and integrating in $\xi$.
		\end{proof}

		\section{Energy dispersion curves}\label{sec.elem}
		We  start this section  by showing Propositions~\ref{prop.elementary} and \ref{prop.zz}. They state the basic properties of the 
		solutions of the eigenvalue problem, for $\xi\in \R$ and  $(b,\gamma)\in (0,+\infty)\times [0,+\infty]$
		\begin{align}\label{eq.e-v}
			\mathscr{D}_{\gamma,\xi} u=\lambda u \,.
		\end{align}
		We show that these solutions are related to a Schr\"odinger-like  problem with Robin boundary conditions.
  For zigzag boundary conditions this property is already clear from Proposition~\ref{prop.zz}. For $\gamma\in (0,+\infty)$ we establish this relation in   Lemma~\ref{lem.bij} below. 
  
  Moreover, we present in Theorem~\ref{prop.nupmlambda2} a characterization
		of the eigenvalues 
		$\vartheta_n^\pm(\gamma,\xi)$ in terms of a fixed-point problem that runs along a family of eigenvalues  
		$\big(\nu^\pm(\alpha,\xi)\big)_{\alpha>0}$
		of certain Schr\"odinger-like operators.   
		\subsection{Preliminaries}\label{sec.prel}
		Let us  investigate  some preliminary facts. (Throughout this paragraph we assume $b>0$.) The eigenvalue equation
		\eqref{eq.e-v}  can be rewritten as
		\begin{equation*}
			d_\xi^\ad u_1=\lambda u_2\,,\quad d_\xi u_2=\lambda u_1\,.
		\end{equation*}
		Then, we have $d_\xi d_\xi^\ad u_1=\lambda^2 u_1$ and $d^\ad_\xi d_\xi u_2=\lambda^2 u_2$. Moreover, from the classical theory of ODEs, we see that $u_1$ and $u_2$ are smooth on $[0,+\infty)$. Since $u_2(0)=\gamma u_1(0)$ (or $u_1(0)=0$ when $\gamma=+\infty$) we obtain Robin-type boundary conditions for $u_1$ and $u_2$, separately.  Thus,   \eqref{eq.e-v} implies 
		\begin{align}\label{eq.eve2}
			&d_\xi d_\xi^\ad u_1=\lambda^2 u_1\,\quad \big(d^\ad_\xi u_1(0)=\gamma\lambda u_1(0)\big )\,,\\\label{eq.eve2.1}
			& d^\ad_\xi d_\xi u_2=\lambda^2 u_2\,\quad\big(
			d_\xi u_2(0)= \tfrac{\lambda}{\gamma} u_2(0)\big)\,.
		\end{align}
		\begin{proof}[Proof of Proposition \ref{prop.elementary}]
			Let us consider the eigenvalue equations  \eqref{eq.eve2} and \eqref{eq.eve2.1}.
			From the standard theory of  initial value problems, we see that  $u_j$ belongs to a space of dimension at most $1$. Therefore, $\dim{\ker}\big (\mathscr{D}_{\gamma,\xi}-\lambda\big )\leq1$. This proves the simplicity of the non-zero eigenvalues.
			
			Let us now discuss the existence of zero modes. For $\lambda=0$, we have $ d_\xi u_2=0$ so that $u_2$ is proportional to $e^{\frac{1}{2b}(\xi+bx_2)^2}$, which is not in $L^2(\R_+)$ implying that  $u_2=0$ holds. Moreover, we also check that $\gamma u_1(0)=0$. 
			Using   $d_\xi^\ad u_1=0$ we see that $u_1$ is proportional to $e^{-\frac{1}{2b}(\xi+bx_2)^2}$,  which belongs to  $L^2(\R_+)$ but
			it does not vanish at $x_2=0$. Therefore,  we find that $u_1=0$ unless  $\gamma= 0$.
			
			The family  $(\mathscr{D}_{\gamma,\xi})_{\xi\in\R}$ being analytic of type $(A)$ (in the Kato sense, see \cite{ReedSimon1978}), the simplicity of the eigenvalues implies their analyticity.	
		\end{proof}
		Next we discuss  the zigzag operators \emph{i.e.} the cases  $\gamma\in\{0,+\infty\}$. 
		\begin{proof}[Proof of Proposition~\ref{prop.zz}]
			The fact that the eigenvalues are symmetric with respect to zero follows from  \eqref{eq.symfiber}, hence, we may look at the non-negative ones only. 
			
			Let us consider the case $\gamma=0$ \emph{i.e.} $u_2(0)=0$. As we have just seen, we have a zero mode and, with our convention, we have $\vartheta^+_1(0,\xi)=0$. Let us describe the non-zero eigenvalues. Let $\lambda$ be a positive eigenvalue and $u$ a corresponding eigenfunction. In view of \eqref{eq.eve2}, we see that $u_2$ cannot be $0$ and it is an eigenfunction of $$d_\xi^\ad d_\xi=-\partial_2^2+(\xi+bx_2)^2+b\,,$$ with Dirichlet condition. In particular, $\lambda^2$ belongs to the spectrum of $d_\xi^\ad d_\xi$ with Dirichlet condition. Conversely, if $\mu>0$ is an eigenvalue of this operator, we write $d_\xi^\ad d_\xi v=\mu v$ with $v(0)=0$ and we let $u=\mu^{-\frac12}d_\xi v \,\,(\neq 0)$ and we have
			\[d_\xi^\ad  u=\sqrt{\mu} v\,,\quad d_\xi v=\sqrt{\mu} u\,,\quad v(0)=0\,,\]
			which means that $\sqrt{\mu}$ is an eigenvalue of $\mathscr{D}_{0,\xi}$.
			
			The case $\gamma=+\infty$ is quite similar although we have no zero modes. Now, $u_1$ is an eigenfunction (with eigenvalue $\lambda^2$) of $$d_\xi d_\xi^\ad=-\partial_2^2+(\xi+bx_2)^2-b\,,$$ with Dirichlet condition. Conversely, consider an eigenvalue $\mu>0$ of this operator. Proceeding as before,  we write $d_\xi d^\ad_\xi u=\mu u$ with $u(0)=0$  and let $v=\mu^{-\frac12}d^\ad _\xi u\,\,(\neq 0)$ to get that  $\sqrt{\mu}$ is an eigenvalue of $\mathscr{D}_{\infty,\xi}$.
		\end{proof}

		\subsection{A characterization of the eigenvalues for the non-zigzag case}\label{sec.cara}
		We consider $\gamma\in (0,+\infty)$. 
		Let $\lambda\neq 0$ be an eigenvalue of $\mathscr{D}_{\gamma,\xi}(b)$. 
		Multiplying \eqref{eq.eve2} by $u_1$ and integrating by parts yields
		\[\langle d_\xi d_\xi^\ad  u_1,u_1\rangle=\|d^\ad _\xi u_1\|^2+d^\ad _\xi u_1(0)u_1(0)=\|d^\ad _\xi u_1\|^2+\lambda u_2(0)u_1(0)=\|d^\ad _\xi u_1\|^2+\lambda \gamma u^2_1(0)\,.\]
		Moreover, proceeding analogously for  the second component in  \eqref{eq.eve2.1} we get 
		\[\langle d_\xi^\ad  d_\xi u_2,u_2\rangle=\|d_\xi u_2\|^2-d_\xi u_2(0)u_2(0)
		=\|d_\xi u_2\|^2-\lambda u_1(0)u_2(0)=\|d_\xi u_2\|^2-\tfrac{\lambda}{\gamma} u^2_1(0)\,.\]
		This suggests to introduce the following
		family of quadratic forms. 
		\begin{definition}
			Let $\alpha>0$.  We define the auxiliary quadratic forms,  for  $u\in B^1(\R_+)$,  as
			\begin{equation}\label{def.qfalpha}
				\begin{split}
					q^+_{b,\alpha,\xi}(u)=\|d_\xi^\ad  u\|^2+\alpha u^2(0)\,,\\
					q^-_{b,\alpha,\xi}(u)=\|d_\xi u\|^2+\alpha u^2(0)\,.
				\end{split}
			\end{equation}
			They are both non-negative and closed.
			We denote by $\mathfrak{h}^{\pm}_{\alpha,\xi}$ the corresponding 
			self-adjoint Schr\"odinger operators. 
		\end{definition}
		\begin{remark}\label{rem.fried}
			By Friedrichs' extension theorem we have  that 
			$\mathrm{Dom}(\mathfrak{h}^\pm_{\alpha,\xi})\subset B^1(\R_+)$ and  for $u^\pm\in {\rm Dom}(\mathfrak{h}^\pm_{\alpha ,\xi})$ 
			\begin{align}
				&\mathfrak{h}^+_{\alpha,\xi} u^+= d_\xi d_\xi^\ad u^+\,, \quad 
				d^\ad_\xi u^+(0)=\alpha u^+(0)\,,\label{eq.dd1}\\
				&\mathfrak{h}^-_{\alpha,\xi} u^-
				=d_\xi^\ad d_\xi u^-\,,\quad 
				d_\xi u^-(0)=-\alpha u^-(0)\label{eq.dd2}\,.
			\end{align}
		\end{remark}
		\begin{remark}
			Integration by parts yields
			\begin{equation}
				\begin{split}
					q^+_{b,\alpha,\xi}(u)=\|u'\|^2+\|(\xi+b x_2)u\|^2-b\|u\|^2+(\alpha-\xi)u^2(0)\,,\\
					q^-_{b,\alpha,\xi}(u)=\|u'\|^2+\|(\xi+b x_2)u\|^2+b\|u\|^2+(\alpha+\xi)u^2(0)\,.
				\end{split}
			\end{equation}
			We also observe that  
   \begin{equation}\label{jmb1}
   q^-_{b,\alpha,\xi}=q^+_{-b,\alpha,-\xi},
   \end{equation}
   which reflects the first relation in \eqref{eq.symfiber}. In what follows, we drop the reference to $b$ in the notation.
		\end{remark}
		In relation  to our problem we see that, for $u=(u_1,u_2)$ an eigenfunction of $\mathscr{D}_{\gamma,\xi}$, we have
		\begin{align*}
			&q^+_{\lambda \gamma,\xi}(u_1)=\langle u_1, d_\xi d_\xi^\ad u_1\rangle =\lambda^2\|u_1\|^2\,,\quad \mbox{for $\lambda>0\quad$and,}\\
			&q^-_{-\lambda\gamma^{-1},\xi}(u_2)=\langle u_2, d_\xi^\ad d_\xi u_2\rangle =\lambda^2\|u_2\|^2\,,\quad 
			\mbox{for $\lambda<0\,.$}\
		\end{align*}
		Next, we describe a bijection existing between  the kernels of $\mathscr{D}_{\gamma,\xi}-\lambda$ and  $\mathfrak{h}^+_{\gamma\lambda,\xi}-{\lambda^2}$ provided $\gamma\lambda>0$.	If $\gamma\lambda<0$ analogous statements can be obtained for 
		$\mathfrak{h}^-_{-\lambda/\gamma,\xi}-{\lambda^2}$.
		The following lemma is a straightforward adaptation of \cite[Proposition 2.9]{barbaroux:hal-02889558}. We recall its proof for the convenience of the reader and we emphasize that it does not require sign assumptions on $b$ and $\gamma$.
		\begin{lemma}\label{lem.bij}
			Let $(b,\gamma)\in\mathbb{R}^2$, $\xi\in\R$. Then, for any 
			$\lambda\in\mathbb{R}\setminus\{0\}$ with $\gamma\lambda>0$, the map \[\mathscr{J} : \ker(\mathfrak{h}^+_{\gamma\lambda,\xi}-\lambda^2)\ni u\mapsto (u,\lambda^{-1}d^\ad_{\xi} u)\in\ker(\mathscr{D}_{\gamma,\xi}-\lambda)\] 
			is well-defined and it is an isomorphism.	
		\end{lemma}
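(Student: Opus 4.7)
The plan is to verify directly that $\mathscr{J}$ takes values where it should and then to exhibit an explicit inverse, reducing everything to a pair of algebraic computations on the level of differential operators.

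First, for well-definedness, I would take $u\in\ker(\mathfrak{h}^+_{\gamma\lambda,\xi}-\lambda^2)$. By Remark~\ref{rem.fried}, $u\in B^1(\R_+)$ satisfies $d_\xi d_\xi^\ad u=\lambda^2 u$ together with the Robin boundary condition $d_\xi^\ad u(0)=\gamma\lambda u(0)$. Setting $v:=\lambda^{-1}d_\xi^\ad u$, the identity
\[
\mathscr{D}_{\gamma,\xi}\begin{pmatrix}u\\v\end{pmatrix}=\begin{pmatrix}d_\xi v\\ d_\xi^\ad u\end{pmatrix}=\begin{pmatrix}\lambda^{-1}d_\xi d_\xi^\ad u\\ \lambda v\end{pmatrix}=\lambda\begin{pmatrix}u\\v\end{pmatrix}
\]
together with the trace computation $v(0)=\lambda^{-1}d_\xi^\ad u(0)=\gamma u(0)$ yield simultaneously the eigenvalue equation and the Dirac boundary condition in \eqref{bound.cond}. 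What remains is to check that $(u,v)\in\mathrm{Dom}(\mathscr{D}_{\gamma,\xi})$, \emph{i.e.} that $v\in H^1(\R_+)$ with $x_2 v\in L^2$. I would handle this by rewriting $d_\xi d_\xi^\ad u=-\partial_2^2 u+(\xi+bx_2)^2u-bu=\lambda^2 u$ and using $u\in B^1(\R_+)$ with $\lambda^2 u\in L^2$ to bootstrap, via a one-dimensional ODE regularity argument, to $u$ being in the appropriate weighted $H^2$ space, whence $v=\lambda^{-1}d_\xi^\ad u$ lies in $B^1(\R_+)$ as required.

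For bijectivity, injectivity of $\mathscr{J}$ is immediate since the first component of $\mathscr{J}(u)$ equals $u$. For surjectivity, I take $(u_1,u_2)\in\ker(\mathscr{D}_{\gamma,\xi}-\lambda)$, which forces $d_\xi^\ad u_1=\lambda u_2$ and $d_\xi u_2=\lambda u_1$. Hence $u_2=\lambda^{-1}d_\xi^\ad u_1$ is automatic, and composition yields $d_\xi d_\xi^\ad u_1=\lambda^2 u_1$. The Dirac boundary condition $u_2(0)=\gamma u_1(0)$ rewrites as $d_\xi^\ad u_1(0)=\gamma\lambda u_1(0)$, which is exactly the Robin condition defining $\mathfrak{h}^+_{\gamma\lambda,\xi}$ with $\alpha=\gamma\lambda$. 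Combined with $u_1\in B^1(\R_+)$ (inherited from $\mathrm{Dom}(\mathscr{D}_{\gamma,\xi})$), the same ODE bootstrap as above places $u_1$ in $\mathrm{Dom}(\mathfrak{h}^+_{\gamma\lambda,\xi})\cap\ker(\mathfrak{h}^+_{\gamma\lambda,\xi}-\lambda^2)$, and by construction $\mathscr{J}(u_1)=(u_1,u_2)$.

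The only non-algebraic step is the domain bookkeeping: ensuring that each function produced by the correspondence has the regularity required to lie in the relevant operator domain. I expect this to be the main but essentially routine obstacle, and I anticipate that the regularity arguments from \cite[Proposition 2.9]{barbaroux:hal-02889558} carry over verbatim, since neither the algebraic identities above nor the underlying one-dimensional elliptic estimates depend on a sign assumption on $b$ or $\gamma$.
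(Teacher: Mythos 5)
Your algebraic computations are correct and would, in fact, be the natural first thing to try, but the overall strategy differs from the paper's in a way that matters. The paper avoids all direct domain/regularity verification by arguing at the weak level: it shows that $\langle \mathscr{J}(u),(\mathscr{D}_{\gamma,\xi}-\lambda)v\rangle=0$ for every $v\in\mathrm{Dom}(\mathscr{D}_{\gamma,\xi})$ by an integration by parts that reduces the inner product to $\lambda^{-1}\big(q^+_{\gamma\lambda,\xi}(u,v_1)-\lambda^2\langle u,v_1\rangle\big)=0$, and then concludes from the self-adjointness of $\mathscr{D}_{\gamma,\xi}$ that $\mathscr{J}(u)\in\ker(\mathscr{D}_{\gamma,\xi}-\lambda)$, since the kernel is the orthogonal complement of the range. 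No weighted elliptic estimates are ever invoked; $\mathscr{J}(u)$ is placed in the operator domain as a byproduct. Likewise for surjectivity, one checks the weak identity $q^+_{\gamma\lambda,\xi}(u_1,v)=\lambda^2\langle u_1,v\rangle$ for all $v\in B^1(\R_+)$, which by the characterization of the Friedrichs extension puts $u_1$ in $\mathrm{Dom}(\mathfrak{h}^+_{\gamma\lambda,\xi})$ directly.

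By contrast, your route requires verifying explicitly that $v:=\lambda^{-1}d_\xi^\ad u$ lies in $H^1(\R_+)$ with $x_2v\in L^2$. That is genuinely not routine from the hypotheses you have: $u\in B^1(\R_+)$ gives $u',x_2u\in L^2$, but writing $x_2 v=\lambda^{-1}(x_2u'+\xi x_2u+bx_2^2u)$ shows you need $x_2u'\in L^2$ and $x_2^2u\in L^2$, i.e.\ a weighted $H^2$ bound, and the domain of the Friedrichs extension is not \emph{a priori} contained in such a space. Establishing this would require a separate Agmon-type/commutator estimate for the Robin problem. Your citation of \cite[Proposition~2.9]{barbaroux:hal-02889558} as a "verbatim" source for these estimates is also misdirected: that proposition is proved by exactly the weak-formulation argument, not by a regularity bootstrap, so it does not supply the missing estimates. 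In short, the gap you flag as "essentially routine" is the real content of the lemma in your approach, whereas the paper's argument is designed precisely to avoid it.
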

		
		\begin{proof}
			First, let $u\in \ker(\mathfrak{h}^+_{\gamma\lambda,\xi}-\lambda^2)$ and $v\in  \mathrm{Dom}(\mathscr{D}_{\gamma,\xi}-\lambda)$. We have
			\[\langle \mathscr{J}(u), (\mathscr{D}_{\gamma,\xi}-\lambda)v\rangle= \langle u, d_\xi v_2-\lambda v_1\rangle+\lambda^{-1}\langle d_\xi^\ad   u,d^\ad_\xi v_1-\lambda v_2\rangle\,,\]
			so that, by integrating by parts,
			\[\langle \mathscr{J}(u), (\mathscr{D}_{\gamma,\xi}-\lambda)v\rangle= \lambda^{-1}(q^+_{\gamma\lambda,\xi}(u,v_1)-\lambda^2\langle u,v_1\rangle)=0\,,\]
			where we used that $u\in  \ker(\mathfrak{h}^+_{\gamma\lambda,\xi}-\lambda^2)$. Thus, $\mathscr{J}$ is well defined. It is also injective from the very definition. For the surjectivity, we consider $(u_1,u_2)\in\ker(\mathscr{D}_{\gamma,\xi}-\lambda)$. We have
			\[d_\xi u_2=\lambda u_1\,,\quad d_\xi^\ad u_1=\lambda u_2\,.\]
			We only have to check that $u_1\in  \ker(\mathfrak{h}^+_{\gamma\lambda,\xi}-\lambda^2)$. Take $v\in B^1(\R_+)$ and notice that
			\[\begin{split}
				q^+_{\gamma\lambda,\xi}(u_1,v)-\lambda^2\langle u_1,v\rangle&=\langle d_\xi^\ad u_1,d_\xi^\ad v\rangle+\gamma\lambda u_1(0)v(0)-\lambda^2\langle u_1,v\rangle\\
				&=\lambda\langle  u_2,d_\xi^\ad v\rangle+\gamma\lambda u_1(0)v(0)-\lambda^2\langle u_1,v\rangle\\
				&=\lambda\langle d_\xi u_2, v\rangle-\lambda^2\langle u_1,v\rangle\\
				&=0\,.
			\end{split}\]
			This finishes the proof.
		\end{proof}
		Let us now turn to the characterization. 
		Since the family  $(q^\pm_{\alpha,\xi})_{(\alpha,\xi)\in\R_+\times\R}$ is analytic  on the  common domain $B^1(\R_+)$ the  eigenvalues of $\mathfrak{h}^\pm_{\alpha,\xi}$ (which are all simple) are also real analytic with respect to $\alpha$ and to $\xi$. We denote them by $\nu^\pm_n(\alpha,\xi)$ so that
		$$
		0\le \nu^\pm_1(\alpha,\xi)< \nu^\pm_2(\alpha,\xi)< \dots
		$$
		The following result completely characterizes positive and negative eigenvalues of $\mathscr{D}_{\gamma,\xi}$ in terms of $\alpha\mapsto\nu^\pm_n(\alpha,\xi)$. 
		Recall the notation in \eqref{eq.not}.
		\begin{theorem}\label{prop.nupmlambda2}
			Let $b\neq 0$ and $\gamma\in(0,+\infty)$. The equation
			\begin{equation}\label{eq.nu+=lambda2}
				\nu^+_n(\gamma\lambda,\xi)=\lambda^2
			\end{equation}
			has a unique positive solution $\lambda=\vartheta^+_n(\gamma,\xi)$.
			Moreover,  the equation
			\begin{equation}\label{eq.nu-=lambda2}	
				\nu^-_n(\gamma^{-1}\lambda,\xi)=\lambda^2
			\end{equation}
			has a unique positive solution $\lambda=\vartheta^-_n(\gamma,\xi)$.
		\end{theorem}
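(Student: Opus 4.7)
The plan is to recast each equation as a fixed-point problem $F_n^\pm(\lambda):=\lambda^2-\nu_n^\pm(c_\pm\lambda,\xi)=0$ on $(0,+\infty)$, with $c_+=\gamma$ and $c_-=\gamma^{-1}$, and to show (i) existence of a positive zero by IVT, (ii) strict positivity $(F_n^\pm)'(\lambda_0)>0$ at every zero---which yields uniqueness---via Feynman--Hellmann combined with Lemma~\ref{lem.bij}, and (iii) that the unique zeros, indexed by $n$, enumerate the positive (resp.\ absolute value of the negative) part of the spectrum of $\mathscr{D}_{\gamma,\xi}$ in increasing order, matching $\vartheta_n^\pm(\gamma,\xi)$. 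The negative branch \eqref{eq.nu-=lambda2} is handled via the twin isomorphism $u\mapsto(-\lambda^{-1}d_\xi u,u)$ between $\ker(\mathfrak{h}^-_{\gamma^{-1}\lambda,\xi}-\lambda^2)$ and $\ker(\mathscr{D}_{\gamma,\xi}+\lambda)$ (proved by the same computation as Lemma~\ref{lem.bij}, or deduced from the symmetries \eqref{eq.symfiber}--\eqref{jmb1}); I therefore focus the discussion on the $+$ case.

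For the preliminaries, min--max on \eqref{def.qfalpha} shows that $\alpha\mapsto\nu_n^\pm(\alpha,\xi)$ is continuous, non-decreasing, and real-analytic (simplicity of one-dimensional Sturm--Liouville eigenvalues). Testing against subspaces spanned by the first $n$ Dirichlet eigenfunctions of $d_\xi d_\xi^\ad$ (resp.\ $d_\xi^\ad d_\xi$) yields the uniform upper bound $\nu_n^{\mathrm{Dir}}(b,\xi)\mp 2b$, so $F_n^\pm(\lambda)\to+\infty$ as $\lambda\to+\infty$. At $\alpha=0$, solving $d_\xi^\ad u=0$ and $d_\xi u=0$ in $L^2(\R_+)$ gives $\nu_1^+(0,\xi)=0$, $\nu_n^+(0,\xi)>0$ for $n\geq 2$, and $\nu_n^-(0,\xi)>0$ for all $n$ (since the formal kernel of $d_\xi$ escapes $L^2$ for $b>0$). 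Hence $F_n^\pm(0)\leq 0$; in the borderline case $F_1^+(0)=0$, Feynman--Hellmann gives $(F_1^+)'(0^+)=-\gamma\, u_1(0,\xi;0)^2<0$---the zero mode does not vanish at the edge, as used in the proof of Proposition~\ref{prop.elementary}---so $F_1^+$ dips below zero and IVT still produces a positive zero.

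The crucial step is uniqueness, since $F_n^+$ is not \emph{a priori} monotone or convex. Let $\lambda_0>0$ satisfy $F_n^+(\lambda_0)=0$ and let $u$ be the associated normalized eigenfunction of $\mathfrak{h}^+_{\gamma\lambda_0,\xi}$ for $\lambda_0^2$. Feynman--Hellmann gives $\partial_\alpha\nu_n^+(\gamma\lambda_0,\xi)=u(0)^2$, hence $(F_n^+)'(\lambda_0)=2\lambda_0-\gamma u(0)^2$. Applying the isomorphism $\mathscr{J}$ of Lemma~\ref{lem.bij} produces the non-zero vector $\mathscr{J}(u)=(u,\lambda_0^{-1}d_\xi^\ad u)\in\ker(\mathscr{D}_{\gamma,\xi}-\lambda_0)$; substituting $\|d_\xi^\ad u\|^2=q^+_{\gamma\lambda_0,\xi}(u)-\gamma\lambda_0 u(0)^2=\lambda_0^2-\gamma\lambda_0 u(0)^2$ gives
\[
\|\mathscr{J}(u)\|^2=2-\frac{\gamma\,u(0)^2}{\lambda_0}>0,
\]
whence $\gamma u(0)^2<2\lambda_0$ and thus $(F_n^+)'(\lambda_0)>0$. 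Strict positivity of the derivative at every zero rules out two zeros: a continuity/sandwich argument forces an intermediate zero at which the derivative would be non-positive, a contradiction. Finally, ordering is immediate since $\nu_{n+1}^+>\nu_n^+$ strictly gives $F_{n+1}^+(\lambda_n)<0$, so $\lambda_n<\lambda_{n+1}$, and by Lemma~\ref{lem.bij} the strictly increasing sequence $(\lambda_n)_n$ enumerates the positive spectrum of $\mathscr{D}_{\gamma,\xi}$, identifying $\lambda_n=\vartheta_n^+(\gamma,\xi)$. The main obstacle is precisely the uniqueness step; the saving device is that the $L^2$-norm of the lifted Dirac eigenfunction yields the sharp bound $\gamma u(0)^2<2\lambda_0$ on the boundary value of the Schr\"odinger eigenfunction, which is exactly what Feynman--Hellmann requires.
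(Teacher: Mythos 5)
Your proposal is correct and follows essentially the same architecture as the paper's proof: reduction to the $+$ branch via the symmetry \eqref{eq.symfiber}--\eqref{jmb1}, existence of a positive zero by the intermediate value theorem together with the asymptotics of $\nu_n^+$, strict one-sided sign of the derivative at any zero (via Feynman--Hellmann for $\partial_\alpha\nu_n^+$) to force uniqueness, and then Lemma~\ref{lem.bij} to enumerate the positive spectrum and match $E_n$ with $\vartheta_n^+(\gamma,\xi)$. The one genuine technical variation is in how the sign of the derivative at a zero $\lambda_0$ is obtained: the paper introduces the quadratic polynomial $P(\lambda)=q^+_{\gamma\lambda,\xi}(u_{\gamma\lambda_0,\xi})-\lambda^2$ and reads off $f'(\lambda_0)=P'(\lambda_0)<0$ from the location of the roots of a downward parabola with $P(0)\geq 0$, whereas you compute $\|\mathscr{J}(u)\|^2=2-\gamma u(0)^2/\lambda_0$ and argue from its positivity. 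These are two presentations of the same inequality: both reduce to $\|d_\xi^\ad u\|^2\geq 0$, which is exactly $P(0)\geq 0$ in the paper's language and gives the sharper bound $\gamma u(0)^2\leq\lambda_0$. In fact the chain through $\|\mathscr{J}(u)\|^2>0$ is slightly circuitous, since $\|\mathscr{J}(u)\|^2=1+\lambda_0^{-2}\|d_\xi^\ad u\|^2\geq 1$ is trivially positive; the content is entirely in the identity $\|d_\xi^\ad u\|^2=\lambda_0^2-\gamma\lambda_0 u(0)^2$, which you do state. Two minor inaccuracies that do not affect the logic: the uniform upper bound for the $-$ branch from Dirichlet test functions is $\nu_n^{\mathrm{Dir}}(b,\xi)$ rather than $\nu_n^{\mathrm{Dir}}(b,\xi)+2b$ (the Dirichlet eigenvalues of $d_\xi^\ad d_\xi$ are $\nu_n^{\mathrm{Dir}}(b,\xi)$ directly; compare \eqref{eq.111}); and your uniqueness bullet can be streamlined to $\gamma u(0)^2\leq\lambda_0<2\lambda_0$ without invoking the norm of $\mathscr{J}(u)$ at all.
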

  The proof of this theorem uses Lemma~\ref{lem.bij} and  requires the analysis of the auxiliary quadratic forms performed in  Section~\ref{sec.main-dispersion}; we postpone it to Section~\ref{sec.proofs}. 
		\section{The auxiliary quadratic forms}\label{sec.main-dispersion}
		In this section we perform a detailed study of the auxiliary quadratic forms from Definition~\ref{def.qfalpha}. 
		We restrict the analysis  to the case in which $(b,\gamma)\in(0,+\infty)\times(0,+\infty)$.
		
		For $\alpha>0$  and $\xi\in \R$ consider the eigenvalue problems
		(recall  Remark~\ref{rem.fried})
		\begin{align}\label{eq.eve}
			\mathfrak{h}^\pm _{\alpha,\xi} u^\pm_{\alpha,\xi}=\nu^\pm(\alpha,\xi) u^\pm_{\alpha,\xi}\,.
		\end{align}
		Most of the  following  results can be traced back to \cite{barbaroux:hal-02889558} (notice, however, the different convention for $q^-_{\alpha,\xi}$). For the sake of completeness we present  a concise argument. 
		\subsection{Study of $\alpha\mapsto \nu_n(\alpha,\xi)$}
		\begin{lemma}\label{lem.qinalpha}
			Let $\nu^{\pm}(\alpha,\xi)$ be an eigenvalue as in \eqref{eq.eve}. Then, 	the function  $\alpha\mapsto\nu^{\pm}(\alpha,\xi)$ is increasing
			and
			\begin{equation}\label{eq.dalphanu}
				\partial_\alpha\nu^\pm(\alpha,\xi)=(u_{\alpha,\xi}^\pm(0))^2>0\,.
			\end{equation}
		\end{lemma}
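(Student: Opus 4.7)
The plan is to apply analytic perturbation theory combined with a Feynman--Hellmann type identity. Observe first that the family of quadratic forms $\alpha \mapsto q^\pm_{\alpha,\xi}$ depends affinely on $\alpha$ on the common form domain $B^1(\R_+)$ (the added term is just $\alpha u(0)^2$, and the boundary trace $u \mapsto u(0)$ is continuous on $B^1(\R_+) \subset H^1(\R_+)$, hence relatively form bounded with relative bound $0$). It is therefore a holomorphic family of type (B) in the sense of Kato, and since each eigenvalue $\nu^\pm_n(\alpha,\xi)$ is simple (recall the arguments used in Proposition~\ref{prop.elementary}, which apply as well to the separated Schr\"odinger-like problems), standard perturbation theory yields a real-analytic branch $\alpha \mapsto \nu^\pm(\alpha,\xi)$ together with an analytic choice of $L^2$-normalized eigenfunctions $u^\pm_{\alpha,\xi}$.

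Next I would compute the derivative. Evaluating the form on the eigenfunction gives
\begin{equation*}
\nu^\pm(\alpha,\xi) \;=\; q^\pm_{\alpha,\xi}(u^\pm_{\alpha,\xi}) \;=\; q^\pm_{0,\xi}(u^\pm_{\alpha,\xi}) + \alpha \, (u^\pm_{\alpha,\xi}(0))^2,
\end{equation*}
which is an identity between analytic functions of $\alpha$. Differentiating and using the eigenvalue equation in its weak (sesquilinear) form $q^\pm_{\alpha,\xi}(\partial_\alpha u^\pm_{\alpha,\xi}, u^\pm_{\alpha,\xi}) = \nu^\pm(\alpha,\xi) \, \langle \partial_\alpha u^\pm_{\alpha,\xi}, u^\pm_{\alpha,\xi}\rangle$ together with the normalization condition $\|u^\pm_{\alpha,\xi}\|^2 \equiv 1$ (which forces $2\,\mathrm{Re}\,\langle \partial_\alpha u^\pm_{\alpha,\xi}, u^\pm_{\alpha,\xi}\rangle = 0$), the terms involving $\partial_\alpha u^\pm_{\alpha,\xi}$ cancel. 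This yields precisely
\begin{equation*}
\partial_\alpha \nu^\pm(\alpha,\xi) \;=\; (u^\pm_{\alpha,\xi}(0))^2,
\end{equation*}
which is the desired formula \eqref{eq.dalphanu}.

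It remains to show strict positivity, namely $u^\pm_{\alpha,\xi}(0) \neq 0$. I would argue by contradiction: assume $u^+_{\alpha,\xi}(0) = 0$. By Remark~\ref{rem.fried} the eigenfunction satisfies the Robin condition $d^\dagger_\xi u^+_{\alpha,\xi}(0) = \alpha \, u^+_{\alpha,\xi}(0) = 0$, and since $d_\xi^\dagger = \xi + \partial_2 + bx_2$ evaluates at the boundary to $\xi u^+_{\alpha,\xi}(0) + (u^+_{\alpha,\xi})'(0)$, we also get $(u^+_{\alpha,\xi})'(0) = 0$. But $u^+_{\alpha,\xi}$ solves the second-order linear ODE $d_\xi d_\xi^\dagger u = \nu^+(\alpha,\xi)\, u$ on $[0,+\infty)$, so by uniqueness of the initial value problem $u^+_{\alpha,\xi} \equiv 0$, contradicting that it is a (normalized) eigenfunction. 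The case of $u^-_{\alpha,\xi}$ is identical using the Robin condition $d_\xi u^-(0) = -\alpha u^-(0)$. Monotonicity of $\alpha \mapsto \nu^\pm(\alpha,\xi)$ then follows by integrating the strictly positive derivative along any interval in $(0,+\infty)$.

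The only mildly delicate point is justifying the exchange of differentiation with the quadratic form (that $\partial_\alpha u^\pm_{\alpha,\xi}$ lies in the form domain and that the sesquilinear identity holds), but this is automatic from the type (B) perturbation framework since both $u^\pm_{\alpha,\xi}$ and its $\alpha$-derivative are analytic $B^1(\R_+)$-valued maps.
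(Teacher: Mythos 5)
Your proof is correct and rests on the same core idea as the paper's — a Feynman--Hellmann computation of $\partial_\alpha\nu^\pm$ — but you carry it out at the level of the quadratic form rather than the operator. The paper differentiates the strong eigenvalue equation $(dd^\ad-\nu)\partial_\alpha u = (\partial_\alpha\nu)u$, pairs with $u$, and then extracts $u(0)^2$ from an explicit integration-by-parts identity combined with the Robin condition $d^\ad u(0)=\alpha u(0)$ and its $\alpha$-derivative, so it must track boundary terms by hand. You instead use that $q^\pm_{\alpha,\xi}$ is affine in $\alpha$ on the fixed form domain $B^1(\R_+)$, so differentiating $\nu=q^\pm_{\alpha,\xi}(u_{\alpha,\xi})$ produces $(u(0))^2$ plus cross terms $2\,q^\pm_{\alpha,\xi}(\partial_\alpha u,u)$, which cancel against $\nu\,\partial_\alpha\|u\|^2=0$ by the weak eigenvalue equation and normalization. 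The two computations are equivalent in content; your version sidesteps the boundary-term bookkeeping and is a bit cleaner. You also supply a detail the paper's proof leaves implicit, namely the strict inequality $(u^\pm_{\alpha,\xi}(0))^2>0$: the paper derives the formula $\partial_\alpha\nu=u(0)^2$ but does not argue that $u(0)\neq 0$, whereas your observation that $u(0)=0$ would, via the Robin condition, force $u'(0)=0$ and hence $u\equiv 0$ by uniqueness for the second-order ODE, closes this point cleanly.
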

		\begin{proof}
			We only argue for the $+$ case. To simplify notation, we denote the corresponding normalized solution of \eqref{eq.eve} as $u \,\,(\equiv u^{+}_{\alpha,\xi})$ and we drop the reference to $\xi$ and $\alpha$ when not relevant.
			
			Let us first observe that for any smooth function $g$ on $[0,\infty)$ we have, integrating by parts,
			\begin{equation}\label{eq.intbyparts}
				\begin{split}
					\langle u,( d d^\ad -\nu) g\rangle
					&=
					\langle d^\ad u,  d^\ad g\rangle+u(0) (d^\ad g)(0)-\nu \langle u,  g\rangle\\
					&=
					u(0) (d^\ad g)(0)
					-( d^\ad u)(0)  g(0)\,.
				\end{split}
			\end{equation}
			In view of the smoothness of $u$ with respect to $\alpha$ and $x$, we see that $d^\ad \partial_\alpha u=d^\ad\partial_\alpha u$. Hence, since $d^\ad u(0)=\alpha u(0)$, we get that  
			\begin{align}\label{eq.dalpha}
				(d^\ad\partial_\alpha u)(0)=u(0)+\alpha \partial_\alpha u(0)\,.
			\end{align}	
			Taking derivative with respect to $\alpha$ in the  eigenvalue equation we get
			\begin{align*}
				(d d^\ad -\nu)\partial_\alpha u=(\partial_\alpha \nu) u\,.
			\end{align*}
			After multiplying by $u$ and  integrating  we use \eqref{eq.intbyparts} to get
			\begin{align*}
				\partial_\alpha \nu&=\langle u, d d^\ad \partial_\alpha u\rangle
				-\nu \langle u,  \partial_\alpha u\rangle=
				u(0) (d^\ad\partial_\alpha u)(0)-( d^\ad u)(0)  (\partial_\alpha u)(0)\,.
			\end{align*}
			The conclusion follows  using \eqref{eq.dd1} and \eqref{eq.dalpha}.
		\end{proof}
		\begin{lemma}\label{lem.q-a0}
			For all $n\geq 1$, we have
			\begin{align}\label{lim.q-}
				&	\lim_{\alpha\to 0}\nu^-_n(\alpha,\xi)=\nu_n^{-}(0,\xi)=\nu^{\mathrm{Dir}}_n(-b,-\xi)\,\\
				& 
				\lim_{\alpha\to+\infty}\nu^-_n(\alpha,\xi)=\nu_n^{\mathrm{Dir}}(b,\xi)\,.
				\label{eq.111}	\end{align}
		\end{lemma}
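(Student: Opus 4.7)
The plan is to combine the monotonicity of $\alpha \mapsto \nu_n^-(\alpha,\xi)$ established in Lemma~\ref{lem.qinalpha} with the monotone convergence theorem for closed quadratic forms, and then to use a SUSY-type factorization to identify the $\alpha = 0$ limit with the announced Dirichlet eigenvalue.

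\textbf{Limit $\alpha \to +\infty$.} The family $(q^-_{\alpha,\xi})_{\alpha > 0}$ is a monotonically increasing family of closed non-negative quadratic forms on the common form domain $B^1(\R_+)$. As $\alpha \to +\infty$, it converges monotonically to the closed form $q^{-,\infty}_\xi(u) = \|d_\xi u\|^2$ restricted to $\{u \in B^1(\R_+) : u(0) = 0\}$ (and set to $+\infty$ otherwise). Because $d_\xi^\ad d_\xi = -\partial_2^2 + (\xi + bx_2)^2 + b = \mathscr{H}_\xi(b)$, this limit form is exactly the Dirichlet form of $\mathscr{H}_\xi(b)$. By the monotone convergence theorem for forms, the associated self-adjoint operators converge in strong resolvent sense; since the spectra are purely discrete and each $\nu_n^-(\alpha,\xi)$ is bounded above by $\nu_n^{\mathrm{Dir}}(b,\xi)$ (testing with Dirichlet eigenfunctions, which are not penalized), one concludes that $\nu_n^-(\alpha,\xi) \to \nu_n^{\mathrm{Dir}}(b,\xi)$.

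\textbf{Limit $\alpha \to 0$.} Now $(q^-_{\alpha,\xi})_{\alpha > 0}$ is monotonically decreasing and converges, as $\alpha \to 0$, to the non-negative closed form $q^-_{0,\xi}(u) = \|d_\xi u\|^2$ on the same domain $B^1(\R_+)$. Another application of monotone form convergence yields $\nu_n^-(\alpha,\xi) \to \nu_n^-(0,\xi)$, where the right-hand side is by definition the $n$-th eigenvalue of $\mathfrak{h}^-_{0,\xi}$, i.e.\ of $d_\xi^\ad d_\xi$ endowed with the Friedrichs boundary condition $(d_\xi u)(0) = 0$ (cf.\ Remark~\ref{rem.fried}).

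\textbf{Identification at $\alpha = 0$.} Here the argument mirrors the SUSY step used in the proof of Proposition~\ref{prop.zz}. Suppose $u$ is a non-zero eigenfunction of $\mathfrak{h}^-_{0,\xi}$ with eigenvalue $\mu > 0$. Applying $d_\xi$ to $d_\xi^\ad d_\xi u = \mu u$ gives $d_\xi d_\xi^\ad (d_\xi u) = \mu\, d_\xi u$, while the boundary condition at the Friedrichs level yields $(d_\xi u)(0) = 0$. Thus $v := \mu^{-1/2} d_\xi u$ is a Dirichlet eigenfunction of $d_\xi d_\xi^\ad$ at eigenvalue $\mu$. Since $\ker d_\xi \cap L^2(\R_+) = \{0\}$ (the only candidate, proportional to $e^{\xi x_2 + bx_2^2/2}$, is not square-integrable as $b>0$), the map $u \mapsto v$ is injective; a symmetric argument using $d_\xi^\ad$ provides its inverse. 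This bijection preserves multiplicities and orderings, hence the $n$-th eigenvalues coincide. Noting finally that $d_\xi d_\xi^\ad = -\partial_2^2 + (\xi + bx_2)^2 - b = \mathscr{H}_{-\xi}(-b)$ (in the notation where the formula defining $\mathscr{H}_{\xi'}(b')$ is evaluated at the indicated parameters), we deduce $\nu_n^-(0,\xi) = \nu_n^{\mathrm{Dir}}(-b,-\xi)$.

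\textbf{Main obstacle.} The most delicate point is the $\alpha = 0$ identification: one must verify that the Friedrichs extension attached to $q^-_{0,\xi}$ really enforces the natural boundary condition $(d_\xi u)(0) = 0$, and that this condition is exactly what lands $v = d_\xi u$ inside the Dirichlet form domain of the partner operator $d_\xi d_\xi^\ad$. Once this is secured, the two monotone form convergences and the SUSY bijection assemble the statement with no further input.
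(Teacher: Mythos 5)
Your strategy mirrors the paper's: a SUSY (factorization) identification at $\alpha=0$, and a variational argument for $\alpha\to+\infty$. The SUSY step is essentially identical to what the paper does, and your observation that $d_\xi d_\xi^\ad=\mathscr{H}_{-\xi}(-b)$ correctly accounts for the sign conventions. For $\alpha\to 0$ the monotone-form machinery is unnecessary overkill: since $q^-_{\alpha,\xi}$ is an analytic family of type (a) on the \emph{fixed} domain $B^1(\R_+)$, the map $\alpha\mapsto\nu_n^-(\alpha,\xi)$ is already continuous (indeed analytic) at $\alpha=0$, which is exactly what the paper invokes.

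The real issue is your treatment of $\alpha\to+\infty$. You write that, because the operators converge in strong resolvent sense and $\nu_n^-(\alpha,\xi)\le \nu_n^{\mathrm{Dir}}(b,\xi)$, ``one concludes'' that the limit is $\nu_n^{\mathrm{Dir}}(b,\xi)$. This does not follow from what you have written. Strong resolvent convergence guarantees that every point of $\sigma(\mathscr{H}_\xi(b))$ is approximated by eigenvalues of $\mathfrak{h}^-_{\alpha,\xi}$, but it does \emph{not} rule out spectral pollution: the monotone limits $\mu_n:=\lim_{\alpha\to\infty}\nu_n^-(\alpha,\xi)$ could a priori contain extra values not in $\sigma(\mathscr{H}_\xi(b))$, and then the index-matching $\mu_n=\nu_n^{\mathrm{Dir}}(b,\xi)$ would fail. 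So the upper bound plus strong resolvent convergence only give $\mu_n\le\nu_n^{\mathrm{Dir}}(b,\xi)$; the lower bound is genuinely missing. This is precisely what the paper's explicit cutoff construction $\tilde u=u-u(0)\chi$ is there for. Your abstract route can be salvaged, but you must supply one of the following ingredients and state it: (i) a Dini-type argument showing that the monotone decreasing compact resolvents $(\mathfrak{h}^-_{\alpha,\xi}+1)^{-1}$ converge in operator norm (not merely strongly), which then gives eigenvalue convergence; or (ii) a compactness argument: the embedding $B^1(\R_+)\hookrightarrow L^2(\R_+)$ is compact, so the spectral subspaces $E_n(\alpha,\xi)$ are precompact in $L^2$, and lower semicontinuity of each fixed $q^-_{\alpha_0,\xi}$ along with $q^-_{\alpha_0,\xi}\le q^-_{\alpha,\xi}$ yields the lower bound via the min-max principle. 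Either of these fills the gap; as written the step is an assertion, not a proof.
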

		\begin{proof}
			The first equality follows by analyticity.	Then, we have $q^{-}_{0,\xi}(u)=\|d_\xi u\|^2$.  The corresponding operator $\mathfrak{h}^-_{0,\xi}$ has no zero mode. Now, if $\nu$ is a positive eigenvalue, we have
			\[d_\xi^\ad d_\xi u=\nu u\,,\quad d_\xi u(0)=0\,.\]
			Letting $v=d_\xi u$, we get $d_\xi d_\xi^\ad  v=\nu v$ with $v(0)=0$. This shows that $\nu$ belongs to the spectrum of the Dirichlet realization of $d_\xi d_\xi^\ad $. Conversely, if $\nu>0$ is an eigenvalue of $d_\xi d_\xi^\ad $ associated with the eigenfunction $v$, we have
			\[d_\xi^\ad  d_\xi u=\nu u\,,\quad u=d_\xi^\ad v\,,\quad d_\xi u(0)=0\,.\]
			We can check that $u\neq 0$ (unless $v=0$). Thus, $\nu$ belongs to the spectrum of the operator  $\mathfrak{h}^-_{0,\xi}$.
			
			When $\alpha\to+\infty$, we are in a singular regime.
			By using that $H^1_0(\R_+)\cap B^1(\R_+)\subset B^1(\R_+)$ and the min-max principle, we see that 	
			\[\nu_n^{-}(\alpha,\xi)\leq \nu^{\mathrm{Dir}}_n(b,\xi)\,.\]
			Conversely, let us consider 
			\[E_n(\alpha,\xi):=\underset{1\leq k\leq n}{\mathrm{span}}\, u^-_{\alpha,\xi,k}\,.\]
			We notice that, for all $u\in E_n(\alpha,\xi)$,
			\[(\alpha+\xi)u^2(0)\leq q^-_{\alpha,\xi}(u)\leq \nu_n^{-}(\alpha,\xi)\|u\|^2\leq  \nu^{\mathrm{Dir}}_n(b,\xi)\|u\|^2\,,\]
			so that, for $\alpha$ large enough,
			\[u^2(0)\leq\frac{ \nu^{\mathrm{Dir}}_n(b,\xi)}{\alpha+\xi}\|u\|^2=\mathscr{O}(\alpha^{-1})\|u\|^2\,.\]
			Then, we also notice that
			\[\|u'\|^2+\|(\xi+b x_2)u\|^2+b\|u\|^2\leq q^-_{\alpha,\xi}(u)\leq \nu^{\mathrm{Dir}}_n(b,\xi)\|u\|^2\,.\]
			Let us consider a smooth cutoff function $\chi$ with compact support equal to $1$ near $0$. The function
			\[\tilde u(x_2)=u(x_2)-u(0)\chi(x_2)\]
			satisfies the Dirichlet boundary condition. We notice that
			\[(1-C\alpha^{-\frac12})\|u\|\leq\|\tilde u\|\leq (1+C\alpha^{-\frac12})\|u\|\,.\]
			This tells us that, when $u$ runs over $E_n(\alpha,\xi)$, $\tilde u$ also runs over a space of dimension $n$.
			
			In the same way, we get
			\[\|u'\|^2+\|(\xi+b x_2)u\|^2+b\|u\|^2\geq(1-C\alpha^{-\frac12})\left( \|\tilde u'\|^2+\|(\xi+b x_2)\tilde u\|^2+b\|\tilde u\|^2\right)\,.\]
			We deduce that
			\[\|\tilde u'\|^2+\|(\xi+b x_2)\tilde u\|^2+b\|\tilde u\|^2\leq (1+C\alpha^{-\frac12})\nu^-_n(\alpha,\xi)\|\tilde u\|^2\,.\]
			Using the min-max principle, we infer that
			\[\nu^{\mathrm{Dir}}_n(b,\xi)\leq (1+C\alpha^{-\frac12})\nu^-_n(\alpha,\xi)\,,\]
			and the result follows.
		\end{proof}

		
		\begin{lemma}\label{lem.alphato0}
			For $n\ge 1$,  we have	
			\begin{align}
				&\lim_{\alpha\to 0}\nu^+_n(\alpha,\xi)=\nu_n^{+}(0,\xi)=
				\begin{cases}
					0  & n =1 \\
					\nu_{n-1}^{\mathrm{Dir}}(b,\xi)& n \ge 2
				\end{cases}\,,
				\\
				&\lim_{\alpha\to +\infty}\nu^+_n(\alpha,\xi)=\nu_n^{{\mathrm Dir}}(-b,-\xi)\,.\label{eq.112}
			\end{align}
		\end{lemma}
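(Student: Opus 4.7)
\medskip
\noindent\textbf{Proof proposal.}
The plan is to deduce both limits from the analyticity and monotonicity of $\alpha\mapsto \nu^+_n(\alpha,\xi)$ (Lemma~\ref{lem.qinalpha}) plus an identification of the limiting spectra. Monotonicity together with the uniform lower bound $\nu^+_n(\alpha,\xi)\geq 0$ ensures that both one-sided limits exist and coincide with $\nu^+_n(0,\xi)$ and with the supremum over $\alpha>0$, respectively. So the task reduces to identifying the spectrum of $\mathfrak{h}^+_{0,\xi}$ and computing the singular limit $\alpha\to+\infty$.

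For the case $\alpha\to 0$, I would analyze $\mathfrak{h}^+_{0,\xi}=d_\xi d^\ad_\xi$ with the natural boundary condition $d^\ad_\xi u(0)=0$ (see Remark~\ref{rem.fried}). First I would exhibit a zero mode: the equation $d^\ad_\xi u=0$ is solved by $u(x_2)=C\exp\bigl(-\tfrac{1}{2b}(\xi+bx_2)^2\bigr)$, which lies in $L^2(\R_+)$ (since $b>0$) and automatically satisfies the boundary condition. This gives $\nu^+_1(0,\xi)=0$. For positive eigenvalues I would use the intertwining $u\mapsto v:=d^\ad_\xi u$: if $\mathfrak{h}^+_{0,\xi}u=\nu u$ then $d^\ad_\xi d_\xi v=\nu v$ with $v(0)=0$, so $\nu$ is a Dirichlet eigenvalue of $d^\ad_\xi d_\xi=\mathscr{H}_\xi(b)$. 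The inverse direction, $v\mapsto u:=\nu^{-1/2}d_\xi v$, shows that the positive spectrum of $\mathfrak{h}^+_{0,\xi}$ is exactly $\{\nu^{\mathrm{Dir}}_k(b,\xi)\}_{k\geq 1}$. Ordering increasingly gives $\nu^+_n(0,\xi)=\nu^{\mathrm{Dir}}_{n-1}(b,\xi)$ for $n\geq 2$.

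For the case $\alpha\to +\infty$, I would first note that the Dirichlet realization of $d_\xi d^\ad_\xi=-\partial_2^2+(\xi+bx_2)^2-b$ has spectrum $\{\nu^{\mathrm{Dir}}_n(-b,-\xi)\}_{n\geq 1}$ (by the change $b\leftrightarrow -b$, $\xi\leftrightarrow -\xi$ in $\mathscr{H}$), and that on $H^1_0(\R_+)\cap B^1(\R_+)\subset B^1(\R_+)$ the form $q^+_{\alpha,\xi}$ agrees with $\|d^\ad_\xi u\|^2$, so the min-max principle directly gives the upper bound $\nu^+_n(\alpha,\xi)\leq \nu^{\mathrm{Dir}}_n(-b,-\xi)$. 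For the matching lower bound I would mirror the argument used for Lemma~\ref{lem.q-a0}: for a normalized $u$ in the span of the first $n$ eigenfunctions $E_n(\alpha,\xi)$, the inequality $(\alpha-\xi)u^2(0)\leq q^+_{\alpha,\xi}(u)+b\leq \nu^+_n(\alpha,\xi)+b\leq \nu^{\mathrm{Dir}}_n(-b,-\xi)+b$ forces $u^2(0)=\mathscr{O}(\alpha^{-1})$. Then I would cut off the boundary trace by setting $\tilde u(x_2)=u(x_2)-u(0)\chi(x_2)$ with $\chi$ a fixed smooth compactly supported function equal to $1$ at $0$; the map $u\mapsto\tilde u$ is injective for $\alpha$ large (norms differ by $1+\mathscr{O}(\alpha^{-1/2})$), produces a Dirichlet test space of dimension $n$, and satisfies $\|d^\ad_\xi\tilde u\|^2\leq (1+\mathscr{O}(\alpha^{-1/2}))q^+_{\alpha,\xi}(u)$. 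Min-max applied to the Dirichlet realization then yields $\nu^{\mathrm{Dir}}_n(-b,-\xi)\leq(1+\mathscr{O}(\alpha^{-1/2}))\nu^+_n(\alpha,\xi)$, which combined with the upper bound gives \eqref{eq.112}.

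The truly delicate step is the lower bound in the singular limit $\alpha\to+\infty$: one must extract a pointwise $\mathscr{O}(\alpha^{-1})$ decay of $u^2(0)$ on each first-$n$ eigenspace, and control the error introduced by the boundary truncation uniformly on that $n$-dimensional subspace so that the Dirichlet min-max comparison is not spoiled. Everything else is a direct transcription of the ODE/intertwining arguments already used for $\mathfrak{h}^-_{\alpha,\xi}$ in Lemma~\ref{lem.q-a0}.
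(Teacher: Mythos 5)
Your proposal is correct and follows essentially the same route as the paper, which itself defers to the cutoff/min--max machinery of Lemma~\ref{lem.q-a0} for the singular limit $\alpha\to+\infty$ and uses the same $d_\xi\leftrightarrow d_\xi^\ad$ intertwining to identify the spectrum of $\mathfrak{h}^+_{0,\xi}$. You even improve on the text in two small ways: the zero mode of $d_\xi^\ad$ should indeed read $e^{-\frac{1}{2b}(\xi+bx_2)^2}$ (the paper's proof has a typo $e^{-\frac{1}{b}(\xi+bx_2)^2}$ here), and you correctly absorb the extra $-b\|u\|^2$ term when transcribing the $u^2(0)=\mathscr{O}(\alpha^{-1})$ bound from the $q^-$ to the $q^+$ setting.
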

		\begin{proof}
			The proof is similar to that of Lemma \ref{lem.q-a0}. We have $q^+_{0,\xi}(u)=\|d_\xi^\ad  u\|^2$.
			In particular, $0$ is an eigenvalue associated with $x_2\mapsto e^{-\frac{1}{b}(\xi+bx_2)^2}$. So, $\nu^+_1(0,\xi)=0$. Then, let us consider a positive eigenvalue $\nu$. We have
			\[d_\xi d_\xi^\ad  u=\nu u\,,\quad d^\ad_\xi u(0)=0\,.\]
			This implies that
			\[d^\ad_\xi d_\xi v=\nu v \,,\quad \mbox{ with } v=d^\ad_\xi u\neq 0\,,\mbox{ and } v(0)=0\,.\]
			Conversely, if $v$ is an eigenfunction of  the Dirichlet realization of $d^\ad_\xi d_\xi$ with eigenvalue $\nu$, we have
			\[d_\xi d_\xi^\ad  u=\nu u\,,\quad \mbox{ with } u=d_\xi v\,,\quad \mbox{ and }\quad  d_\xi^\ad  u(0)=0\,.\] The argument to show the limit in \eqref{eq.112} follows the same lines as the proof of \eqref{eq.111}.
		\end{proof}
		\subsection{Study of $\xi\mapsto \nu_n^{\pm}(\alpha,\xi)$}
		\begin{lemma}\label{lem.studynupm}
			Let $\nu^{\pm}(\alpha,\xi)$ be an eigenvalue as in \eqref{eq.eve}. Then, 	we have
			\begin{align}
				\label{eq.der}
				\partial_\xi \nu^{\pm}(\alpha,\xi)=\frac1b\left(\nu^\pm(\alpha,\xi)+\alpha^2\mp2\alpha\xi\right)(u_{\alpha,\xi}^\pm(0))^2\,.
			\end{align}
			In addition, if $\xi_\alpha$ is a critical point of $\xi\mapsto\nu(\alpha,\xi)$, we have
			\begin{align}
				\label{eq.dd}
				\partial_\xi^2\nu^\pm(\alpha,\xi_\alpha)=\mp\alpha\frac2b
				(u_{\alpha,\xi}^\pm(0))^2\,.
			\end{align}
			In particular, $\nu^\pm(\alpha,\cdot)$ has at most one critical point. This critical point can only be a local maximum for $\nu^+(\alpha,\cdot)$ and a local minimum for $\nu^-$. 
		\end{lemma}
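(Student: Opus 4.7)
My plan is to combine the Feynman–Hellmann theorem with a Pohozaev-type identity for the underlying one-dimensional Schr\"odinger eigenvalue problem, so as to convert the volume integrals appearing in $\partial_\xi \nu^\pm$ into pure boundary data at $x_2 = 0$.

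For a normalized eigenfunction $u \equiv u^\pm_{\alpha,\xi}$, Feynman–Hellmann together with the explicit expression of $q^\pm_{\alpha,\xi}$ given in the remark preceding this subsection yields
$$\partial_\xi \nu^\pm(\alpha,\xi) \;=\; 2\langle u,(\xi+bx_2)u\rangle \,\mp\, u(0)^2.$$
The main technical step --- and what I expect to be the principal obstacle --- is to rewrite $\langle u,(\xi+bx_2)u\rangle$ in closed form. Using $\mathfrak{h}^+_{\alpha,\xi}=-\partial_2^2+(\xi+bx_2)^2-b$ and $\mathfrak{h}^-_{\alpha,\xi}=-\partial_2^2+(\xi+bx_2)^2+b$, the eigenvalue equation becomes $-u''+(\xi+bx_2)^2 u=(\nu^\pm\pm b)u$. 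Multiplying by $u'$ makes every term an exact derivative; integrating over $(0,+\infty)$ and using the Gaussian decay of eigenfunctions of harmonic-oscillator-type operators at infinity, only the boundary contributions at $0$ survive, which gives the Pohozaev-type identity
$$2b\,\langle u,(\xi+bx_2)u\rangle \;=\; u'(0)^2 + (\nu^\pm\pm b-\xi^2)\,u(0)^2.$$
Combining this with the Robin relation $u'(0)=(\alpha\mp\xi)\,u(0)$ read off from \eqref{eq.dd1}--\eqref{eq.dd2}, substituting into the Feynman–Hellmann formula, and observing that the $b$-terms and $\xi^2$-terms cancel, the identity \eqref{eq.der} drops out.

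For \eqref{eq.dd} I would simply differentiate \eqref{eq.der} once more in $\xi$. The decisive observation is that $u(0)\neq 0$ at any eigenfunction: if $u(0)=0$, the Robin condition forces $u'(0)=0$, and uniqueness for the ODE $-u''+(\xi+bx_2)^2 u = (\nu^\pm \pm b)u$ would give $u\equiv 0$. Therefore, at any critical point $\xi_\alpha$ we have simultaneously $\partial_\xi \nu^\pm(\alpha,\xi_\alpha)=0$ and $\nu^\pm(\alpha,\xi_\alpha)+\alpha^2\mp 2\alpha\xi_\alpha=0$, so the Leibniz expansion of $\partial_\xi^2\nu^\pm$ collapses to the single surviving contribution $\mp \tfrac{2\alpha}{b}u(0)^2$.

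Finally, since $\alpha>0$ and $u(0)\neq 0$, any critical point of $\nu^+(\alpha,\cdot)$ is a strict local maximum and any critical point of $\nu^-(\alpha,\cdot)$ is a strict local minimum. Uniqueness then follows by the standard obstruction argument: two strict local maxima of $\nu^+(\alpha,\cdot)$ would enclose a local minimum between them, which would be another critical point with strictly negative second derivative by \eqref{eq.dd}, a contradiction; the case of $\nu^-$ is symmetric.
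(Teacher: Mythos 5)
Your proof is correct and, at its core, follows the same route as the paper: both reduce $\partial_\xi\nu^\pm$ to $2\langle u,(\xi+bx_2)u\rangle\mp u(0)^2$ (you via Feynman--Hellmann on the quadratic form, the paper by differentiating the eigenvalue equation and using the Green-type identity \eqref{eq.intbyparts}) and then convert $\langle u,(\xi+bx_2)u\rangle$ into boundary data by integration by parts against the eigenvalue ODE, followed by the Robin relation $u'(0)=(\alpha\mp\xi)u(0)$. Your framing of the second step as a Pohozaev identity (multiplying by $u'$) is a tidy re-packaging of the paper's chain of integrations by parts, and your explicit justification that $u(0)\neq 0$ via the uniqueness theorem for the initial value problem is a small but welcome addition that the paper leaves implicit.
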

		\begin{proof}
			We give again the proof only for the $+$ case. We use the notation from the  proof of the previous Lemma~\ref{lem.qinalpha}. (We also replace $x_2$ by $t$ in  the notation.)	
			
			 Observe that since 
			$\partial_\xi d^\ad u=d^\ad \partial_\xi u +u$, we get
			\begin{align}\label{bxi}
				&(d^\ad\partial_\xi u)(0)
				=\alpha(\partial_\xi u)(0)-u(0)\,.
			\end{align}
			By differentiating  \eqref{eq.eve} with respect to $\xi$ 
			we get
			\begin{align*}
				(d d^\ad -\nu)\partial_\xi u=[\partial_\xi \nu-2(\xi+bt)] u\,.
			\end{align*}
			Hence, \eqref{eq.intbyparts} and \eqref{bxi} yield  
			\begin{align}\label{eq.p}
				\partial_\xi \nu=\langle u,2(\xi+bt)  u \rangle-u(0)^2\,.
			\end{align}
			In addition, integrating by parts and using \eqref{eq.eve}, we calculate
			\begin{align*}
				\langle u,2(\xi+bt)  u \rangle&=\frac{1}{b} \int_{\R_+} u(t)^2\partial_t
				(\xi+bt)^2\,{\rm d}t\\
				&= -\frac{\xi^2}{b}u(0)^2-
				\frac{2}{b}\int_{\R_+}  
				u'(t) 
				(\xi+bt)^2u(t)\,{\rm d}t\\
				&=
				-\frac{\xi^2}{b}u(0)^2-
				\frac{2}{b}\int_{\R_+}  
				u'(t) 
				( \nu+b+\partial_t^2)u(t)\,{\rm d}t\\ 
				&=-\frac{\xi^2}{b}u(0)^2-
				\frac{1}{b}\int_{\R_+}  \partial_t
				[((\nu+b)u(t)^2 +u'(t)^2)]
				\,{\rm d}t\,.
			\end{align*}
			Using that $ u'(0)=(\alpha-\xi)u(0)$ 
			we readily obtain \eqref{eq.der}. 
			Hence, if a critical point $\xi_\alpha$ exists, it satisfies
			$\nu^+(\alpha,\xi_\alpha)+\alpha^2-2\alpha\xi_\alpha=0$. Taking the derivative of \eqref{eq.der} with respect to $\xi$ and evaluating at $\xi_\alpha$ we obtain \eqref{eq.dd}.
		\end{proof}
		With the help of the perturbation theory, we get the following (see   \cite[Lemma 4.14]{barbaroux:hal-02889558}).
		\begin{lemma}\label{lem.limnupm}
			We have
			\[\lim_{\xi\to-\infty}\nu^+_n(\alpha,\xi)=(2n-2)b\,,\quad \lim_{\xi\to+\infty}\nu^+_n(\alpha,\xi)=+\infty\,,\]	
			and
			\[\lim_{\xi\to-\infty}\nu^-_n(\alpha,\xi)=2nb\,,\quad \lim_{\xi\to+\infty}\nu^-_n(\alpha,\xi)=+\infty\,.\]	
		\end{lemma}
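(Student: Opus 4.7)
The plan is to handle the four limits separately, combining a variational upper bound built from translated Hermite trial functions with a matching lower bound obtained either from the monotonicity in $\alpha$ or from a direct estimate on the quadratic form. I would use the elementary identity $\nu_n^{\mathrm{Dir}}(-b,-\xi)=\nu_n^{\mathrm{Dir}}(b,\xi)-2b$ (a constant shift of the operator spectrum) together with Remark~\ref{rmk.limits} to describe the asymptotics of the comparison quantities.

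For $\xi\to-\infty$, the translation $(U_\xi u)(y)=u(y-\xi/b)$ turns $\mathfrak{h}^\pm_{\alpha,\xi}$ into $-\partial_y^2+b^2y^2\mp b$ on $L^2([\xi/b,+\infty))$ with a Robin condition at $y=\xi/b\to-\infty$, formally converging to the full-line harmonic oscillator whose $n$-th eigenvalue is $2(n-1)b$ for the $+$ case and $2nb$ for the $-$ case. I would establish the upper bound via min-max using the $n$-dimensional trial space spanned by $\psi_k(x_2)=\phi_{k-1}(x_2+\xi/b)$, $k=1,\ldots,n$, where $\phi_j$ is the $j$-th Hermite eigenfunction of $-\partial_y^2+b^2y^2$; Gaussian decay makes the boundary contribution $(\alpha\mp\xi)\psi_k(0)^2=O(|\xi|e^{-\xi^2/b})$ vanish and the $q^\pm_{\alpha,\xi}$-Gram matrix converges to the expected diagonal. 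For $\nu^+_n$ the matching lower bound comes from the monotonicity in $\alpha$: by Lemma~\ref{lem.alphato0}, $\nu^+_n(\alpha,\xi)\geq\nu^+_n(0,\xi)$ which equals $0$ for $n=1$ and $\nu_{n-1}^{\mathrm{Dir}}(b,\xi)\to 2(n-1)b$ for $n\geq 2$; in the case $n=1$ the sandwich closes because $\nu^+_1(\alpha,\xi)\leq\nu_1^{\mathrm{Dir}}(-b,-\xi)\to 0$. For $\nu^-_n$, the analogous monotonicity sandwich only gives $[2(n-1)b,2nb]$ and is not tight; here I would use an Agmon-type a priori estimate to show that any eigenfunction of $\mathfrak{h}^-_{\alpha,\xi}$ with bounded eigenvalue is exponentially concentrated around the potential minimum $x_2=-\xi/b$, so that its boundary trace satisfies $u(0)^2=O(e^{-c\xi^2/b})$ and the negative boundary contribution $(\alpha+\xi)u(0)^2$ drops out in the limit, reducing the spectrum to that of the full-line oscillator with $n$-th eigenvalue $2nb$.

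For $\xi\to+\infty$ and $\nu^-_n$, the boundary term $(\alpha+\xi)u(0)^2$ is nonnegative and $\|(\xi+bx_2)u\|^2\geq\xi^2\|u\|^2$, whence $\nu^-_n(\alpha,\xi)\geq\xi^2\to+\infty$. For $\nu^+_n$, the monotonicity $\nu^+_1\leq\nu^+_n$ reduces the task to $n=1$. Assume $q^+_{\alpha,\xi}(u)\leq M$ for some unit vector $u$; the identity $2\int_0^\infty(\xi+bx_2)uu'\,\dd x_2=-\xi u(0)^2-b$ combined with Cauchy--Schwarz and $2pq\leq p^2+q^2$ yields $\xi u(0)^2+b\leq\|u'\|^2+\|(\xi+bx_2)u\|^2$, which when inserted into $\|u'\|^2+\|(\xi+bx_2)u\|^2\leq M+b+(\xi-\alpha)u(0)^2$ rearranges to $\alpha u(0)^2\leq M$. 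On the other hand, $\|(\xi+bx_2)u\|^2\geq\xi^2$ forces $u(0)^2\geq(\xi^2-M-b)/(\xi-\alpha)\gtrsim\xi$ for large $\xi$, which is incompatible with $u(0)^2\leq M/\alpha$. Thus no such $u$ exists once $\xi$ is large enough, proving $\nu^+_1(\alpha,\xi)\to+\infty$. The single technically delicate step is the Agmon-type lower bound for $\nu^-_n$ as $\xi\to-\infty$; everywhere else the bounds follow from monotonicity or elementary manipulations of the quadratic form.
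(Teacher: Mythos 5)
The paper does not prove this lemma; it merely refers to perturbation theory and cites \cite[Lemma~4.14]{barbaroux:hal-02889558}, so your self-contained argument is a genuine alternative rather than a reconstruction. Three of the four limits are handled correctly and cleanly by your scheme. For $\xi\to-\infty$ and $\nu^+_n$, the monotonicity in $\alpha$ (Lemma~\ref{lem.qinalpha}) combined with Lemma~\ref{lem.alphato0} and the shift identity $\nu_n^{\mathrm{Dir}}(-b,-\xi)=\nu_n^{\mathrm{Dir}}(b,\xi)-2b$ gives a two-sided sandwich converging to $2(n-1)b$; no Hermite trial functions are even needed there. For $\xi\to+\infty$, the lower bound $\nu^-_n\geq\xi^2$ is immediate from $(\xi+bx_2)^2\geq\xi^2$, and your contradiction argument for $\nu^+_1$ (deriving $\alpha u(0)^2\leq M$ from the boundary identity $2\int(\xi+bx_2)uu'=-\xi u(0)^2-b$, then clashing with $\|(\xi+bx_2)u\|^2\geq\xi^2$) is correct and elementary.

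The delicate point is, as you say, the lower bound for $\nu^-_n(\alpha,\xi)$ as $\xi\to-\infty$, and here your sketch needs to be sharpened before it can be accepted. Monotonicity in $\alpha$ only gives $\nu^-_n(\alpha,\xi)\geq\nu^-_n(0,\xi)=\nu_n^{\mathrm{Dir}}(-b,-\xi)\to 2(n-1)b$, which undershoots by a full Landau gap: for $\alpha=0$ there is a boundary state (with $\nu^-_1(0,\xi)\to 0$) that is \emph{not} concentrated near $x_2=-\xi/b$. This means an Agmon estimate stated only in terms of the potential $(\xi+bx_2)^2+b$ and a bounded eigenvalue, without any reference to $\alpha$, cannot possibly conclude exponential concentration, since its hypotheses are equally satisfied by the $\alpha=0$ boundary state. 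Your argument must therefore invoke $\alpha>0$ in an essential way. The correct mechanism is that in the form $q^-_{\alpha,\xi}(u)=\|d_\xi u\|^2+\alpha u(0)^2$ the Robin term has a strictly positive coefficient; inserting $\alpha e^{2\phi(0)}u(0)^2\leq q^-_{\alpha,\xi}(e^\phi u)=\nu\|e^\phi u\|^2+\int(\phi')^2e^{2\phi}u^2$ into the Agmon identity allows the troublesome boundary term $(\alpha+\xi)e^{2\phi(0)}u(0)^2$ in the expanded form to be absorbed, but at the price of an effective prefactor $|\xi|/\alpha$ in front of $(\nu+(\phi')^2)$; the classically forbidden region then starts at $|x_2+\xi/b|\gtrsim\sqrt{|\xi|/\alpha}$ rather than at a fixed distance, so the Agmon rate degenerates as $\alpha\to 0$. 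The estimate still closes because the boundary sits at distance $|\xi|/b\gg\sqrt{|\xi|}$, but you should make this dependence on $\alpha$ explicit or the argument looks identical to the false $\alpha=0$ statement. Finally, a minor inaccuracy: dropping the (now negligible) boundary term reduces the form to the Neumann half-line problem, not to the full-line oscillator; the Neumann eigenvalues do converge to $2nb$ as $\xi\to-\infty$, but that is a separate (elementary) comparison that should be stated rather than elided.
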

		This allows us to show the following.
		\begin{lemma}
			The function $\nu^+(\alpha,\cdot)$ has no critical points. Moreover, $\nu^-(\alpha,\cdot)$ has a unique critical point, which is a global minimum.
		\end{lemma}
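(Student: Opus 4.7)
The plan is to extract everything from the sign analysis of the derivative identity
\[
\partial_\xi \nu^{\pm}(\alpha,\xi)=\tfrac{1}{b}\bigl(\nu^\pm(\alpha,\xi)+\alpha^2\mp 2\alpha\xi\bigr)\bigl(u^\pm_{\alpha,\xi}(0)\bigr)^2
\]
provided by Lemma~\ref{lem.studynupm}, combined with the limits from Lemma~\ref{lem.limnupm}. I would first record that $u^\pm_{\alpha,\xi}(0)\neq 0$ for every $\alpha>0$ and $\xi\in\R$: if this trace vanished, the Robin conditions in \eqref{eq.dd1}--\eqref{eq.dd2} would force $(d^\ad_\xi u^+)(0)=0$ or $(d_\xi u^-)(0)=0$, and evaluating $d_\xi$, $d^\ad_\xi$ at $x_2=0$ would in turn force $u^{\pm\prime}_{\alpha,\xi}(0)=0$; the Cauchy--Lipschitz theorem for the second-order ODE $d_\xi d^\ad_\xi u=\nu u$ (resp.\ $d^\ad_\xi d_\xi u=\nu u$) would then give $u^\pm\equiv 0$, contradicting that it is an eigenfunction. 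Hence the sign of $\partial_\xi \nu^\pm(\alpha,\xi)$ is entirely determined by the scalar factor $\nu^\pm(\alpha,\xi)+\alpha^2\mp 2\alpha\xi$.

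For the first assertion I would argue by contradiction. Suppose $\nu^+(\alpha,\cdot)$ admitted a critical point $\xi_\alpha$. By Lemma~\ref{lem.studynupm} it is the only one and \eqref{eq.dd} gives $\partial_\xi^2 \nu^+(\alpha,\xi_\alpha)<0$, so it is a strict local maximum. Uniqueness then forces $\partial_\xi \nu^+(\alpha,\cdot)<0$ on $(\xi_\alpha,+\infty)$, which would make $\nu^+(\alpha,\cdot)$ strictly decreasing there. This is incompatible with $\lim_{\xi\to+\infty}\nu^+(\alpha,\xi)=+\infty$ from Lemma~\ref{lem.limnupm}, so no such critical point can exist.

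For the second assertion I would produce a critical point of $\nu^-(\alpha,\cdot)$ by an intermediate-value argument on the scalar factor $g(\xi):=\nu^-(\alpha,\xi)+\alpha^2+2\alpha\xi$. Since $\nu^-\geq 0$, one has $g(\xi)\geq \alpha^2>0$ for every $\xi\geq 0$; on the other hand, Lemma~\ref{lem.limnupm} yields $\nu^-(\alpha,\xi)\to 2nb$ as $\xi\to-\infty$, so $g(\xi)\to-\infty$. By continuity $g$ vanishes at some $\xi_\alpha\in\R$, giving $\partial_\xi\nu^-(\alpha,\xi_\alpha)=0$. Lemma~\ref{lem.studynupm} immediately upgrades this into a unique critical point at which, by \eqref{eq.dd}, the second derivative is strictly positive; hence $\xi_\alpha$ is a strict local minimum. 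Uniqueness of the critical point then yields $\partial_\xi\nu^-<0$ on $(-\infty,\xi_\alpha)$ and $\partial_\xi\nu^->0$ on $(\xi_\alpha,+\infty)$, so $\nu^-(\alpha,\cdot)$ is monotone on each half-line and $\xi_\alpha$ is automatically the global minimum.

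I do not foresee a real obstacle: once the derivative identity of Lemma~\ref{lem.studynupm} and the asymptotics of Lemma~\ref{lem.limnupm} are in hand, the whole statement reduces to a one-variable sign/monotonicity argument. The only genuinely non-trivial input, and the step I would be most careful about, is the strict non-vanishing of the boundary trace $u^\pm_{\alpha,\xi}(0)$, which removes the ambiguity in reading off the sign of $\partial_\xi\nu^\pm$ from the scalar factor.
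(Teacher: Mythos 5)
Your argument is correct and takes essentially the same route as the paper: both deduce everything from the sign of the scalar factor in \eqref{eq.der}, the limits of Lemma~\ref{lem.limnupm}, and the uniqueness of critical points from Lemma~\ref{lem.studynupm} (your intermediate-value step for $\nu^-$ is a direct rephrasing of the paper's contradiction argument). The one thing you add is an explicit justification that $u^\pm_{\alpha,\xi}(0)\neq 0$ via the Robin condition and ODE uniqueness, a point the paper leaves implicit in the strict inequality $\partial_\alpha\nu^\pm>0$ asserted in Lemma~\ref{lem.qinalpha}.
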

		
		\begin{proof}
			Since $\nu^+(\alpha,\cdot)>0$, from \eqref{eq.der} we see that it is increasing on $(-\infty,0)$. If it has a (unique) critical point for some $\xi_\alpha>0$, it must be a non-degenerate global maximum. This contradicts the limit at $\xi=+\infty$, hence $\nu^+(\alpha,\cdot)$ is increasing on $\R$. 
			
			Now assume that  $\nu^-(\alpha,\cdot)$ does not have critical points. From \eqref{eq.der} we must have $\nu^-(\alpha,\xi)+\alpha^2+2\alpha\xi< 0$ for all $\xi$ (since it is the case for $\xi\to-\infty$). But this would imply that $\nu^-(\alpha,\cdot)$ is decreasing on $\R$, which contradicts its limit  at $\xi\to+\infty$.
		\end{proof}
		\section{Proofs for the energy dispersion curves}\label{sec.proofs}
		In this section we start by proving the  characterization described in Theorem~\ref{prop.nupmlambda2}. Next, we apply that result
		to show Theorems~\ref{thm.main} and \ref{prop.limitgamma}. 
		\begin{proof}[Proof of Theorem~\ref{prop.nupmlambda2}]	
			It is enough to deal with the positive eigenvalues of $\mathscr{D}_{\gamma,\xi}(b)$. Indeed, due to the charge conjugation \eqref{eq.symfiber},  $\vartheta_n^-(\gamma,\xi)$ is the $n$-th positive eigenvalue of $\mathscr{D}_{\gamma^{-1},-\xi}(-b)$. Thus, if the characterization \eqref{eq.nu+=lambda2} is established, $\vartheta_n^-(\gamma,\xi)$ is the unique positive solution of  $\nu_n^+(-b,\gamma^{-1}\lambda,-\xi)=\lambda^2$ or equivalently of \eqref{eq.nu-=lambda2} (here we use \eqref{jmb1}).
			
			Let us now prove that \eqref{eq.nu+=lambda2} has exactly one positive solution. Remember that $\gamma>0$. We let
			\[f(\lambda)=	\nu^+_n(\gamma\lambda,\xi)-\lambda^2\,,\]
			and notice that $\lim_{\lambda\to+\infty}f(\lambda)=-\infty$, $f(0)\geq 0$, and $f'(0)>0$ (see \eqref{eq.dalphanu}). Thus \eqref{eq.nu+=lambda2} has at least one positive solution. If $E$ is such a solution, we have $f(E)=0$ and we notice that
			\[f'(E)=\gamma\partial_\alpha \nu_n^+(\gamma E,\xi)-2E=\gamma \left[u_{\gamma E,\xi}(0)\right]^2-2E\,.\]
			To get the sign of $f'(E)$, we consider the polynomial of degree two given by
			\[P(\lambda)=q^+_{\gamma\lambda,\xi}(u_{\gamma E,\xi})-\lambda^2\,.\]
			Because $P(0)\geq 0$, $P(-\infty)=-\infty$, and $P(E)=f(E)=0$ with $E>0$, the polynomial must have two roots of opposite sign.  Thus, $f'(E)=P'(E)<0$.
			This shows that \eqref{eq.nu+=lambda2} has at most one positive solution and thus exactly one, which is denoted by $E_n(\gamma,\xi)$.
			
			In fact, $(E_n)_{n\geq 1}$ is increasing. Indeed,  $$0=\nu_{n+1}^+(\gamma E_{n+1},\xi)-E_{n+1}^2>\nu_{n}^+(\gamma E_{n+1},\xi)-E_{n+1}^2=f(E_{n+1}),$$ which implies that $E_{n}<E_{n+1}$.
			
			 For all $n\geq 1$, due to  Lemma \ref{lem.bij}, $E_n(\gamma,\xi)$ is a positive eigenvalue of $\mathscr{D}_{\gamma,\xi}$. This tells us that
			\[\mathscr{A} : \mathbb{N}^{*}\ni n\mapsto E_n(\gamma,\xi)\in\mathrm{spec}(\mathscr{D}_{\gamma,\xi})\cap\mathbb{R}_+\] 
			is well-defined (and it is injective). 
   
   We now show that the map is surjective. For all $n\geq 1$,  Lemma \ref{lem.bij} implies that $\mathfrak{h}^+_{\gamma\vartheta^+_n(\gamma,\xi),\xi}-(\vartheta^+_n(\gamma,\xi))^2$ has a non-zero kernel. This means that, for some $m\geq 1 $, we have
			\[\nu^+_m(\gamma\vartheta^+_n(\gamma,\xi),\xi)=(\vartheta^+_n(\gamma,\xi))^2\,,\]
			and thus $\vartheta^+_n(\gamma,\xi)=E_m(\gamma,\xi)$. This implies that $\mathscr{A}$ is bijective, hence $E_n(\gamma,\xi)=\vartheta^+_n(\gamma,\xi)$ for all $n\geq 1$. 
		\end{proof}
		
		\subsection{Proof of Theorems \ref{thm.main} and  \ref{prop.limitgamma}}
		In what follows, in order to ease the readability, we drop the reference to the index $n$ in the notation.
		In view of Theorem~\ref{prop.nupmlambda2}
		we have
		\[	\nu^+ (\gamma\vartheta^+,\xi)=(\vartheta^+)^2\]
		and, due to the analyticity and the chain rule, the derivative with respect to $\xi$ gives (with $\alpha\equiv \gamma\vartheta^+)$
		\begin{equation}\label{eq.critheta}
			(\gamma\partial_\alpha \nu^+(\gamma\vartheta^+,\xi)-2\vartheta^+)\partial_\xi\vartheta^++\partial_\xi\nu^+(\gamma\vartheta^+,\xi)= 0\,,
		\end{equation}
		and differentiating with respect to $\gamma$ yields:
		\begin{equation}\label{eq.critheta2}
			(\gamma\partial_\alpha \nu^+(\gamma\vartheta^+,\xi)-2\vartheta^+)\partial_\gamma\vartheta^++\partial_\alpha\nu^+(\gamma\vartheta^+,\xi)\vartheta^+=0\,.
		\end{equation}
		We saw in the proof of Theorem~\ref{prop.nupmlambda2} that $$\gamma\partial_\alpha \nu^+(\gamma\vartheta^+,\xi)-2\vartheta^+<0\,.$$ In the $+$ case, we see that $\vartheta^+(\gamma,\cdot)$ has no critical points and is increasing. We also see that $\gamma\mapsto\vartheta^+(\gamma,\xi)$ is increasing (by using \eqref{eq.dalphanu}), which proves the monotonicity in $\gamma$ of $\vartheta^+$ announced in Theorem \ref{prop.limitgamma}.

		In the $-$ case, by performing the same derivatives on  \eqref{eq.nu-=lambda2}, we have
		\begin{equation}\label{eq.critheta'}
			(\gamma^{-1}\partial_\alpha \nu^-(\gamma^{-1}\vartheta^-,\xi)-2\vartheta^-)\partial_\xi\vartheta^-+\partial_\xi\nu^-(\gamma^{-1}\vartheta^-,\xi)= 0\,,
		\end{equation}
		and
		\begin{equation}\label{eq.critheta2'}
			(\gamma^{-1}\partial_\alpha \nu^-(\gamma^{-1}\vartheta^-,\xi)-2\vartheta^-)\partial_\gamma\vartheta^--\gamma^{-2}\partial_\alpha\nu^-(\gamma^{-1}\vartheta^-,\xi)\vartheta^-=0\,.
		\end{equation}
		We still have $\gamma^{-1}\partial_\alpha \nu^-(\gamma^{-1}\vartheta^-,\xi)-2\vartheta^-<0$. In particular, $\gamma\mapsto\vartheta^-(\gamma,\xi)$ is decreasing.
		
		If $\xi_\gamma$ is a critical point of $\vartheta^-(\gamma,\cdot)$, then we have
		\[\partial_\xi\nu^-(\gamma^{-1}\vartheta^-(\gamma,\xi_\gamma),\xi_\gamma)=0\,.\]
		\begin{remark}
			We recall Lemma \ref{lem.studynupm} and we have
			\[\nu^-(\alpha,\xi_\gamma)+\alpha^2+2\alpha\xi_\gamma=0\,,\quad\mbox{ with } \alpha=\gamma^{-1}\vartheta^-(\gamma,\xi_\gamma)\,.\]
			Hence, $\vartheta^-(\gamma,\xi_\gamma)=-\frac{2\gamma}{\gamma^2+1}\xi_\gamma$ and $\xi_\gamma=\xi_\alpha$ by the uniqueness of the critical point.   
		\end{remark}
		Being a non-degenerate minimum, it is necessary that $\partial^2_\xi\nu^-(\gamma^{-1}\vartheta^-(\gamma,\xi_\gamma),\xi_\gamma)>0$ and, by taking one more derivative in $\xi$ of \eqref{eq.critheta'}, we see that $\partial_\xi^2\vartheta^-(\gamma,\xi_\gamma)>0$. Therefore, all the critical points of $\vartheta^-(\gamma,\cdot)$ are local non-degenerate minima and thus there is at most one such point. If there is no critical point, we have, for all $\xi$,
		\[ \partial_\xi \vartheta^-(\gamma,\xi)=C\, \Big ((1+\gamma^{-2})\vartheta^-(\gamma,\xi)+2\gamma^{-1}\xi\Big )\neq 0\,, \quad C>0.\]
		Let us assume for the moment that \eqref{eq.limitheta-} is true; we will prove that later on. If $\xi$ is sufficiently negative, then the left-hand side of the above expression is negative, so it must remain negative for all $\xi$. This implies that $\vartheta^-(\gamma,\xi)$ must be bounded from above, contradicting  the limit $\xi\to +\infty$ in \eqref{eq.limitheta-}. This ends the analysis of critical points announced in Theorem \ref{thm.main}.

  It remains to explain why \eqref{eq.limitheta-} holds. We only consider the limit $\xi\to-\infty$. We recall Lemma \ref{lem.limnupm}. Let us fix $\varepsilon>0$ and define $\lambda_1=\sqrt{2nb-\varepsilon}$ and $\lambda_2=\sqrt{2nb+\varepsilon}$. Then there exists $\xi(\epsilon)<0$ such that for all $\xi<\xi(\epsilon)$  we have 
  $$\nu^-(\gamma^{-1}\lambda_1,\xi)-\lambda_1^2>0\quad {\rm and}\quad \nu^-(\gamma^{-1}\lambda_2,\xi)-\lambda_2^2<0.$$
  This implies that 
  $$\lambda_1<\vartheta^-(\gamma,\xi)<\lambda_2,\quad \forall \xi<\xi(\epsilon).$$
  The limit $\xi\to+\infty$ can be analyzed similarly (as well as the limits for the $\vartheta^+$). This ends the proof of Theorem \ref{thm.main}.
		
		It remains to discuss the limits in Theorem \ref{prop.limitgamma}. Let us consider  $\vartheta^+_n(\gamma,\xi)$. Take $\varepsilon>0$ and $\lambda=\vartheta_n(0,\xi)-\varepsilon$ (for $n\geq 2$). We have $\nu^+(\gamma\lambda,\xi)-\lambda^2>0$ for $\gamma$ small enough since $\nu^+(0,\xi)=(\vartheta^+(0,\xi))^2$. Thus, $\vartheta^+(0,\xi)-\varepsilon<\vartheta^+(\gamma,\xi)$. In the same way, we get $\vartheta^+(\gamma,\xi)<\vartheta^+(0,\xi)+\varepsilon$. This proves the first limit in Theorem \ref{prop.limitgamma}.

  Next,  we consider the limit $\gamma\to+\infty$. We take $\lambda=\vartheta^+(+\infty,\xi)-\varepsilon$. We have $\nu^+(\gamma\lambda,\xi)-\lambda^2>0$ for $\gamma$ large enough since $\nu^+(+\infty,\xi)=(\vartheta^+(+\infty,\xi))^2$. Thus, $\vartheta^+(+\infty,\xi)-\varepsilon<\vartheta^+(\gamma,\xi)$. We easily get the upper  bound $\vartheta^+(\gamma,\xi)<\vartheta^+(+\infty,\xi)+\varepsilon$. The case of $\vartheta^-(\gamma,\xi)$ is similar.
		\appendix
		\section{Lemma on trace class operators}
		\begin{lemma}\label{tr-cl}Let $\{T_n\}_{n\geq 1}$ be a sequence of trace class operators on some separable Hilbert space, having the property that their trace norms are uniformly bounded, i.e. $\sup_{n\geq 1}\Vert T_n\Vert_{1}= c<\infty$. Assume that $T_n$ converges to $T$ in the operator norm topology. Then $T$ is trace class.  
		\end{lemma}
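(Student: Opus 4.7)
The plan is to reduce trace-classness of $T$ to controlling finite partial sums of its singular values by the uniform bound $c$, using the weak convergence implied by operator norm convergence. First I would observe that each $T_n$ is trace class, hence compact, and the space of compact operators is closed in the operator norm topology; so $T$ is compact. Therefore $T$ admits a singular value decomposition $T=\sum_{k\geq 1} s_k(T)\,|u_k\rangle\langle v_k|$, with orthonormal systems $\{u_k\}$ and $\{v_k\}$ and singular values $s_k(T)\geq 0$. The goal is to show $\sum_k s_k(T)\leq c$.

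Next, for each fixed $N$, I would exploit the identity $s_k(T)=\langle u_k,Tv_k\rangle$ (with phases absorbed into $u_k$) and write
\[\sum_{k=1}^N s_k(T)=\sum_{k=1}^N\langle u_k, Tv_k\rangle=\lim_{n\to\infty}\sum_{k=1}^N\langle u_k, T_n v_k\rangle,\]
where the interchange of limit and finite sum follows from the operator norm convergence $T_n\to T$ (which gives pointwise, in fact uniform on bounded sets, convergence of the matrix elements).

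The main technical ingredient, and the only place requiring a small calculation, is the inequality
\[\Big|\sum_{k=1}^N\langle u_k, T_n v_k\rangle\Big|\leq \|T_n\|_1\]
for any finite orthonormal families $\{u_k\}$, $\{v_k\}$. I would prove this by plugging in the singular value decomposition $T_n=\sum_j \sigma_j^{(n)} |a_j^{(n)}\rangle\langle b_j^{(n)}|$ with $\sum_j \sigma_j^{(n)}=\|T_n\|_1$, and then applying Cauchy--Schwarz together with Bessel's inequality, very much in the spirit of the estimate carried out in the proof of Lemma~\ref{lemma.t-trace} above:
\[\Big|\sum_{k=1}^N\langle u_k, a_j^{(n)}\rangle\langle b_j^{(n)}, v_k\rangle\Big|\leq \Big(\sum_{k=1}^N|\langle u_k, a_j^{(n)}\rangle|^2\Big)^{1/2}\Big(\sum_{k=1}^N|\langle b_j^{(n)}, v_k\rangle|^2\Big)^{1/2}\leq 1.\]
Combining these ingredients gives $\sum_{k=1}^N s_k(T)\leq \sup_n\|T_n\|_1\leq c$ for every $N$, whence $\|T\|_1\leq c<\infty$ by letting $N\to\infty$.

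I do not expect any serious obstacle: the only delicate point is recognising that operator norm convergence is much more than needed (weak convergence suffices for the matrix element limits), and that the trace norm admits the variational characterisation above as a supremum over finite orthonormal pairs. Once that is isolated, the proof is a one-line passage to the limit.
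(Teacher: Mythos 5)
Your proposal is correct and follows essentially the same route as the paper's proof: compactness of $T$ and its SVD, passage to the limit in finite partial sums of matrix elements, and the Cauchy--Schwarz/Bessel estimate via the SVD of $T_n$ to bound those partial sums by $\|T_n\|_1\leq c$. The only cosmetic difference is that you isolate the variational characterisation of the trace norm as a named ingredient, whereas the paper carries out the same estimate inline.
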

		\begin{proof}
			Since the $T_n$'s are compact operators, $T$ is also compact and admits a singular value decomposition (SVD), i.e. there exist two orthonormal systems $\{f_j\}_{j\geq 1}$ and   $\{g_j\}_{j\geq 1}$, together with a set of non-increasing singular values $s_j\geq 0$ such that 
			$$T=\sum_{j\geq 1} \, s_j\, |f_j\rangle \,\langle g_j|.$$
			
			$T$ is trace class if $\sum_{j\geq 1}s_j<\infty$. We will show that for every $N\geq 1$ we have $\sum_{j=1}^Ns_j\leq c$. Let us introduce the SVD of each $T_n$ as 
			$$T_n=\sum_{j\geq 1} \, s_j^{(n)}\, |f_j^{(n)}\rangle \,\langle g_j^{(n)}|,\quad \Vert T_n\Vert_1=\sum_{j\geq 1} s_j^{(n)}\, \leq c.$$
			Then 
			\begin{align*}
				\sum_{k=1}^Ns_k=\lim_{n\to\infty}\sum_{k=1}^N  \langle  f_k,\, T_n\, g_k\rangle =\lim_{n\to\infty} \sum_{k=1}^N\sum_{j\geq 1} \, s_j^{(n)} \langle f_k,f_j^{(n)}\rangle \,\langle g_j^{(n)}, g_k\rangle .
			\end{align*}
			Using Bessel's inequality for the orthonormal systems $f_k$ and $g_k$ we have 
			$$\Big | \sum_{k=1}^N \langle f_k,f_j^{(n)}\rangle \,\langle g_j^{(n)}, g_k\rangle\Big |\leq \sqrt{\sum_k |\langle f_k,f_j^{(n)}\rangle |^2} \, \sqrt{\sum_k |\langle g_k,g_j^{(n)}\rangle |^2}\leq \Vert f_j^{(n)}\Vert \, \Vert g_j^{(n)}\Vert =1$$
			which holds for all $j,n,N\geq 1$. 
			This implies 
			\begin{align*}
				\Big |\sum_{k=1}^N\sum_{j\geq 1} \, s_j^{(n)} \langle f_k,f_j^{(n)}\rangle \,\langle g_j^{(n)}, g_k\rangle \Big |\leq \sum_{j\geq 1} \, s_j^{(n)}=\Vert T_n\Vert_1\leq c,
			\end{align*}
			hence $\sum_{j=1}^\infty s_j<\infty $ and $T$ is trace class.
			
		\end{proof}

	\subsection*{Acknowledgments}
	This work was partially conducted within the France 2030 framework programme, the Centre Henri Lebesgue  ANR-11-LABX-0020-01. The authors thank T. Ourmi{\`e}res-Bonafos for useful discussions. They are also very grateful to the
CIRM (and its staff) where this work was initiated.  E.S.  acknowledge  support from  Fondecyt (ANID, Chile) 
through the  grant \# 123--1539.
This work has been partially supported by CNRS International Research Project Spectral Analysis of
Dirac Operators – SPEDO. H.C. acknowledge support from the Independent Research
Fund Denmark–Natural Sciences, grant DFF–10.46540/2032-00005B.

	\bibliographystyle{abbrv}      
	\bibliography{cinq}

\begin{thebibliography}{10}

\bibitem{Paper2}
J.-M. Barbaroux, H.~Cornean, L.~Le~Treust, N.~Raymond, and E.~Stockmeyer.
\newblock {Bulk-edge correspondence}.
\newblock {\em {In preparation}}, 2024.

\bibitem{barbaroux2021semiclassical}
J.-M. Barbaroux, L.~Le~Treust, N.~Raymond, and E.~Stockmeyer.
\newblock {On the semiclassical spectrum of the Dirichlet--Pauli operator}.
\newblock {\em Journal of the European Mathematical Society},
  23(10):3279--3321, 2021.

\bibitem{barbaroux:hal-02889558}
J.-M. Barbaroux, L.~Le~Treust, N.~Raymond, and E.~Stockmeyer.
\newblock {On the Dirac bag model in strong magnetic fields}.
\newblock {\em {Pure and Applied Analysis}}, 2023.

\bibitem{sa2017}
R.~D. Benguria, S.~, E.~Stockmeyer, and H.~Van Den~Bosch.
\newblock {Self-Adjointness of Two-Dimensional Dirac Operators on Domains}.
\newblock {\em Annales Henri Poincar{\'e}}, pages 1--13, 2017.

\bibitem{Berry1987}
M.~V. Berry and R.~J. Mondragon.
\newblock Neutrino billiards: time-reversal symmetry-breaking without magnetic
  fields.
\newblock {\em Proc. Roy. Soc. London Ser. A}, 412(1842):53--74, 1987.

\bibitem{Cornean_2023}
H.~D. Cornean, M.~Moscolari, and K.~S. S{\o}rensen.
\newblock Bulk{\textendash}edge correspondence for unbounded
  {D}irac{\textendash}{L}andau operators.
\newblock {\em Journal of Mathematical Physics}, 64(2):021902, feb 2023.

\bibitem{cornean2022general}
H.~D. Cornean, M.~Moscolari, and S.~Teufel.
\newblock General bulk-edge correspondence at positive temperature, 2022.

\bibitem{edge-classic}
S.~De~Bièvre and J.~V. Pulé.
\newblock Propagating edge states for a magnetic hamiltonian.
\newblock In {\em Mathematical Physics Electronic Journal: (Print Version)
  Volumes 5 and 6}, pages 39--55. World Scientific, 2002.

\bibitem{DES00}
J.~Dolbeault, M.~J. Esteban, and E.~S\'{e}r\'{e}.
\newblock On the eigenvalues of operators with gaps. {A}pplication to {D}irac
  operators.
\newblock {\em J. Funct. Anal.}, 174(1):208--226, 2000.

\bibitem{EG02}
P.~Elbau and G.~M. Graf.
\newblock Equality of bulk and edge {Hall} conductance revisited.
\newblock {\em Commun. Math. Phys.}, 229(3):415--432, 2002.

\bibitem{graf2021topology}
G.~M. Graf, H.~Jud, and C.~Tauber.
\newblock Topology in shallow-water waves: a violation of bulk-edge
  correspondence.
\newblock {\em Communications in Mathematical Physics}, 383:731--761, 2021.

\bibitem{GS99}
M.~Griesemer and H.~Siedentop.
\newblock A minimax principle for the eigenvalues in spectral gaps.
\newblock {\em J. London Math. Soc. (2)}, 60(2):490--500, 1999.

\bibitem{KELLENDONK2004388}
J.~Kellendonk and H.~Schulz-Baldes.
\newblock Quantization of edge currents for continuous magnetic operators.
\newblock {\em Journal of Functional Analysis}, 209(2):388--413, 2004.

\bibitem{Mong-Phys}
R.~S.~K. Mong and V.~Shivamoggi.
\newblock {Edge states and the bulk-boundary correspondence in Dirac
  Hamiltonians}.
\newblock {\em Phys. Rev. B}, 83:125109, Mar 2011.

\bibitem{ReedSimon1978}
M.~Reed and B.~Simon.
\newblock {\em Methods of modern mathematical physics. {IV}. {A}nalysis of
  operators}.
\newblock Academic Press, New York, 1978.

\bibitem{schimmer2020friedrichs}
L.~Schimmer, J.~P. Solovej, and S.~Tokus.
\newblock Friedrichs extension and min--max principle for operators with a gap.
\newblock 21(2):327--357, 2020.

\bibitem{S95}
K.~M. Schmidt.
\newblock A remark on boundary value problems for the {D}irac operator.
\newblock {\em Quart. J. Math. Oxford Ser. (2)}, 46(184):509--516, 1995.

\bibitem{PhysRevResearch.2.013147}
C.~Tauber, P.~Delplace, and A.~Venaille.
\newblock Anomalous bulk-edge correspondence in continuous media.
\newblock {\em Phys. Rev. Res.}, 2:013147, Feb 2020.

\end{thebibliography}

\end{document}